\theoremstyle{plain}
\newtheorem{theorem}{Theorem}[section]
\newtheorem{lemma}[theorem]{Lemma}
\newtheorem{proposition}[theorem]{Proposition}
\newtheorem{corollary}[theorem]{Corollary}
\newtheorem{conjecture}[theorem]{Conjecture}
\theoremstyle{definition}
\newtheorem{definition}[theorem]{Definition}
\newtheorem{example}[theorem]{Example}
\theoremstyle{remark}
\newtheorem{remark}[theorem]{Remark}
\newcommand{\B}[1]{\mathbb #1}
\newcommand{\C}[1]{\mathcal #1}
\newcommand{\F}[1]{\mathfrak #1}
\newcommand{\BF}[1]{\mathbf #1}
\newcommand{\ind}{\mbox{$\perp \kern-5.5pt \perp$}}
\DeclareMathOperator{\im}{Im}
\DeclareMathOperator{\rank}{rank}
\DeclareMathOperator{\Aut}{Aut}
\DeclareMathOperator{\Trop}{Trop}
\title{Dimensions of Level-1 Group-Based Phylogenetic Networks}
\author{Elizabeth Gross$^1$}
\address{$^1$Department of Mathematics, University of Hawai`i at M\={a}noa, Hawai`i, USA}
\email{egross@hawaii.edu}
\author{Robert Krone$^2$}
\address{$^2$Department of Mathematics, UC Davis, California, USA}
\author{Samuel Martin$^3$}
\address{$^3$Earlham Institute, Norwich Research Park, Norwich, UK}
\email{samuel.martin@earlham.ac.uk}
\date{\today}
\begin{document}

\begin{abstract}
Phylogenetic networks represent evolutionary histories of sets of taxa where horizontal evolution or hybridization has occurred. Placing a Markov model of evolution on a phylogenetic network gives a model that is particularly amenable to algebraic study by representing it as an algebraic variety. In this paper, we give a formula for the dimension of the variety corresponding to a triangle-free level-1 phylogenetic network under a group-based evolutionary model. On our way to this, we give a dimension formula for codimension zero toric fiber products. We conclude by illustrating applications to identifiability.

\end{abstract}

\maketitle

\section{Introduction}
In evolutionary biology, phylogenetic networks are graphs used to represent the evolutionary history of a set of taxa or species. These graphs are usually paired with a statistical model where the graph is a combinatorial parameter of the model.  In this work, we focus on \emph{network-based Markov models}.  In particular, fixing a directed graph $\mathcal N$ with $n$ leaves, i.e. a network, the associated network-based Markov model is the image of a polynomial parameterization in the space of probability distributions over the sample space, which commonly in applications is $\{A, G, C, T\}^n$ where $A, G, C, T$ are the four-nucleic bases.  

We are interested in the geometry of network-based Markov models, in particular, their dimensions.  Such work is along the lines of \cite{sturmfels2005toric},\cite{eriksson2005phylogenetic},\cite{allman2007phylogenetic}, \cite{allman2008phylogenetic},\cite{casanellas2008geometry}, \cite{zwiernik2009geometry}, \cite{casanellas2011relevant}, \cite{michalek2011geometry}, \cite{casanellas2017local}, \cite{michalek2019phylogenetic}, and \cite{casanellas2021distance},  which study the geometry of tree-based Markov models. Indeed, by moving to $\mathbb C$ and taking Zariski closures, images of the parameterization maps correspond to algebraic varieties whose study can aid in model selection (see \cite{pachter2005algebraic}, \cite{drton2008lectures}, and \cite{sullivant2018algebraic} for discussions).  
Popular constraints on the parameter space, such as Jukes-Cantor (JC), Kimura 2-parameter (K2P), and Kimura 3-parameter (K3P) constraints, give rise to a class of models referred to as \emph{group-based models}. Assuming group-based constraints, the varieties associated to tree-based Markov models are toric varieties after a transformation of coordinates \cite{sturmfels2005toric}. The dimensions of tree varieties can be understood using tools from toric geometry.  While under this same transformation, group-based network varieties have a lower dimensional toric action on them, and thus are $T$-varieties (see \cite[Remark 4.1]{CHM}), these varieties are generally less well understood.  \emph{In this paper, we expand our understanding of these varieties by giving a formula for the dimension for all level-1 triangle-free group-based network varieties.}


 As described in Section \ref{sec:prelim}, a group-based model of evolution is defined with a finite abelian group $G$ and a subgroup $B$ of the automorphism group of $G$, denoted $\Aut(G)$. In a network-based Markov model, each edge of the network has a transition matrix associated to it, representing the probabilities of each type of nucleotide (usually $A$, $C$, $G$ or $T$) mutating to another over an evolutionary time interval.  The parameters of the model are the entries of these transition matrices along with a mixing parameter for each cycle.  In a group-based model, the dimension of the parameter space is cut significantly by placing constraints on the transition matrices.  In particular, each nucleotide is identified with an element of $G$, and the transition probability of a mutation from $a$ to $b$ depends only on $b - a$, reducing the number of free parameters in each matrix to $|G| - 1$.  The parameter space is reduced further by identifying the parameters for all elements of $G$ that are in the same $B$-orbit.  If $l+1$ is the number of $B$-orbits in $G,$ the number of free parameters for each edge is then $l$.
 
 For a phylogenetic network $\C N$ with $m$ edges and $c$ cycles, the {\em expected dimension} of the group-based  network variety $\dim V_{\C N}^M$ is $l(m-c)+1$, and Proposition \ref{prop:upper} shows that it is indeed an upper bound.  The main theorem of this paper shows that most level-1 group-based network varieties have the expected dimension.

 \begin{theorem} \label{thm:main} 
 Let $\C N$ be a level-1 triangle-free phylogenetic network with $n$ leaves, $m$ edges, and $c$ cycles. Let $G$ be a finite abelian group of order at least $3$ and $B$ a subgroup of $\Aut(G)$.  Let $l+1$ be the number of $B$-orbits in $G$. Then the group-based network variety $V^{(G,B)}_{\C N}$ has dimension $l(m-c)+1$.
 \end{theorem}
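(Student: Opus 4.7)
Since the matching upper bound is established in Proposition \ref{prop:upper}, it suffices to prove the lower bound $\dim V_{\mathcal N}^{(G,B)} \geq l(m-c)+1$. The plan is to induct on the number of blocks of $\mathcal N$, using the codimension-zero toric fiber product dimension formula alluded to in the abstract, and then to handle the only nontrivial base case --- a single cycle of length $k \geq 4$ --- by a direct analysis of the Fourier parameterization.

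\smallskip
\noindent\textbf{Reduction via cut-edge gluings.}
If $\mathcal N$ has a cut edge $e$, I would split $\mathcal N$ along $e$ into two smaller level-1 triangle-free networks $\mathcal N_1, \mathcal N_2$, each inheriting $e$ as a pendant leaf edge. After passing to Fourier coordinates (in which tree parameterizations become monomial), the parameterization of $V_{\mathcal N}^{(G,B)}$ factors through those of $V_{\mathcal N_1}^{(G,B)}$ and $V_{\mathcal N_2}^{(G,B)}$ over the parameterization of the group-based variety of the single edge $e$; this exhibits $V_{\mathcal N}^{(G,B)}$ as a toric fiber product of $V_{\mathcal N_1}^{(G,B)}$ and $V_{\mathcal N_2}^{(G,B)}$ over the edge-variety. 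Verifying that this gluing is codimension zero --- essentially the statement that the Fourier-image of a single edge is a full-dimensional monomial variety --- allows one to apply the TFP dimension formula to get additivity of dimensions, corrected for the contribution of $e$. A simple edge-and-cycle bookkeeping then yields the expected value by induction, so the problem reduces to the base cases.

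\smallskip
\noindent\textbf{Base cases.}
The two base cases are a tree (no cycles) and a single cycle. The tree case is classical \cite{sturmfels2005toric}: the group-based tree variety with $m$ edges has dimension $lm + 1$, matching $l(m-0)+1$. For a single $k$-cycle with $k \geq 4$, fix a reticulation vertex whose two incoming cycle edges are $e_1, e_2$ and let $T_i$ denote the tree obtained by deleting $e_i$. The Fourier-transformed parameterization of $V_{\mathcal N}^{(G,B)}$ then takes the form
\[
\widehat{p}(\chi) \;=\; \pi\, \widehat{p}_{T_1}(\chi)\;+\;(1-\pi)\, \widehat{p}_{T_2}(\chi),
\]
with $lk + 1$ scalar parameters (the mixing $\pi$ plus $l$ Fourier coordinates per edge), so the claim is equivalent to showing that the generic fiber of this map is exactly $l$-dimensional.

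\smallskip
\noindent\textbf{Main obstacle.}
The technical heart of the proof is this cycle base case. My strategy is to exhibit an explicit $l$-parameter gauge action on the parameter space that preserves the image --- for each non-trivial $B$-orbit of characters, one should be able to jointly rescale $\widehat M_{e_1}(\chi)$ and $\widehat M_{e_2}(\chi)$ with a compensating change in $\pi$ --- and then close the argument by a Jacobian rank computation at a generic point to show that this gauge accounts for the entire kernel of the differential. The hypothesis $|G|\geq 3$ should enter to guarantee enough non-trivial Fourier coordinates for the differential to separate the remaining parameters, and the triangle-free assumption $k \geq 4$ should enter to ensure that the two parameterization monomials in $\widehat{p}$ share enough common edges that the gauge is consistently defined across the cycle and that the degenerate cancellations which occur for $3$-cycles do not further collapse the fiber. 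Once the cycle base case is in hand, the inductive step follows mechanically from the toric fiber product machinery, completing the proof.
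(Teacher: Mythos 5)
Your overall architecture matches the paper's: the upper bound from Proposition~\ref{prop:upper}, an induction along cut edges powered by a codimension-zero toric fiber product dimension formula (the paper's Theorem~\ref{thm:tfpdim} and Corollary~\ref{cor:decomp}), and a reduction to trees and single-cycle (sunlet) base cases. But the technical heart of the theorem --- the lower bound $\dim V^{(G,B)}_{\C N} \geq l(2k-1)+1$ for a $k$-sunlet --- is not actually proved in your proposal; it is deferred to ``a Jacobian rank computation at a generic point,'' which is precisely the hard content. Note also that your gauge-action argument only produces a lower bound on the generic fiber dimension, hence an \emph{upper} bound on the image dimension, which is already supplied by Proposition~\ref{prop:upper}; it contributes nothing toward the direction you still need. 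The paper's substitute for the generic Jacobian computation is tropical: by Draisma's lemma (Lemma~\ref{lemma:draisma}) it suffices to exhibit a single weight vector $\lambda$ for which the matrix $A_\lambda$ of selected exponent vectors has rank $l(2k-1)+1$, and Proposition~\ref{prop:4SN} does this explicitly --- a carefully chosen $\lambda$ forces the $\C T_1$ monomial on the leaf labellings with $g_1=0$ (recovering a tree submatrix of known rank $l(2k-5)+1$ via Lemma~\ref{lemma:treedim}) and then row/column operations isolate a block $B$ whose rank is shown to be $\geq 4l$ by explicit linear algebra in the spans $X_1, X_2$. Nothing in your proposal supplies this computation or an equivalent, so there is a genuine gap exactly where the theorem's difficulty lies. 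The extension from $B=\{\mathrm{id}\}$ to arbitrary $B$ (the paper's Proposition~\ref{prop:reduction} and Corollary~\ref{cor:non-general}) is likewise asserted only implicitly.

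Two smaller points. First, your parameter bookkeeping for the cycle base case is off: a $k$-sunlet has $2k$ edges ($k$ cycle edges plus $k$ pendant edges), so the reduced parameter count is $2kl+1$, not $lk+1$; also, in the Fourier parameterization used in the paper (equation~\eqref{eqn:param}) there is no explicit mixing parameter $\pi$ --- the sum is unweighted, with the mixing absorbed into the reticulation edge parameters. Second, your justification of the codimension-zero hypothesis for the TFP step (``the Fourier-image of a single edge is full-dimensional'') is not the condition actually required: what must be verified is that each factor ideal $I_{\C N_\pm}$ admits a maximally independent set of variables modulo a suitable initial ideal containing at least one variable of every multidegree in $\C A$; the paper reduces this to the $3$-claw tree in the proof of Corollary~\ref{cor:decomp}. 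This is fixable, but as written it is not a proof of the hypothesis you need.
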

 
 When $G=\mathbb{Z}/2\mathbb{Z}$, the $3$ and $4$-sunlet networks do not have the expected dimension. In this case, since $\Aut(G)$ is the trivial group, there is only a single group-based model. This is the Cavender-Farris-Neyman (CFN) model, and has biological relevance, so we give the result for this group separately. Note that here we are able to give a full result for level-1 phylogenetic networks.
 
 \begin{theorem}\label{thm:zmod2z}
 Let $G=\mathbb{Z}/2\mathbb{Z}$ and let $\C N$ be a level-1 phylogenetic network with $n$ leaves, $m$ edges, $c_{\geq 5}$ cycles of length at least $5$, $c_4$ 4-cycles, and $c_3$ 3-cycles. Then the group-based network variety $V^{G}_{\C N}$ has dimension $m- (c_{\geq 5} + 2c_4 + 3 c_3) +1$.
 \end{theorem}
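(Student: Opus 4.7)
The plan is to reduce the dimension computation to that of individual cycle blocks via the toric fiber product machinery developed earlier in the paper, then verify the dimension for each block type separately. A level-1 phylogenetic network decomposes along its cut edges into blocks, each either a single tree edge or a sunlet (a cycle with pendant edges). By Proposition \ref{prop:upper}, the quantity $m-c+1$ is an a priori upper bound on $\dim V^G_{\C N}$; under the CFN model, however, short cycles force additional dependencies, and the goal is to quantify them exactly.

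First, I would analyze a single $k$-sunlet under CFN and show that its variety has dimension $m_k$ (the expected value) when $k \geq 5$, dimension $m_k - 1$ when $k = 4$, and dimension $m_k - 2$ when $k = 3$, where $m_k$ is the number of edges. For $k \geq 5$ the argument from the proof of Theorem \ref{thm:main} should apply essentially verbatim with $l = 1$, since the key inputs there are triangle-freeness and a sufficiently large sunlet. For $k \in \{3, 4\}$ I would compute directly in Fourier coordinates: the CFN parameterization of a $k$-sunlet factors through a monomial map, and one can write down its exponent matrix explicitly and compute its rank. The short-cycle cases produce extra integer relations in the exponent lattice that disappear once $k \geq 5$.

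Second, I would combine these sunlet dimensions with the tree-block dimensions using the codimension-zero toric fiber product dimension formula. Gluing two blocks along a cut edge merges a single common Fourier parameter, so the dimensions add with a correction of $-1$ per cut edge. Telescoping this bookkeeping over the block decomposition of $\C N$ yields a total of $m + 1$, minus $1$ for each cycle of length at least $5$, minus $2$ for each $4$-cycle, and minus $3$ for each $3$-cycle, which is exactly the claimed formula.

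The main obstacle will be the two base-case dimension computations for the $3$-sunlet and $4$-sunlet. One must write out the Fourier exponent matrix explicitly, identify the exact integer dependencies among its columns, and verify that these dependencies drop the dimension by exactly $2$ and $1$, respectively, and no more. Once these two cases are secured, the remainder of the argument is combinatorial bookkeeping via toric fiber products, parallel in structure to the triangle-free argument used for Theorem \ref{thm:main}.
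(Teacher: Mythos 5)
Your overall strategy is the paper's: decompose along cut edges via toric fiber products, reduce to trees and sunlets, and supply the three sunlet dimensions ($2n$ for $n\ge 5$, $7$ for $n=4$, $4$ for $n=3$). However, three steps as you describe them would fail. First, the bookkeeping constant is wrong: Corollary \ref{cor:decomp} gives $\dim V_{\C N}=\dim V_{\C N_+}+\dim V_{\C N_-}-|B\cdot G|$, and for CFN $|B\cdot G|=2$, so each cut costs $-2$, not $-1$. Since the cut edge is duplicated in both pieces ($m_++m_-=m+1$), a correction of $-1$ per cut leaves your telescoped total too large by (number of cuts); e.g.\ gluing a cherry onto a $5$-sunlet gives $10+4-2=12$, not $13$. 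The $-2$ accounts for both the duplicated edge parameter and the duplicated global scaling. With that fixed, the telescoping does close to $m+1-(c_{\geq 5}+2c_4+3c_3)$.

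Second, the claim that the $n\ge 5$ case follows ``essentially verbatim'' from the triangle-free argument with $l=1$ is not right. The lower-bound argument of Proposition \ref{prop:4SN} uses $|G|>2$ essentially: the manipulations that extract $E_1^g\in X_2$ (e.g.\ forming $lE_1^g-E_1^{-g}$ and playing $g$ against $-g$) collapse when every element satisfies $g=-g$, and indeed the $4$-sunlet fails to have the expected dimension under CFN. The paper needs a separate tropical weight (Proposition \ref{prop:5SN}) supported on a cycle edge not adjacent to either reticulation edge, which is exactly why $n\ge 5$ is required; your proof must do something analogous rather than cite the $|G|>2$ argument. Third, the $3$- and $4$-sunlet varieties under CFN are not toric: the parameterization is a sum of two monomial maps, so there is no ``exponent matrix'' whose rank gives the dimension, and Proposition \ref{prop:upper} only yields the upper bound $m+1$, which is not tight here. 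What is actually needed (and what the paper does) is a direct computation --- a generic Jacobian rank or elimination --- certifying dimensions exactly $7$ and $4$; your proposal should say that rather than appeal to integer relations in an exponent lattice.
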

 
 Our main tool for proving these theorems is the toric fiber product. This is an operation on ideals that was first introduced in \cite{TFP} and generalises the Segre product. One of the first applications was to phylogenetic trees under group-based models, where the ideals of the model are toric fiber products, and the operation corresponds to the graph operation of cutting a tree at an internal edge. To some extent this remains true for phylogenetic networks and allows us to focus our attention on a family of phylogenetic networks called sunlet networks. In Section \ref{sec:TPP} we give a general dimension formula for toric fiber products (Theorem \ref{thm:tfpdim}) and apply this to phylogenetic trees and networks.

%
%
 \section{Preliminaries}\label{sec:prelim}
 
In this section, we lay out the background needed for the paper.  In particular, we review group-based models of sequence evolution where the combinatorial parameters are phylogenetic networks, as well as two tools that underlie the proof of our main theorems: tropical geometry for dimension analysis and toric fiber products. The main objects of biological relevance in this paper are phylogenetic networks, and, thus, that is where we begin.

\subsection{Phylogenetic networks}

The following network notation and terminology is adapted from \cite{francis2018new} \cite{francis2015phylogenetic}, and \cite{semple2016phylogenetic}.
 \begin{definition}
 A  \emph{(binary rooted) phylogenetic network} $\C N$ on a set $X$ is a rooted, acyclic, directed graph with no parallel edges that satisfies:
 \begin{itemize}
 \item The root vertex has outdegree 2.
 \item All vertices of outdegree 0 have indegree 1. These vertices are called \emph{leaves} and are labelled by $X$.
 \item All other vertices have either indegree 1 and outdegree 2 (called \emph{tree vertices}), or indegree 2 and outdegree 1 (called \emph{reticulation vertices}).  The incoming edges of a reticulation vertex are called \emph{reticulation edges}.
 \end{itemize}
 \end{definition}
 
A \emph{level-1} phylogenetic network is a phylogenetic network where each cycle in the underlying undirected graph contains exactly one reticulation vertex. A \emph{semi-directed network} is a mixed graph obtained from a phylogenetic network by suppressing the root node and undirecting all tree edges while the reticulation edges remain directed.  In a semi-directed network, the reticulation vertices are the vertices of indegree two and level-1 is defined the same as for a rooted phylogenetic network. A triangle-free level-1 semi-directed network is a level-1 semi-directed network where every cycle in the unrooted skeleton has length greater than three. 
For our work, it will be helpful to reduce the number of edges in a semi-directed network that we consider. To this end we introduce \emph{contracted semi-directed networks}. A contracted semi-directed network is a mixed graph obtained from a semi-directed network by contracting the non-reticulation edge of each reticulation vertex (see for example, Figure \ref{fig:example}). Note that since level-1 networks are tree-child networks, in a contracted level-1 semi-directed network, two distinct reticulation vertices are never identified, and thus each non leaf-adjacent reticulation vertex has indegree 2 and outdegree 2, and each leaf-adjacent reticulation vertex has indegree 2 and outdegree 0.   Furthermore, the level-1 condition in a contracted level-1 semi-directed network means that at least one of the outgoing edges of a reticulation vertex is a non-reticulation edge.

Finally, a $n$-\emph{sunlet network} is the semi-directed network topology with $n$ leaves and a single cycle of length $n$, where each vertex in the cycle is adjacent to a leaf vertex and one vertex in the cycle is a reticulation vertex. The $4$-sunlet network is depicted in Figure \ref{fig:example}.  Since an arbitrary level-1 network can be decomposed into a collection of trees and sunlet networks, sunlet networks will play a key role in our study.
 
 
\subsection{Group-Based Models of Evolution}
Fix an abelian group $G$ and a subgroup $B \subset \text{Aut}(G)$. Denote by $B\cdot G$ the set of $B$-orbits in $G$ and let $|B\cdot G| = l+1$. For a phylogenetic tree or network $\C N$, such a choice of $G$ and $B$ defines a model of evolution on $\C N$. From this model one can derive an algebraic variety, which we will denote $V_{\C N}^{(G,B)}$. These varieties are our primary objects of study.

 First, let us set up the notation and preliminaries for phylogenetic trees, i.e. phylogenetic networks with no reticulation vertices. For more details on group-based models on trees, see \cite[Section 15.3]{sullivant2018algebraic} and \cite{sturmfels2005toric}.  Let $\C T$ be an $n$-leaf phylogenetic tree, with vertex set, edge set, and leaf set denoted by $\C V(\C T)$, $\C E(\C T)$, and $\C L(\C T)$ respectively.  Let $m = | \C E(\C T)|$ be the number of edges in $\C T$. A \emph{consistent leaf G-labelling} of $\C T$ is a function $\xi: \C L(\C T) \longrightarrow G$ that satisfies
 $$\sum_{v\in\C L(\C T)} \xi(v) = 0.$$
 Note that the set of consistent leaf $G$-labellings depends only on $n$, and not on the edges of $\mathcal T$, so all $n$-leaf phylogenetic trees share the same set of consistent leaf $G$-labellings, which has size $|G|^{n-1}$. When $G$ is clear, we will call $\xi$ a consistent leaf labelling.
 
For a phylogenetic tree $\mathcal T$, each edge $e \in \C E(\C T)$ is oriented away from the root vertex.  Let $\C L(e) \subset \C L(\C T)$ be the set of leaves on the arrow side of $e$. A consistent leaf labelling $\xi$ of $\C T$ induces a consistent edge labelling of $\C T$ (also denoted $\xi$), which is a map $\xi:\C E(\C T) \longrightarrow G$ given by 
 
 $$\xi(e) = \sum_{v \in \C L(e)} \xi(v).$$
 
To each edge $e$ in a phylogenetic tree or network we associate $l+1$ parameters, denoted $a_e^g$, where $g$ is a representative of the $B$-orbit $[g]$. For a tree $\C T$ with $n$ leaves, the parameterization in Fourier coordinates (see \cite{sturmfels2005toric}) of the group-based model on $\C T$ is

\begin{equation}
q_{g_1g_2\cdots g_n} = \prod_{e\in \C E(\C T)} a_e^{\xi(e)},
\end{equation}
where $\xi$ is given by the consistent leaf labelling $g_1, \ldots, g_n$. Index the standard basis of $\mathbb{C}^{m(l+1)}$ with upper indices $g$ for some representatives of the orbits in $B\cdot G$, and lower indices by the edges $e \in \C E(\C T)$, and index the standard basis of $\mathbb{C}^{|G|^{n-1}}$ by the consistent leaf-labellings. The parameterization map is the map

$$ \phi_{\C T}: \mathbb{C}^{m(l+1)} \to \mathbb{C}^{|G|^{n-1}}$$
where
$$(\phi_{\C T}(w))_{g_1\cdots g_n} =  \prod_{e\in \C E(\C T)} w_e^{\xi(e)},$$
for $w \in\mathbb{C}^{m(l+1)}$ and consistent leaf labellings $\xi$ with leaf labels $g_1,\ldots, g_n$. The Zariski closure of the image of this map is called the \emph{phylogenetic variety of $\C T$ and $(G, B)$} and is denoted by $V^{(G,B)}_{\C T}$. 

Now, denote by $R$ the $\mathbb{C}$-algebra $\mathbb{C}[ q_{g_1 \cdots g_n}\ |\ g_1 +\cdots + g_n = 0]$ and by $S_{\C T}$ the $\mathbb{C}$-algebra $\mathbb{C}[a^g_e\ |\ [g] \in B\cdot G,\, e \in \C E(\C T)]$. The parameterization map $\phi_{\C T}$ is a morphism of affine varieties, with comorphism given by the $\mathbb{C}$-algebra homomorphism $ \psi_{\C T}: R \to S_{\C T}$ which acts on generators as
 \begin{equation*}
\psi_{\C T}( q_{g_1\cdots g_n}) = \prod_{e\in \C E(\C T)} a_e^{\xi(e)}.
 \end{equation*}
 It follows that the vanishing ideal of $V^{(G,B)}_{\C T}$, denoted $I_{\C T}^{(G,B)}$, is the kernel of $\psi_{\C T}$.
 
 We now move from trees to networks. Let ${\C N}$ be a level-1 phylogenetic network with $n$ leaves, $m$ edges, and $k$ reticulation vertices. Since $\C N$ is a phylogenetic network, if we remove one of the two reticulation edges for each reticulation vertex, we obtain a phylogenetic tree. We encode a choice of reticulation edge for each reticulation vertex with a vector $\sigma \in \{0,1\}^k$, and denote the resulting $n$-leaf phylogenetic tree by $\C T_\sigma$. Then the parameterization of our group based model on $\C N$ is the map $ \phi_{\C N}: \mathbb{C}^{m(l+1)} \to \mathbb{C}^{|G|^{n-1}}$ given by
  
 \begin{equation}
\phi_{\C N} = \sum_{\sigma\in\{0,1\}^k}\phi_{\C T_\sigma}
\end{equation} 
 As above, we call the Zariski closure of the image of this map the \emph{phylogenetic variety of $\C N$ and $(G, B)$}, and denote it $V^{(G,B)}_{\C N}$. The vanishing ideal $I_{\C N}^{(G,B)}$ of $V^{(G,B)}_{\C N}$ is the kernel of the $\mathbb{C}$-algebra homomorphism $\psi_{\C N}$ given by
 
  \begin{equation}\label{eqn:param}
 \begin{aligned} 
 \psi_{\C N}: R &\to S_{\C N}\\
 q_{g_1\cdots g_n}&\mapsto \sum_{\sigma\in\{0,1\}^k}\psi_{\C T_\sigma}( q_{g_1\cdots g_n}).
  \end{aligned}
 \end{equation}
where $S_{\C N} = \mathbb{C}[a_e^g\ |\ [g] \in B\cdot G,\, e \in \C E(\C N)]$ and we identify $S_{\C T_\sigma}$ as a subalgebra of $S_{\C N}$ in the obvious way.
 
When $B = \{\rm{id}\}$, we call the probabilistic model associated to $(G, B)$ the \emph{general group-based model} for the group $G$ and denote the corresponding variety as $V_{\C N}^G =V_{\C N}^{(G, B)}$. The K3P model is the general group-based model for the Klein-4 group, and the CFN model is the general group-based model for the group $\mathbb{Z}/2\mathbb{Z}$.  The pairs $(G, B)$ corresponding to JC and K2P are $(\B Z/2\B Z \times \B Z/2\B Z , \F S_3)$ and $(\B Z/2\B Z \times \B Z/2\B Z , \F S_2)$, respectively, where $\Aut(\B Z/2\B Z \times \B Z/2\B Z )$ is identified with the permutation group 
$\F S_3$.

In this paper, we are concerned with $\dim V_{\C N}^{(G, B)}$ for a general $\C N$ and $(G,B)$.  In previous work, it is shown that under the CFN, JC, K2P, and K3P models, two phylogenetic network varieties are the same if the two networks have the same underlying semi-directed network \cite{GL} \cite{gross2021distinguishing}. Here, we extend these results to all group-based models.  This allows us to focus our attention on semi-directed networks. 
 
 \begin{lemma}\label{lemma:semidirectedmodel}
 Let $G$ be a finite abelian group and let $B$ be a subgroup of $\Aut(G)$. If $\C N_1$ and $\C N_2$ are two phylogenetic networks with the same underlying semi-directed network , then $V_{\C N_1}^{(G,B)} = V_{\C N_2}^{(G,B)}$, where equality here means equality as sets.
 \end{lemma}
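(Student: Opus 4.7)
The plan is to show that moving the root of a phylogenetic network does not change the image of its parameterization map, since two phylogenetic networks $\C N_1, \C N_2$ sharing a semi-directed network $\bar{\C N}$ differ only in which tree edge of $\bar{\C N}$ the root subdivides and in the orientations that the rooting induces on the remaining tree edges. I will exhibit a reindexing of the Fourier parameter spaces that carries $\phi_{\C N_1}$ to $\phi_{\C N_2}$, so that their images (and hence their Zariski closures) coincide.

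\emph{Step 1 (merging the root edges).} Let $e_1, e_2$ be the two edges outgoing from the root of $\C N$ and let $e^* \in \C E(\bar{\C N})$ be the single edge they become under suppression of the root. In each tree $\C T_\sigma$, the leaves reachable from the head of $e_1$ and those reachable from the head of $e_2$ partition $\C L(\C N)$, so $\xi(e_1) + \xi(e_2) = 0$ for every consistent leaf labelling $\xi$. Setting $b_{e^*}^g := a_{e_1}^g \cdot a_{e_2}^{-g}$ for each orbit $[g] \in B \cdot G$, we obtain $a_{e_1}^{\xi(e_1)} a_{e_2}^{\xi(e_2)} = b_{e^*}^{\xi(e_1)}$ in every $\C T_\sigma$. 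Because the substitution $(a_{e_1}, a_{e_2}) \mapsto b_{e^*}$ is surjective onto $\mathbb{C}^{l+1}$, the parameterization $\phi_{\C N}$ has the same image (and hence the same Zariski closure) as a ``merged'' parameterization $\phi'_{\C N}$ whose parameters are indexed by the edges of $\bar{\C N}$.

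\emph{Step 2 (orientation reversal by reindexing).} The merged parameterizations $\phi'_{\C N_1}$ and $\phi'_{\C N_2}$ both live on $\bar{\C N}$, but each tree edge carries an orientation inherited from the rooting of the corresponding $\C N_i$, and these orientations may disagree. Reversing the orientation of a tree edge $e$ swaps its arrow-side leaf set for the complementary set in every $\C T_\sigma$, replacing $\xi(e)$ with $-\xi(e)$. Since $B$ acts on $G$ by group automorphisms, $[g] \mapsto [-g]$ is a well-defined bijection on $B \cdot G$, and the substitution $b_e^g \mapsto b_e^{-g}$ is a bijection of $\mathbb{C}^{l+1}$. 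Precomposing $\phi'_{\C N_1}$ with this substitution on every tree edge whose orientations in $\C N_1$ and $\C N_2$ disagree transforms $\phi'_{\C N_1}$ into $\phi'_{\C N_2}$. Hence the two maps have the same image, giving $V_{\C N_1}^{(G,B)} = V_{\C N_2}^{(G,B)}$.

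The main obstacle is verifying that Step 1 survives the presence of reticulations: although the arrow-side leaf sets of $e_1$ and $e_2$ inside $\C T_\sigma$ genuinely depend on $\sigma$, the crucial identity $\xi(e_1) + \xi(e_2) = 0$ still holds in every $\C T_\sigma$ because $e_1$ and $e_2$ remain the two outgoing edges of the root there, so their arrow-side leaf sets continue to partition $\C L(\C N)$. A smaller point in Step 2 is that after reversing only some tree edges the resulting orientation pattern on $\bar{\C N}$ need not correspond to any rooting of a phylogenetic network; this is harmless because the merged parameterization depends only on an abstract choice of arrow side for each tree edge, independent of whether that choice comes from a rooting.
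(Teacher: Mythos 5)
Your proof is correct and follows essentially the same route as the paper's: the paper likewise reduces to (i) suppressing a degree-2 vertex, which it handles via the same substitution $a_{e_1}^{g}a_{e_2}^{\pm g}$ packaged as two image inclusions rather than a single surjectivity claim, and (ii) reversing a non-reticulation edge via the reindexing $a_e^{g}\mapsto a_e^{-g}$. Both of the concerns you flag at the end (that $\xi(e_1)+\xi(e_2)=0$ persists in every $\C T_\sigma$, and that intermediate orientations need not come from a rooting) are addressed in the paper in the same way, the latter by explicitly relaxing the definition of a phylogenetic network.
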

 \begin{proof}
Since phylogenetic networks have no parallel edges, it is clear that two phylogenetic networks $\C N_1$ and $\C N_2$ have the same semi-directed network if and only if their corresponding unrooted networks differ only by the directions of their non-reticulation edges. Therefore it is sufficient to show that suppressing vertices of degree 2 and changing the orientation of a single non-reticulation edge do not affect the model.

First notice that if we take a phylogenetic network $\mathcal N_1$ and reorient any collection of the non-reticulation edges to form a new network $\mathcal N_2$ (not necessarily a phylogenetic network), the maps described above are still well-defined and so too are the corresponding varieties. Thus, for this proof, we will relax the definition of phylogenetic networks to include such networks, as it will allow us to consider redirecting one edge at a time. Let $\C N_1$ and $\C N_2$ be two phylogenetic networks that are equal except for the direction of a single non-reticulation edge $e$, and let $\xi_1$ and $\xi_2$ be consistent leaf labellings on $\C N_1$ and $\C N_2$ respectively, with the property that $\xi_1(v) = \xi_2(v)$ for each leaf $v$ in the skeleton of $\C N_1$ and $\C N_2$. Then it is clear that $\xi_1(e) = -\xi_2(e)$. This means that $\phi_{\C N_2} = \phi_{\C N_1} \circ \theta$, where $\theta$ is the automorphism of $\mathbb{C}^{m(l+1)}$ given by swapping the coefficients corresponding to $a_e^{g}$ and $a_e^{-g}$ whenever $g \neq -g$. Since $\theta$ is bijective, composing $\theta$ with $\phi_{\C N_1}$ does not affect the image of $\phi_{\C N_1}$, so it follows that $V_{\C N_1}^{(G,B)} = V_{\C N_2}^{(G,B)}$.

Next let $\C N$ be a phylogenetic network with a vertex $v$ of order 2 that has incident edges $e_1$ and $e_2$. Let $\xi$ be a consistent leaf labelling of $\C N$ with  $\xi(e_1) = g$ so that either $\xi(e_2) = g$ or $\xi(e_2) = -g$. Let us suppose that $\xi(e_2) = g$ and note that the proof in the other case is similar. Let $\C N'$ be the phylogenetic network got from $\C N$ by suppressing $v$. Denote the new edge of $\C N'$ by $e'$ and, without loss of generality, give $e'$ the same orientation as $e_1$ so that $\xi(e') = g$. Let $\theta: \mathbb{C}^{(m-1)(l+1)} \to \mathbb{C}^{m(l+1)}$ be the map from the parameter space of $\C N'$ to the parameter space of $\C N$ that is constant on all parameters for edges shared by $\C N$ and $\C N'$, takes all parameters for edge $e'$ to the corresponding parameter for edge $e_1$, and sets all parameters corresponding to edge $e_2$ to $1$. Then it is clear that we have
$$ \phi_{\C N'} = \phi_{\C N} \circ\theta,$$
and thus $\im{\phi_{\C N'}} \subseteq \im{\phi_{\C N}}$. On the other hand, consider $\phi_{\C N} (w)$ for $w \in \mathbb{C}^{m(l+1)}$. Taking $u\in\mathbb{C}^{(m-1)(l+1)}$ such that $u_{e'}^g = w_{e_1}^{g} w_{e_2}^{g}$ for all $[g]\in B\cdot G$ and $u_d^g = w_d^g$ for all edges $d \neq e'$ and all $[g]\in B\cdot G$, we see that $\phi_{\C N'} (u) = \phi_{\C N}(w)$. It follows that $\im{\phi_{\C N'}} = \im{\phi_{\C N}}$ and thus $V_{\C N'}^{(G,B)} = V_{\C N}^{(G,B)}$.

 \end{proof}
 
 
 Note that since the orientation of the non-reticulation edges does not affect the variety, we may choose any orientation for non-reticulation edges, even if this is not consistent with any placement of a root vertex. Thus when considering a phylogenetic network $\C N$, we may take the corresponding semi-directed network and arbitrarily assign orientations to each non-reticulation edge to obtain a parameterization of the model.

 \begin{example}
 Let $G$ be a finite abelian group and $B = \{\rm{id}\}$. Let $\C N$ be a $4$-sunlet network with leaf labellings, edge orientations, and edge labellings as in Figure \ref{fig:example}. The map $\psi_{\C N}$ is given by
 
 $$ q_{g_1 g_2 g_3 g_4} \longmapsto a_1^{g_1}a_2^{g_2}a_3^{g_3}a_4^{g_4}a_5^{g_1}a_6^{g_1 + g_2}a_7^{g_4} + a_1^{g_1}a_2^{g_2}a_3^{g_3}a_4^{g_4}a_6^{g_2}a_7^{g_1 + g_4}a_8^{g_1} ,$$
where to simplify notation we write $a_i^g$ for $a_{e_i}^g$. Here, the first monomial corresponds to the tree obtained by removing the edge $e_8$, and the second monomial corresponds to the tree obtained by removing the edge $e_5$.
 \end{example}
 
 \begin{figure}[h!]
 \centering
 \resizebox{1.0\textwidth}{!}{
\begin{tikzpicture}
[every node/.style={inner sep=0pt},
                    every path/.style={thick},   
decoration={markings, 
    mark= at position 0.5 with {\arrowreversed{stealth}}
    }
]
                    
\draw [postaction={decorate}] (0,2) -- (2,2) node[midway,below=4] {$e_{4}$};
\draw [postaction={decorate}] (2,2)  -- (4 ,4)  node[midway,above=8,left=0.5] {$e_{7}$};
\draw [postaction={decorate}] (8,2) -- (6,2) node[midway,below=4] {$e_{2}$};
\draw [postaction={decorate}] (6,2)  -- (4 , 4)  node[midway,above=8,right=0.5] {$e_{6}$};
\draw [postaction={decorate}] (4,5.5)  -- (4 , 4)  node[midway,above=8,right=1] {$e_{3}$};
\draw [postaction={decorate}] (4,-1.5) -- (4,0) node[midway,below=0,right=2] {$e_1$};
\draw[dashed,postaction={decorate}] (4,0)  -- (2,2) node[midway,below=4,left=2] {$e_{8}$};
\draw[dashed,postaction={decorate}] (4,0)  -- (6,2) node[midway,below=4,right=2] {$e_5$};

\node (v3) at (4,-1.85) {$1$};
\node (v3) at (8.35,2) {$2$};
\node (v3) at (4,5.85) {$3$};
\node (v3) at (-0.35,2) {$4$};

\draw [postaction={decorate}] (10,2) -- (12,2) node[midway,below=4] {$e_{4}$};
\draw [postaction={decorate}] (12,2)  -- (14 ,4)  node[midway,above=8,left=0.5] {$e_{7}$};
\draw [postaction={decorate}] (18,2) -- (16,2) node[midway,below=4] {$e_{2}$};
\draw [postaction={decorate}] (16,2)  -- (14, 4)  node[midway,above=8,right=0.5] {$e_{6}$};
\draw [postaction={decorate}] (14,5.5)  -- (14 , 4)  node[midway,above=8,right=1] {$e_{3}$};
\draw[dashed,postaction={decorate}] (14,0)  -- (12,2) node[midway,below=4,left=2] {$e_{8}$};
\draw[dashed,postaction={decorate}] (14,0)  -- (16,2) node[midway,below=4,right=2] {$e_5$};

\node (v3) at (14,-0.35) {$1$};
\node (v3) at (18.35,2) {$2$};
\node (v3) at (14,5.85) {$3$};
\node (v3) at (9.65,2) {$4$};

\end{tikzpicture}
}
\caption{A leaf-labelled, directed $4$-sunlet network (left), and its corresponding contracted network (right).}
\label{fig:example}
\end{figure}
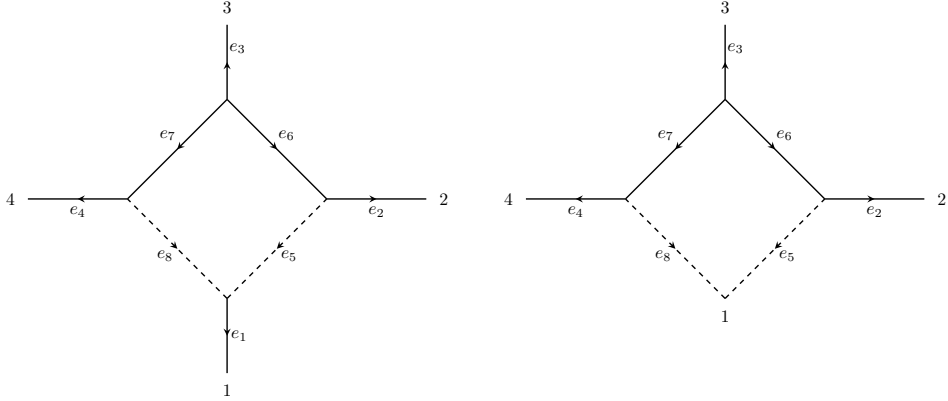

To end this section, we show that sunlet networks and contracted sunlet networks have the same corresponding varieties.
 
 \begin{lemma}\label{lemma:contractedNetwork}
Let $G$ be a finite abelian group and $B$ a subgroup of $\text{Aut}(G)$.  Let $\C N$ be a sunlet network and let $\C N'$ be its contraction. Then $V_{\C N}^{(G,B)} = V_{\C N'}^{(G,B)}$. 
 \end{lemma}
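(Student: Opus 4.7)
The plan is to prove $\im \phi_{\C N} = \im \phi_{\C N'}$ as subsets of $\mathbb{C}^{|G|^{n-1}}$; the equality $V_{\C N}^{(G,B)} = V_{\C N'}^{(G,B)}$ then follows by taking Zariski closures. Let $e$ denote the unique non-reticulation edge incident to the reticulation vertex $r$ of the sunlet $\C N$, let $v$ be the leaf at its other end, and let $f_1, f_2$ be the two reticulation edges at $r$; by Lemma \ref{lemma:semidirectedmodel} we may orient $e$ however is convenient. The two trees $\C T_{\sigma_1}, \C T_{\sigma_2}$ of $\C N$ are obtained by deleting $f_1$ or $f_2$, and contracting $e$ in each yields the trees $\C T'_{\sigma_1}, \C T'_{\sigma_2}$ for $\C N'$, with all consistent edge labels preserved on the shared edges. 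The key observation is that for every consistent leaf labelling $(g_1,\ldots,g_n)$ the label $\xi(e)$ equals $\epsilon\, g_v$ for a fixed sign $\epsilon \in \{\pm 1\}$ (depending only on the orientation of $e$), so $a_e^{\xi(e)}$ is a common factor of both monomials in $\psi_{\C N}(q_{g_1\cdots g_n})$.

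For the easy inclusion $\im \phi_{\C N'} \subseteq \im \phi_{\C N}$, given $u$ in the parameter space of $\C N'$, set $w_{e'} = u_{e'}$ for every edge $e' \neq e$ and $w_e^g = 1$ for every $g$; a direct computation gives $\phi_{\C N}(w) = \phi_{\C N'}(u)$. For the reverse inclusion, let $\epsilon_i \in \{\pm 1\}$ be the sign determined by $\xi(f_i) = \epsilon_i\, g_v$ in the tree $\C T_{\sigma_i}$ where $f_i$ survives, and given $w$ in the parameter space of $\C N$ define
\[
u_{f_i}^g = w_{f_i}^g\, w_e^{\epsilon \epsilon_i g} \quad (i = 1, 2), \qquad u_{e'}^g = w_{e'}^g \text{ for every edge } e' \notin \{e, f_1, f_2\}.
\]
Since each tree $\C T'_{\sigma_i}$ contains exactly one of $f_1, f_2$, the surviving reticulation-edge factor becomes $u_{f_i}^{\epsilon_i g_v} = w_{f_i}^{\epsilon_i g_v}\, w_e^{\epsilon g_v}$ (using $\epsilon_i^2 = 1$), which exactly reproduces the $a_e^{\xi(e)} a_{f_i}^{\xi(f_i)}$ contribution in the corresponding tree $\C T_{\sigma_i}$ of $\C N$. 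Summing over $i$ yields $\phi_{\C N'}(u) = \phi_{\C N}(w)$.

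Conceptually the argument is just that $e$ contributes a common factor to every summand of $\psi_{\C N}$, and this factor can be redistributed among the two reticulation edges $f_1, f_2$, each of which appears in exactly one of the two trees. The only real obstacle is carefully tracking the signs $\epsilon, \epsilon_1, \epsilon_2$ arising from edge orientations; once those are fixed, both inclusions reduce to direct monomial comparisons. Note that this sunlet-specific argument does not extend verbatim to general networks, since in a multi-reticulation network a non-reticulation edge adjacent to a reticulation vertex need not contribute a common factor across all the trees $\C T_\sigma$.
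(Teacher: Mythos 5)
Your proof is correct and follows essentially the same route as the paper's: both inclusions of images are established by explicit reparameterization, setting the contracted edge's parameters to $1$ for the easy direction and absorbing them into the two reticulation edges for the reverse. The only difference is that you track orientation signs $\epsilon,\epsilon_1,\epsilon_2$ explicitly, whereas the paper fixes a convenient orientation (justified by Lemma~\ref{lemma:semidirectedmodel}) under which $e_1$ and both reticulation edges carry the same label.
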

 \begin{proof}
Let $\C N$ be an $n$ -sunlet, so that $\C N$ has $2n$ edges, denoted $e_1$,\ldots,$e_{2n}$, with $e_1$ the leaf edge adjacent to the reticulation vertex, and $e_{n+1}, e_{2n}$ the two reticulation edges, as in Figure \ref{fig:example}. Let $\phi_{\C N}$ denote the parameterization map for $\C N$, and let $\phi_{\C N'}$ denote the parameterization map for its contraction $\C N'$. As before, index the parameter spaces $\mathbb{C}^{2n(l+1)}$ and $\mathbb{C}^{(2n-1)(l+1)}$ by the $B$-orbits in $G$ and the edges of $\C N$ and $\C N'$ respectively.
 
 It is clear that $\phi_{\C N'} = \phi_{\C N} \circ \iota$, where $\iota:\mathbb{C}^{(2n-1)(l+1)} \rightarrow\mathbb{C}^{2n(l+1)}$ is given by $\iota(w)_e^g = w_e^g$ for all $g\in B\cdot G$ and $e \in \C E(\C N')$ (i.e. $e \neq e_1$), and $\iota(w)_{e_1}^g = 1$ for all $g \in B\cdot G$.  It follows that $\im \phi_{\C N'} \subseteq \im \phi_{\C N}$.
 
 On the other hand, let $y=\phi_{\C N}(w)$. Let $u\in\mathbb{C}^{(2n-1)(l+1)} $ be given by $u_e^g = w_e^g$ for $e\neq e_{n+1}, e_{2n}$, and $u_{e_{n+1}}^g = w_{e_1}^g w_{e_{n+1}}^g$, and $u_{e_{2n}}^g = w_{e_1}^g w_{e_{2n}}^g$ for all $g \in B\cdot G$. It follows that $\phi_{\C N'}(u) = y = \phi_{\C N}(w)$, so $\im \phi_{\C N} \subseteq \im \phi_{\C N'}$.
 \end{proof}
 
 In fact, by absorbing the parameters associated to the contracted edge of each reticulation vertex into the corresponding parameters of both reticulation edges, the above lemma can be extended for any level-1 phylogenetic network.
 
  \begin{lemma}\label{lemma:contractedLevelOneNetwork}
Let $G$ be a finite abelian group and $B$ a subgroup of $\text{Aut}(G)$.  Let $\C N$ be a level-1 semi-directed and let $\C N'$ be its contraction. Then $V_{\C N}^{(G,B)} = V_{\C N'}^{(G,B)}$. \qed
 \end{lemma}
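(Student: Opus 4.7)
The proof is the sunlet argument of Lemma \ref{lemma:contractedNetwork} performed simultaneously at every reticulation vertex. Enumerate the reticulation vertices $v_1,\ldots,v_k$ of $\C N$, and for each $i$ let $f_i$ denote the unique non-reticulation edge incident to $v_i$ (the edge contracted in forming $\C N'$) and let $e_i^{(1)}, e_i^{(2)}$ denote the two reticulation edges into $v_i$. Because the network is level-1 and has no parallel edges, the $f_i$ are pairwise distinct and disjoint from all reticulation edges, giving a canonical bijection $\C E(\C N') \leftrightarrow \C E(\C N)\setminus\{f_1,\ldots,f_k\}$.

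For the inclusion $\im\phi_{\C N'}\subseteq\im\phi_{\C N}$, I define $\iota:\mathbb{C}^{|\C E(\C N')|(l+1)}\to\mathbb{C}^{|\C E(\C N)|(l+1)}$ by $\iota(w)_e^g = w_e^g$ for $e\in\C E(\C N')$ and $\iota(w)_{f_i}^g = 1$ for all $i$ and all $[g]\in B\cdot G$. Since every tree $\C T_\sigma$ extracted from $\C N$ contains all $f_i$, setting each $a_{f_i}^g=1$ in $\phi_{\C N}\circ\iota$ reduces the monomial for $\C T_\sigma$ to that of the corresponding $\C T'_\sigma$ in $\C N'$, so $\phi_{\C N'} = \phi_{\C N}\circ\iota$.

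For the reverse inclusion, given $w\in\mathbb{C}^{|\C E(\C N)|(l+1)}$ I would set
\[
u_{e_i^{(j)}}^g := w_{e_i^{(j)}}^g\, w_{f_i}^g \ \ (j\in\{1,2\}), \qquad u_e^g := w_e^g \ \text{for all other edges } e,
\]
for every $[g]\in B\cdot G$, and verify $\phi_{\C N'}(u) = \phi_{\C N}(w)$ summand by summand over $\sigma\in\{0,1\}^k$. The key identity driving the calculation is that in $\C T_\sigma$, after deletion of $e_i^{(1-\sigma_i)}$, the vertex $v_i$ becomes degree-2 with $f_i$ and $e_i^{(\sigma_i)}$ as its only incident edges. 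Consequently $\C L(f_i) = \C L(e_i^{(\sigma_i)})$, so every consistent edge labelling satisfies $\xi(f_i) = \xi(e_i^{(\sigma_i)})$. The paired factor $w_{f_i}^{\xi(f_i)}\, w_{e_i^{(\sigma_i)}}^{\xi(e_i^{(\sigma_i)})}$ in $\phi_{\C T_\sigma}(w)$ then collapses to $u_{e_i^{(\sigma_i)}}^{\xi(e_i^{(\sigma_i)})}$, which is exactly the factor contributed by $e_i^{(\sigma_i)}$ in $\phi_{\C T'_\sigma}(u)$; taking the product over $i$ and summing over $\sigma$ gives the desired equality.

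The hard part is purely bookkeeping: one must check that the absorption is consistent across all $\sigma$ simultaneously (immediate from the distinctness of the $f_i$) and that the argument survives when $v_i$ is leaf-adjacent, so that contraction identifies $v_i$ with a leaf. Both points, together with the orientation-independence established by Lemma \ref{lemma:semidirectedmodel}, reduce the proof to the identical chain of equalities used in Lemma \ref{lemma:contractedNetwork}.
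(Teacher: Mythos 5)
Your proof is correct and follows exactly the route the paper intends: the paper leaves this lemma as an extension of Lemma \ref{lemma:contractedNetwork}, stating only that one should ``absorb the parameters associated to the contracted edge of each reticulation vertex into the corresponding parameters of both reticulation edges,'' which is precisely the map $u_{e_i^{(j)}}^g = w_{e_i^{(j)}}^g w_{f_i}^g$ you construct. You have simply supplied the bookkeeping (distinctness of the $f_i$ via the tree-child/level-1 property, the identity $\xi(f_i)=\xi(e_i^{(\sigma_i)})$ in each $\C T_\sigma$, and the leaf-adjacent case) that the paper omits.
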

 
\subsection{Tropical Geometry}\label{subsec:tropical}

In Section \ref{sec:sunlet} we give a lower bound on the dimension of the sunlet varieties by using the tropical geometry results of \cite{Draisma}. Here we will present the result tailored to our needs.

Let $C$ be a Zariski-closed cone in a complex vector space $V$ of dimension $n$, and let $W$ be a complex vector space of dimension $m$ such that we have a  polynomial map $f:W\to V$ mapping $W$ dominantly to $C$.  A classic result is that the rank of the Jacobian matrix of $f$ at any point in $x \in W$ gives a lower bound on $\dim(C)$ (and equality holds when $x$ is generic).  We similarly obtain a bound from the tropicalization of $f$.

Fix bases of $V$ and $W$ so that we may write $f = (f_b)_{b= 1}^{n}$, where $f_b \in \mathbb{C}[x_1, \ldots, x_m]$ for $b = 1,\ldots n$. Write 
 $$ f_b = \sum_{\alpha \in M_b} c_\alpha x^\alpha$$
for the finite subset $M_b\subset\mathbb{Z}_{\ge 0}^m$ consisting of those $\alpha$ for which $c_\alpha \neq 0$.  The tropicalization of $f_b$ is defined as the piece-wise linear function $\Trop(f_b): \mathbb{R}^m\to\mathbb{R}$ given by
 
 $$ \Trop(f_b)(\lambda) := \min_{\alpha \in M_b} \langle \lambda, \alpha \rangle,$$
for $\lambda\in \mathbb{R}^m$.  Then $\Trop(f): \mathbb{R}^m \to \mathbb{R}^n$ is defined as $(\Trop(f_b))_{b = 1}^n$.

We will not define here the tropical variety $\Trop(C) \subseteq \mathbb{R}^n$, but we note two relevant facts (see e.g. \cite{maclagan2021introduction}):
\begin{itemize}
    \item $\Trop(C)$ is a polyhedral complex with dimension bounded by $\dim(C)$,
    \item and $\im(\Trop(f)) \subseteq \Trop(C)$.
\end{itemize} Therefore the Jacobian of $\Trop(f)$ at a point $\lambda \in \mathbb{R}^m$ where the map is differentiable gives a lower bound on $\dim(C)$ (although it is no longer true that equality necessarily holds when $\lambda$ is generic).

Fix $\lambda$ such that $\Trop(f)$ is differentiable at $\lambda$, meaning that $\Trop(f)$ is linear in an open neighborhood $U$ of $\lambda$.  Specifically $\Trop(f_b)(\mu) = \langle \mu, \alpha^\prime_b\rangle$ for all $\mu \in U$ where $\alpha^\prime_b$ is the unique vector in $M_b$ that minimizes $\langle \lambda, \alpha^\prime_b\rangle$.  Then $\Trop(f)(\mu) = A_\lambda^T \mu$ where $A_\lambda$ is the $m \times n$ matrix with columns $\alpha^\prime_1, \ldots, \alpha^\prime_n$.  (Note that $A_\lambda^T$ is also the Jacobian matrix of $\Trop(f)$ at $\lambda$.)  The lemma below follows.

 \begin{lemma}\cite[Corollary~2.3]{Draisma}\label{lemma:draisma}
 Let the notation be as above. Then
 
 $$\dim C \geq \max_{\lambda\in\mathbb{R}^m} \rank_{\mathbb{R}}A_\lambda.$$ 
 \end{lemma}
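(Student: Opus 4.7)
The plan is to combine the two tropical facts quoted just above the lemma statement—namely that $\dim \Trop(C) \leq \dim C$ and that $\im(\Trop(f)) \subseteq \Trop(C)$—with a local linearity argument at $\lambda$. Fix $\lambda \in \mathbb{R}^m$ at which $\Trop(f)$ is differentiable, so by the discussion immediately preceding the lemma there is an open neighborhood $U$ of $\lambda$ in $\mathbb{R}^m$ on which $\Trop(f)$ coincides with the $\mathbb{R}$-linear map $\mu \mapsto A_\lambda^T \mu$. In particular $\Trop(f)(U)$ is the image of an open subset of $\mathbb{R}^m$ under a linear map of rank $\rank_\mathbb{R} A_\lambda$, hence it is an open subset of a real linear subspace of $\mathbb{R}^n$ of dimension exactly $\rank_\mathbb{R} A_\lambda$.

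Next, I would chain the inclusions $\Trop(f)(U) \subseteq \im \Trop(f) \subseteq \Trop(C)$. The first set contains an open piece of a $\rank_\mathbb{R} A_\lambda$-dimensional affine subspace, so any polyhedral complex containing it must include polyhedra of dimension at least $\rank_\mathbb{R} A_\lambda$. Thus $\dim \Trop(C) \geq \rank_\mathbb{R} A_\lambda$. Combining with the quoted bound $\dim \Trop(C) \leq \dim C$ gives $\dim C \geq \rank_\mathbb{R} A_\lambda$, and taking the maximum over all differentiability points $\lambda$ yields the lemma.

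There is essentially no obstacle here beyond bookkeeping, since the substantive content—that $\im \Trop(f) \subseteq \Trop(C)$ and that $\Trop(C)$ has dimension bounded by $\dim C$—is being imported from \cite{maclagan2021introduction} and \cite{Draisma}. The only point worth checking is that the set of $\lambda$ at which $\Trop(f)$ is differentiable is nonempty (in fact dense, since its complement is the union of the finitely many lower-dimensional polyhedra on which two distinct exponents $\alpha \in M_b$ tie for the minimum of $\langle \lambda, \alpha\rangle$), so that the maximum on the right-hand side is taken over a nonempty set and the inequality is meaningful.
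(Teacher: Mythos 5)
Your argument is correct and is essentially the paper's own: the paper derives the lemma from the same two quoted facts ($\dim\Trop(C)\le\dim C$ and $\im(\Trop(f))\subseteq\Trop(C)$) together with the local linearity of $\Trop(f)$ at a differentiability point, exactly as you do. Your additional remark that the differentiability locus is nonempty (indeed dense) is a worthwhile clarification but does not change the route.
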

 
 For our purposes, $f$ will be given by the polynomial parameterization map $\phi_{\C N}$. Since the variety $V^{(G,B)}_{\C N}$ is equal to the Zariski closure of $\phi_{\C N}(\mathbb{C}^{|G|^{n-1}})$, and since each polynomial in the parameterization is homogeneous, $V^{(G,B)}_{\C N}$ is a closed cone.

 \subsection{Toric Fiber Products}
 The toric fiber product is an algebraic operation that takes two homogeneous ideals with compatible multigradings and produces a new homogeneous ideal. It was introduced in \cite{TFP} in order to generalise the gluing operation for toric ideals that appear in tree-based models of evolution and elsewhere in algebraic statistics, and further studied in \cite{engstrom2014multigraded} and \cite{kahle2014toric}. More recently toric fiber products were introduced into the geometric modelling setting in \cite{duarte2023toric}. Here, we will introduce the basic objects and recommend that the reader consult \cite{TFP} for further details. 
 
 Let $r\in\mathbb{N}$ and $s,t\in\mathbb{N}^r$, and let $\C A = \{a_1, \ldots, a_r\} \subset \mathbb{Z}^D$ be a linearly independent set for some $D > 0$. Denote the affine semigroup generated by $\C A$ by $\mathbb{N}\C A$. Let $\mathbb{K}$ be an algebraically closed field and let
 
$$ \mathbb{K}[x] = \mathbb{K}[x_j^i\ |\ i \in [r], j \in [s_i]],$$
and
 
 $$ \mathbb{K}[y] = \mathbb{K}[y_k^i\ |\ i \in [r], k \in [t_i]],$$
 be multigraded polynomial rings with multidegree given by $\text{deg}(x_j^i) = \text{deg}(y_k^i) = a_i$ for all $i=1,\ldots, n$, $j = 1,\ldots, s_i$, and $k=1,\ldots,t_i$. Note that since the $a_i$ are linearly independent, ideals in $\mathbb{K}[x]$ and $\mathbb{K}[y]$ that are homogeneous with respect to the multigrading are also homogeneous with respect to the total degree. For homogeneous ideals $I\subset\mathbb{K}[x]$ and $J\subset\mathbb{K}[y]$, let $R= \mathbb{K}[x]/I$ and $S = \mathbb{K}[y]/J$ be the corresponding quotient rings, which inherit the multigrading from $\mathbb{K}[x]$ and $\mathbb{K}[y]$ respectively. Let 
 
 $$\mathbb{K}[z] = \mathbb{K}[z^i_{jk}\ |\ i\in[r], j\in[s_i], k\in[t_i]]$$
 be the polynomial ring with the analogous multigrading (i.e. $\text{deg}\,z^i_{jk} = a_i$). Let $\phi_{IJ}$ be the ring homomorphism given by
 \begin{equation*}
  \begin{aligned} 
 \phi_{IJ}:  \mathbb{K}[z]&\longrightarrow R\otimes_{\mathbb{K}} S \\
 		 z_{jk}^i &\longmapsto x_j^i \otimes y_k^i.
\end{aligned}
 \end{equation*}
 
 \begin{definition}
With notation as above, the \emph{toric fiber product} of $I$ and $J$ is
 
 $$I\times_{\C A} J := \text{ker}\,\phi_{IJ}.$$
 \end{definition}
 Note that when $I$ and $J$ are prime ideals, since the toric fiber product $I\times_{\C A} J$ is the kernel of a ring homomorphism into an integral domain, it is a prime ideal. It will be helpful for us to also consider the monomial homomorphism
  \begin{equation*}
  \begin{aligned} 
 \phi_{B}:  \mathbb{K}[z]&\longrightarrow \mathbb{K}[x,y] \\
 		z_{jk}^i &\longmapsto x_j^i y_k^i,
\end{aligned}
 \end{equation*}
where we think of $B$ as being the integral matrix of exponent vectors of this map. The ideal $I_B = \text{ker}\ \phi_B$ is a toric ideal and is given by

$$I_B = \big\langle\text{Quad}_B\big\rangle,$$
where 
 $$\text{Quad}_B = \{z^i_{j_1k_2}z^i_{j_2k_1} - z^i_{j_1k_1}z^i_{j_2k_2}\ |\ i\in [r],\ 1\leq j_1< j_2 \leq s_i,\ 1\leq k_1< k_2 \leq t_i\},$$
 and $\text{Quad}_B$ is a Gr\"{o}bner basis for $I_B$ with respect to any term order that selects the first term (as written above) as the initial term for each quadric \cite[Proposition~10]{TFP}.

We may also define $I\times_{\C A}J$ as $\phi_{B}^{-1}(I + J)$, where we consider $I$ and $J$ as being their natural extensions in $\mathbb{K}[x,y]$. If $\omega_1$ and $\omega_2$ are weight vectors on $\mathbb{K}[x]$ and $\mathbb{K}[y]$ respectively, then we have a natural weight vector $(\omega_1,\omega_2)$ on $\mathbb{K}[x,y]$, and the pullback $\phi_B^*(\omega_1,\omega_2)$  is a weight vector on $\mathbb{K}[z]$. These weight vectors have the property that for all monomials $z^a \in \mathbb{K}[z]$ we have $\text{wt}_{\phi_B^*(\omega_1, \omega_2)}(z^a) = \text{wt}_{(\omega_1, \omega_2)}(\phi_B(z^a))$.
 
 Let $f \in\mathbb{K}[x]$ be a homogeneous polynomial with respect to the multigrading $\mathbb{N} \C A$ and total degree $d$,  so that we may write
 
 $$f = \sum_{u = 1}^v c_u x^{i_1}_{j_1^u}\cdots x^{i_d}_{j_d^u},$$
 with each $j_l^u \in [s_{i_l}]$ and $c_u \in \mathbb{K}$. The upper indices $i_1, \ldots, i_d$ can be written independent of $u$ since $f$ is homogeneous with respect to $\mathbb{N} \C A$ and $\C A$ is linearly independent. For any $k=(k_1,\ldots, k_d)$ with $k_l \in [t_{i_l}]$ define $f_k \in \mathbb{K}[z]$ by
 
 \begin{equation}
 \label{eqn:f_k}
 f_k = \sum_{u=1}^v c_u z^{i_1}_{j_1^uk_1}\cdots z^{i_d}_{j_d^uk_d}.
 \end{equation}
 
 For a set $F\subset\mathbb{K}[x]$ define $\text{Lift}\,F$ to be the subset of $\mathbb{K}[z]$ consisting of all possible $f_k$ with $f\in F$.  We define $\text{Lift}\,G$ for $G\subset\mathbb{K}[y]$ analogously. Observe that we have
 
 $$\phi_B(f_k) = \sum_{u=1}^v c_u x^{i_1}_{j_1^u}\cdots x^{i_d}_{j_d^u}y^{i_1}_{k_1}\cdots y^{i_d}_{k_d} = (y^{i_1}_{k_1}\cdots y^{i_d}_{k_d})f.$$
 Since the weight of each monomial in $f_k$ with respect to $\phi_B^*(\omega_1, \omega_2)$ is equal to the weight with respect to $(\omega_1, \omega_2)$ of the image of that monomial under $\phi_B$, and this is in turn given by the sum of the weight with respect to $\omega_1$ of the corresponding monomial in $f$ and the weight with respect to $\omega_2$ of $y^{i_1}_{k_1}\cdots y^{i_d}_{k_d}$, we have that $\text{in}_{\phi_B^*(\omega_1. \omega_2)}(f_k) = (\text{in}_{\omega_1}(f))_k$. It follows that $\text{in}_{\phi_B^*(\omega_1. \omega_2)}(\langle\text{Lift }F\rangle) = \text{Lift}(\text{in}_{\omega_1}(\langle F\rangle)$, and by symmetry $\text{in}_{\phi_B^*(\omega_1. \omega_2)}(\langle\text{Lift }G\rangle) = \text{Lift}(\text{in}_{\omega_2}(\langle G\rangle)$.
 
 One of the key results on toric fiber products is the following.
 
 \begin{theorem}\cite[Theorem~13]{TFP}\label{thm:TFPBasis}
 Let $F$ be a homogeneous Gr\"{o}bner basis for $I$ with respect to a weight vector $\omega_1$, let $G$ be a homogeneous Gr\"{o}bner basis for $J$ with respect to a weight vector $\omega_2$, and let $\omega_q$ be a weight vector such that $\text{Quad}_B$ is a Gr\"{o}bner basis for $I_B$. Then
 
 $$\text{Lift }(F)\cup\text{Lift }(G)\cup\text{Quad}_B$$
 
\noindent is a Gr\"{o}bner basis for $I \times_{\C A} J$ with respect to the weight vector $\phi_B^*(\omega_1, \omega_2) + \varepsilon\omega_q$ for sufficiently small $\epsilon > 0$.
 \end{theorem}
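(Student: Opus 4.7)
The plan is to establish $\mathcal H := \text{Lift}(F) \cup \text{Lift}(G) \cup \text{Quad}_B$ as a Gr\"obner basis of $I \times_{\C A} J$ under $w := \phi_B^*(\omega_1,\omega_2) + \varepsilon\omega_q$ in two parts: the containment $\langle \mathcal H \rangle \subseteq I \times_{\C A} J$, followed by the equality of initial ideals $\text{in}_w(\langle \mathcal H\rangle) = \text{in}_w(I \times_{\C A} J)$. Containment is immediate. Since $\text{Quad}_B \subseteq I_B = \ker\phi_B$ and $I \times_{\C A} J = \phi_B^{-1}(I+J)$, we have $\text{Quad}_B \subseteq I \times_{\C A} J$. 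For any $f \in F \subseteq I$ and any index tuple $k$, the identity $\phi_B(f_k) = (y_{k_1}^{i_1}\cdots y_{k_d}^{i_d}) f$ recorded in the excerpt places $\phi_B(f_k)$ in $I\cdot \mathbb{K}[x,y] \subseteq I+J$, so $f_k \in I \times_{\C A} J$. The case of $\text{Lift}(G)$ is symmetric.

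\textbf{Initial ideal equality.} Choose $\varepsilon>0$ small enough that $w$ refines $\phi_B^*(\omega_1,\omega_2)$ in the sense that the $w$-initial form is obtained by first selecting the $\phi_B^*(\omega_1,\omega_2)$-initial form and then breaking ties with $\omega_q$. Combined with the paragraph preceding the theorem statement, which gives $\text{in}_{\phi_B^*(\omega_1,\omega_2)}(\langle \text{Lift } F\rangle) = \text{Lift}(\text{in}_{\omega_1}(I))$ and its symmetric counterpart for $G$, and noting that every element of $\text{Quad}_B$ has $\phi_B^*(\omega_1,\omega_2)$-weight zero so that its $w$-initial form equals its $\omega_q$-initial form, this produces the inclusion
\[
\text{in}_w(\langle \mathcal H\rangle) \supseteq \langle \text{Lift}(\text{in}_{\omega_1}(I)) \cup \text{Lift}(\text{in}_{\omega_2}(J)) \cup \text{in}_{\omega_q}(I_B)\rangle.
\]
The substantive content is the reverse inclusion $\text{in}_w(I \times_{\C A} J) \subseteq$ the same right-hand side. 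Given $h \in I \times_{\C A} J$, first reduce $h$ modulo $\langle \text{Quad}_B\rangle$ into a sorted normal form $\tilde h$ — a $\mathbb{K}$-linear combination of monomials whose $j$- and $k$-indices appear in the canonical order prescribed by the term order on $\text{Quad}_B$. This reduction does not raise $w$-weight because $\text{Quad}_B$ is a Gr\"obner basis for $I_B$ with respect to $\omega_q$ and every quadric is $\phi_B^*(\omega_1,\omega_2)$-homogeneous of weight zero. Now $\phi_B(\tilde h) = \phi_B(h) \in I\mathbb{K}[x,y] + J\mathbb{K}[x,y]$, and sorted monomials in $\mathbb{K}[z]$ pull back from distinct products under $\phi_B$; using the linear independence of $\C A$ to enforce unique multidegree decomposition, we can lift a witness $\phi_B(\tilde h) = p + q$ with $p \in I\cdot\mathbb{K}[x,y]$ and $q \in J\cdot\mathbb{K}[x,y]$ back to an expression of $\tilde h$ as a combination of elements of $\text{Lift}(F) \cup \text{Lift}(G)$ modulo $\text{Quad}_B$, in a $w$-weight-respecting way. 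An induction on the leading $w$-weight of $h$ then yields the desired inclusion.

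\textbf{Main obstacle.} The technical heart is the pairing of the sorted normal form step with the ideal-membership decomposition. One must verify that the straightening by $\text{Quad}_B$ is compatible with $w$ at every step, and — more delicately — that the decomposition $\phi_B(\tilde h) = p + q$ can be pulled back to $\mathbb{K}[z]$ in a way that exhibits the leading $w$-term of $\tilde h$ as a leading term of a lift of some $f \in F$ or $g \in G$ rather than as a spurious cancellation. The linear independence of $\C A$ is essential: it forces each $\mathbb{N}\C A$-multidegree appearing in $\phi_B(\tilde h)$ to come from a unique $\mathbb{N}\C A$-multidegree in $\tilde h$, preventing cross-contamination between $x$- and $y$-components of the decomposition. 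The calibration of $\varepsilon$ — small enough that $\phi_B^*(\omega_1,\omega_2)$ strictly dominates $\omega_q$ on every monomial pair of differing $\phi_B^*(\omega_1,\omega_2)$-weight — is precisely what makes the bookkeeping consistent with the weight order throughout the induction.
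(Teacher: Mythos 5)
The paper does not actually prove this statement: it is imported verbatim from \cite[Theorem~13]{TFP}, so the only comparison available is with the proof in that reference. Your outline does track the structure of that proof — containment of the three generating sets in $I\times_{\C A}J$, reduction to a sorted normal form modulo $\text{Quad}_B$, injectivity of $\phi_B$ on sorted monomials, and descent to membership in $I+J$ — so the strategy is the right one. But as written the proposal has a genuine gap, and you have flagged it yourself: the paragraph labelled ``Main obstacle'' \emph{is} the theorem. The claim that the decomposition $\phi_B(\tilde h)=p+q$ ``can be pulled back to $\mathbb{K}[z]$ in a way that exhibits the leading $w$-term of $\tilde h$ as a leading term of a lift'' is asserted, not argued. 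To carry it out you need (i) the lemma that $F\cup G$ is a Gr\"obner basis of $I+J$ with respect to $(\omega_1,\omega_2)$ — automatic because $F$ and $G$ live in disjoint variable sets, but it must be invoked, since it is what guarantees that some term of $\phi_B(\tilde h)$ is divisible by $\text{in}_{\omega_1}(f)$ or $\text{in}_{\omega_2}(g)$; (ii) an argument that this divisibility on the $(x,y)$-side pulls back through the sorted-monomial bijection to divisibility of a term of $\tilde h$ by $\text{in}_w(f_k)$ for a suitable index tuple $k$; and (iii) a resolution of the tie-breaking mismatch: $\omega_q$ refines $\phi_B^*(\omega_1,\omega_2)$-ties among monomials of $\tilde h$, but there is no corresponding refinement on the $(x,y)$-side, so the $w$-leading term of $\tilde h$ need not map to the $(\omega_1,\omega_2)$-leading term of $\phi_B(\tilde h)$. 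None of these is routine bookkeeping; they are the content of Sullivant's proof.

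There is also a localized error. You twice assert that every element of $\text{Quad}_B$ has $\phi_B^*(\omega_1,\omega_2)$-weight zero. That is false in general: the weight of $z^i_{j_1k_2}z^i_{j_2k_1}$ is the $(\omega_1,\omega_2)$-weight of $x^i_{j_1}x^i_{j_2}y^i_{k_1}y^i_{k_2}$, which has no reason to vanish. The correct statement — Remark \ref{rem:omega} in the paper — is that the \emph{two} monomials of each quadric have \emph{equal} $\phi_B^*(\omega_1,\omega_2)$-weight, because they have the same image under $\phi_B$; hence the $\phi_B^*(\omega_1,\omega_2)$-initial form of each quadric is the whole quadric and the $\varepsilon\omega_q$ term breaks the tie. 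The conclusions you draw from the erroneous claim (that the $w$-initial form of a quadric is its $\omega_q$-initial form, and that reduction modulo $\text{Quad}_B$ does not raise $w$-weight) happen to be correct, but for this reason rather than the one you give.
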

 Note that if $\varepsilon$ is chosen small enough, then $\text{in}_{\phi_B^*(\omega_1, \omega_2) + \varepsilon\omega_q}(f_k) = \text{in}_{\phi_B^*(\omega_1, \omega_2)}(f_k)$ for all $f_k \in \langle \text{Lift }F, \text{Lift }G\rangle$. 
 
\begin{remark}\label{rem:omega}
Since $\text{Quad}_B \subset \ker\phi_B$ we have that $\text{in}_{\phi_B^*(\omega_1,\omega_2)}(f) = f$ for all $f\in\text{Quad}_B$. Now $\text{in}_{\phi_B^*(\omega_1, \omega_2) + \varepsilon\omega_q}(f) = \text{in}_{\omega_q}(\text{in}_{\phi_B^*(\omega_1, \omega_2)}(f)) = \text{in}_{\omega_q}(f)$, so it follows that on $\text{Quad}_B$, the weight vector $\phi_B^*(\omega_1, \omega_2) + \varepsilon\omega_q$ chooses the same leading term as the weight vector $\omega_q$.
\end{remark}

%
%
 \section{Dimension of Toric Fiber Products}\label{sec:TPP}
 In this section we give a dimension formula for the toric fiber product of two prime ideals when the set $\C A$ is linearly independent, and then apply this to level-1 phylogenetic networks.
 
 Recall the following definitions, from e.g. \cite{BeckerWeispfenning}. Let $I$ be an ideal in the polynomial ring $\mathbb{K}[x_1,\dots, x_n]$. We say that a set $U\subseteq\{x_1,\ldots, x_n\}$ is \emph{independent modulo} $I$ if $I\cap\mathbb{K}[U] = \{0\}$. We say that $U$ is \emph{maximally independent modulo} $I$ if it is independent modulo $I$ and there exists no other set $U'\subseteq\{x_1,\ldots,x_n\}$ such that $U\subseteq U'$ and $U'$ is independent modulo $I$. The \emph{dimension} of $I$, denoted $\dim I$, is given by $\text{max}\{|U|\ |\ U\subseteq \{x_1,\ldots,x_n\} \text{ is independent modulo }I\}$. If $I$ is a prime ideal then for all sets $U\subseteq\{x_1,\ldots,x_n\}$ that are maximally independent modulo $I$ we have $\dim I = |U|$ . We begin with the following lemma.
 
 \begin{lemma}\label{lemma:lift}
 Let $M\subset\mathbb{K}[x]$ be a set of monomials, and let $U\subseteq\{x^i_{j_i}\ |\ i\in [r], j_i \in [s_i]\}$ be maximally independent modulo $\langle M\rangle$, given by
 $$U = \{x^i_{j_i^h}\ |\ i\in\C I,\ h=1,\ldots,n_i\},$$
 where $\C I\subset [r]$ and for each $i \in \C I$ we have $j_i^h \in [s_i]$ for $h =1,\ldots,n_i$. Then the set 
 $${\rm Lift}\,U = \{z^i_{j_i^hk}\ |\ i\in\C I,\ h=1,\ldots,n_i,\ k\in [t_i]\} \subseteq \mathbb{K}[z]$$ 
 is maximally independent modulo $\langle{\rm Lift}\,M\rangle$.
 \end{lemma}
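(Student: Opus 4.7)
The plan is to exploit the fact that $\langle {\rm Lift}\,M\rangle$ is a monomial ideal (since every element of ${\rm Lift}\,M$ is a monomial in $\mathbb{K}[z]$). A polynomial lies in a monomial ideal if and only if each of its monomials does, so both independence and maximality modulo $\langle{\rm Lift}\,M\rangle$ can be checked monomial-by-monomial, and the proof becomes a matter of tracking which monomials can appear.

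For the independence of ${\rm Lift}\,U$, I would argue by contradiction: suppose a monomial $m \in \mathbb{K}[{\rm Lift}\,U]$ is divisible by some lift $f_k$ of an $f \in M$. Since $f$ is itself a monomial, write $f = x^{i_1}_{j_1}\cdots x^{i_d}_{j_d}$ and $f_k = z^{i_1}_{j_1 k_1}\cdots z^{i_d}_{j_d k_d}$. For $f_k$ to divide $m$, every factor $z^{i_l}_{j_l k_l}$ must be one of the variables appearing in ${\rm Lift}\,U$, which forces $i_l \in \mathcal{I}$ and $j_l \in \{j_{i_l}^h\}_{h=1}^{n_{i_l}}$ for each $l$. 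But then $f \in \mathbb{K}[U]$ and $f \in \langle M\rangle$, contradicting the hypothesis that $U$ is independent modulo $\langle M\rangle$.

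For maximality, take any $z^i_{jk} \notin {\rm Lift}\,U$; then $x^i_j \notin U$ (either because $i \notin \mathcal{I}$ or because $i \in \mathcal{I}$ but $j \notin \{j_i^h\}$). Since $U$ is maximally independent modulo $\langle M\rangle$, the set $U \cup \{x^i_j\}$ is not independent, so $\mathbb{K}[U \cup \{x^i_j\}] \cap \langle M\rangle \neq \{0\}$. Because $\langle M\rangle$ is a monomial ideal, this intersection contains some monomial, and hence some $f \in M$ with $f \in \mathbb{K}[U \cup \{x^i_j\}]$. Independence of $U$ forces $f$ to involve $x^i_j$ with some positive multiplicity $e \geq 1$. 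I would then produce a specific lift $f_k$ by sending every occurrence of $x^i_j$ in $f$ to $z^i_{jk}$ and every occurrence of $x^{i'}_{j_{i'}^h} \in U$ to $z^{i'}_{j_{i'}^h k'}$ for any choice of $k' \in [t_{i'}]$. This lift is a nonzero monomial in $\mathbb{K}[{\rm Lift}\,U \cup \{z^i_{jk}\}] \cap \langle {\rm Lift}\,M\rangle$, showing that ${\rm Lift}\,U \cup \{z^i_{jk}\}$ is not independent modulo $\langle {\rm Lift}\,M\rangle$.

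I do not expect any serious obstacle: once one observes that $\langle{\rm Lift}\,M\rangle$ is monomial and that the map on indices $(i,j_i^h,k)\mapsto(i,j_i^h)$ mediates between the two ideals, both halves reduce to combinatorics of divisibility. The only point requiring care is the lifting step in the maximality argument, where one must ensure the chosen $f_k$ really lies in $\mathbb{K}[{\rm Lift}\,U\cup\{z^i_{jk}\}]$; this is precisely why the lift is defined one factor at a time, allowing us to route each copy of $x^i_j$ to the single allowed new variable $z^i_{jk}$.
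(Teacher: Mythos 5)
Your proof is correct and follows essentially the same route as the paper's: both halves rest on the observation that $\langle{\rm Lift}\,M\rangle$ is a monomial ideal, independence is checked by pulling a lifted monomial back to a monomial of $\langle M\rangle$ in $\mathbb{K}[U]$, and maximality is obtained by lifting a witness from $\mathbb{K}[U\cup\{x^i_j\}]\cap\langle M\rangle$. Your maximality step is if anything slightly more explicit than the paper's (you construct the witness lift $f_k$ factor by factor, where the paper passes through the independence of ${\rm Lift}(U\cup\{x^{i'}_{j'}\})$), but the underlying idea is identical.
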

 \begin{proof}
 First observe that $\langle\text{Lift}\,M\rangle$ is a monomial ideal generated by monomials of the form $m_k$ as in equation \eqref{eqn:f_k} for $m\in M$. Thus, in order to show independence, it is sufficient to only consider monomials $m_k$. Now if $m_k \in \langle\text{Lift}\,M\rangle\cap\mathbb{K}[z^i_{j_i^hk}\ |\ i\in\C I,\ h=1,\ldots,n_i,\ k\in [t_i]]$, then $m\in M\cap\mathbb{K}[x^i_{j_i}\ |\ i\in\C I, j_i \in [s_i]] = \{0\}$, and thus $\text{Lift}\, U$ is independent modulo $\langle{\rm Lift}\,M\rangle$. Furthermore, if $\text{Lift}\, U$ is not maximal then there exists some $i'\in[r], j'\in[s_{i'}]$, and $k'\in[t_{i'}]$ such that $\text{Lift}\, U \cup \{z^{i'}_{j'k'}\}$ is independent modulo $\langle\text{Lift}\,M\rangle$. But then  $\text{Lift}\, U \cup \{z^{i'}_{j'k}\ |\ k\in[t_{i'}]\} = \text{Lift}\, (U\cup \{x^{i'}_{j'}\})$ is also independent modulo  $\langle\text{Lift}\,M\rangle$, so $U\cup\{x^{i'}_{j'}\}$ is independent modulo $\langle M\rangle$, contradicting the maximality of $U$.
\end{proof}

Note that we have the analogous result for a set of monomials $M\subset\mathbb{K}[y]$ and  $U\subseteq\{y^i_{k_i}\ |\ i\in [r], k_i \in [t_i]\}$. 
  
 \begin{theorem}\label{thm:tfpdim}
 Let $I$ and $J$ be homogeneous ideals in $\mathbb{K}[x]$ and $\mathbb{K}[y]$ respectively, let $\omega_1$ be a weight vector for $\mathbb{K}[x]$, and let $\omega_2$ be a weight vector for $\mathbb{K}[y]$. Let the set $\{x^i_{j^h_i}\ |\ i \in \C I_1,\ j^h_i\in [s_i],\ h=1,\ldots,n_i\}$  be a maximally independent modulo $\text{in}_{\omega_1}(I)$ for some $\C I_1 \subseteq [r]$, and let the set $\{y^i_{k^g_i}\ |\ i \in \C I_2,\ k^g_i\in [t_i],\ g=1,\ldots,m_i\}$ be a maximally independent modulo  $\text{in}_{\omega_2}(J)$ with $\C I_2 \subseteq [r]$. If the set $\C A$ is linearly independent, then 
 $$\dim I \times_{\C A} J \geq \sum_{i\in \C I_1\cap\C I_2} (n_i + m_i - 1).$$
Furthermore, if $I$ and $J$ are prime and we have $\C I_1=\C I_2 = [r]$ then $\dim I \times_{\C A} J = \dim I + \dim J - |\C A|$.
 \end{theorem}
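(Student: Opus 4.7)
The plan is to establish the lower bound by constructing an explicit independent set modulo the monomial initial ideal of $I \times_{\C A} J$, and then to match this with an upper bound via a torus-invariant subring argument in the equality case. First, by Theorem~\ref{thm:TFPBasis}, if $F$ and $G$ are Gr\"obner bases of $I$ and $J$ with respect to $\omega_1$ and $\omega_2$, then $\text{Lift}(F)\cup\text{Lift}(G)\cup\text{Quad}_B$ is a Gr\"obner basis of $I\times_{\C A}J$ with respect to $\omega := \phi_B^*(\omega_1, \omega_2) + \varepsilon\omega_q$. Combining Remark~\ref{rem:omega} with the initial-term identities recorded before Theorem~\ref{thm:TFPBasis}, the corresponding initial ideal is the monomial ideal $M$ generated by
\begin{equation*}
\text{Lift}(\text{in}_{\omega_1}(I))\cup\text{Lift}(\text{in}_{\omega_2}(J))\cup\text{in}_{\omega_q}(\text{Quad}_B),
\end{equation*}
and $\dim(I\times_{\C A}J) = \dim M$, since passing to an initial ideal preserves Krull dimension.

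Write $U_1 := \{x^i_{j^h_i} : i \in \C I_1,\ h \in [n_i]\}$ and $U_2 := \{y^i_{k^g_i} : i \in \C I_2,\ g \in [m_i]\}$ for the given maximally independent sets. For each $i \in \C I_1 \cap \C I_2$, reorder the indices so that $j^1_i < \cdots < j^{n_i}_i$ and $k^1_i < \cdots < k^{m_i}_i$, and define the ``L-shape''
\begin{equation*}
W_i := \{z^i_{j^1_i k^g_i} : g \in [m_i]\} \cup \{z^i_{j^h_i k^{m_i}_i} : h \in [n_i]\},
\end{equation*}
of size $n_i + m_i - 1$. Set $W := \bigcup_{i \in \C I_1 \cap \C I_2} W_i$. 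I would verify independence of $W$ modulo each of the three generating pieces of $M$. By Lemma~\ref{lemma:lift}, $\text{Lift}\,U_1$ is maximally independent modulo $\text{Lift}(\text{in}_{\omega_1}(I))$ and $\text{Lift}\,U_2$ is maximally independent modulo $\text{Lift}(\text{in}_{\omega_2}(J))$; since $W \subseteq \text{Lift}\,U_1 \cap \text{Lift}\,U_2$ by construction, independence from the first two pieces is automatic. For the quadric initials $z^i_{j_1 k_2}z^i_{j_2 k_1}$ with $j_1 < j_2$ and $k_1 < k_2$, any two distinct elements of $W_i$ either share a $j$- or $k$-coordinate, or take the form $z^i_{j^1_i k^g_i}$ and $z^i_{j^h_i k^{m_i}_i}$ with $j^1_i < j^h_i$ and $k^g_i < k^{m_i}_i$---the ``co-diagonal'' orientation, which is never the initial term of a quadric. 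This establishes $\dim(I\times_{\C A}J)\geq |W| = \sum_{i\in\C I_1\cap\C I_2}(n_i + m_i - 1)$.

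For the equality case $\C I_1 = \C I_2 = [r]$ with $I, J$ prime, I would identify the image of $\phi_{IJ}$ in $R\otimes_{\mathbb{K}} S$ with the ring of invariants $(R\otimes_{\mathbb{K}} S)^{\mathbb{T}}$, where $\mathbb{T} := (\mathbb{K}^*)^r$ acts via $\lambda\cdot x^i_j = \lambda_i x^i_j$ and $\lambda\cdot y^i_k = \lambda_i^{-1}y^i_k$: the image is generated by $x^i_j\otimes y^i_k$, and a monomial in $R\otimes_{\mathbb{K}} S$ is $\mathbb{T}$-invariant precisely when its $x$- and $y$-multidegrees coincide at every $a_i$, so the subring generated by these products is the full invariant ring. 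The hypothesis $\C I_1 = [r]$ forces $x^i_{j^1_i}\notin I$ for every $i$, so the open set $\{x^i_{j^1_i}\neq 0\}$ meets $V(I)$, and similarly for $V(J)$; by irreducibility, $\mathbb{T}$ acts with trivial stabilizer on a dense open subset of $V(I)\times V(J)$, so the fiber-dimension theorem gives $\dim(R\otimes_{\mathbb{K}} S)^{\mathbb{T}} = \dim I + \dim J - r$. This matches the lower bound $\sum_{i=1}^r(n_i+m_i-1) = \dim I + \dim J - |\C A|$. The main obstacle is the quadric-avoidance check in the lower bound; the careful orientation of the L-shape makes this routine.
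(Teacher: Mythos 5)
Your lower-bound argument is essentially the paper's: both proofs lift the given maximal independent sets via Lemma~\ref{lemma:lift} and then carve out an L-shaped set of $n_i+m_i-1$ variables $z^i_{jk}$ per index $i$ whose pairwise products avoid the initial terms $z^i_{j_1k_2}z^i_{j_2k_1}$ of $\mathrm{Quad}_B$ (your L-shape is the mirror image of the paper's set $Z$, but the co-monotone orientation check is the same). Where you genuinely diverge is the equality case. The paper shows that its set $Z$ is \emph{maximally} independent modulo $\mathrm{in}_\omega(I\times_{\C A}J)$ and then invokes primeness of the toric fiber product to conclude that the dimension equals $|Z|$; you instead identify $\mathbb{K}[z]/(I\times_{\C A}J)$ with the invariant ring $(R\otimes_{\mathbb{K}}S)^{\mathbb{T}}$ for the torus $\mathbb{T}=(\mathbb{K}^*)^r$ acting with opposite weights on the $x$- and $y$-variables, and compute its dimension geometrically. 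This is a legitimate and rather clean alternative: it sidesteps the somewhat delicate point that maximal independent sets modulo an \emph{initial} ideal of a prime all have the same cardinality (which really rests on equidimensionality of initial ideals of primes), and it makes transparent where the hypothesis $\C I_1=\C I_2=[r]$ enters (it forces a trivial generic stabilizer). Two caveats: first, your appeal to the fiber-dimension theorem honestly only yields $\dim(R\otimes S)^{\mathbb{T}}\le \dim I+\dim J-r$ (generic fibers of the quotient map contain $r$-dimensional orbits but could a priori be larger unless you argue generic orbits are closed); this is harmless since the reverse inequality is your part one, but you should say so rather than assert equality outright. Second, your route does not produce a maximal independent set containing a variable of each multidegree, which is exactly what the paper extracts in Remark~\ref{rem:tfp} and then uses to iterate the construction in Corollary~\ref{cor:decomp}; so while your proof establishes the stated theorem, it would not by itself support the downstream inductive applications without supplying that combinatorial byproduct separately.
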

 
 \begin{proof}
Let $F$ and $G$ be Gr\"{o}bner bases of $I$ and $J$ with respect to the weight vectors $\omega_1$ and $\omega_2$ respectively, and let $\omega_q$ be a weight vector on $\mathbb{K}[z]$ that for all $i$ chooses $z^i_{j_1k_2}z^i_{j_2k_1}$ as the initial term for each polynomial in $\text{Quad}_B$, where $1\leq j_1<j_2\leq s_i$ and $1\leq k_1 < k_2\leq t_i$. By Theorem \ref{thm:TFPBasis}, we have that for the weight vector $\omega=\phi_B^*(\omega_1, \omega_2) + \varepsilon\omega_q$ and sufficiently small $\varepsilon > 0$, the set $\text{Lift}\,(F) \cup \text{Lift}\,(G) \cup \text{Quad}_B$ is a Gr\"{o}bner basis of $I\times_{\C A} J$. To prove the first statement of the theorem it is sufficient to find a set of generators $z_{j_l k_l}^{i_l}$ that are maximally independent modulo $\text{in}_\omega(I\times_{\C A} J)$ and that has size $ \sum_{i\in \C I_1\cap\C I_2} (n_i + m_i - 1)$.


As in the statement of the theorem, let the set $\{x^i_{j^h_i}\ |\ i \in \C I_1,\ j^h_i\in [s_i],\ h=1,\ldots,n_i\}$  be maximally independent modulo $\text{in}_{\omega_1}(I) = \text{in}_{\omega_1}(\langle F\rangle)$, and for each $i\in\C I_1$ arrange the $j^h_i$ so that $j^1_i < j^2_i < \cdots <j^{n_i}_i $. By Lemma \ref{lemma:lift}, and since $\text{in}_\omega(\langle\text{Lift} F\rangle) = \text{Lift}(\text{in}_{\omega_1}\langle F\rangle )$, we have that the set 
 $$\{z^{i}_{j^h_ik}\ |\ i\in\C I_1,\ h=1,\ldots,n_i,\ k=1,\ldots,t_i\} \subset \mathbb{K}[z]$$
is maximally independent modulo $\text{in}_\omega(\langle\text{Lift}\, F\rangle)$. Similarly, since the set $\{y^i_{k^g_i}\ |\ i \in \C I_2,\ k^g_i\in [t_i],\ g=1,\ldots,m_i\}$ is maximally independent modulo  $\text{in}_{\omega_2}(J)$, we have that
$$\{z^i_{jk_i^g}\ |\ i\in\C I_2,\ g=1,\ldots, m_i,\ j=1,\ldots,s_i \}\subset \mathbb{K}[z]$$
is maximally independent modulo $\text{in}_\omega(\langle\text{Lift}\, G\rangle)$. Again, for each $i\in\C I_2$ arrange the $k^h_i$ so that $k^1_i < k^2_i < \cdots <k^{m_i}_i $.  We now have 
 $$\text{in}_\omega(\langle\text{Lift}\, F \cup\text{Lift}\, G\rangle)\cap\mathbb{K}[z^i_{j_i^hk_i^g}\ |\ i\in\C I _1\cap\C I_2,\ h=1,\ldots,n_i,\ g=1,\ldots,m_i] = \{0\},$$ 
 and that the set $\{z^i_{j_i^hk_i^g}\ |\ i\in\C I_1\cap\C I_2,\ h=1,\ldots,n_i,\ g=1,\ldots,m_i\}$ is maximal with respect to this condition. We claim that the set 
 $$Z = \{z^i_{j_i^hk_i^1}\ |\ i\in\C I_1\cap\C I_2,\ h=1,\ldots, n_i\}\cup\{z^i_{j^{n_i}_ik_i^g}\ |\ i\in\C I_1\cap\C I_2,\ g=1,\ldots,m_i\}$$
 is maximally independent modulo $\text{in}_\omega(\langle\text{Lift}\,F \cup \text{Lift}\,G \cup \text{Quad}_B\rangle) = \text{in}_\omega(I\times_{\C A} J)$ (see Figure \ref{fig:tfp}).
 
 First we show that $\text{in}_\omega(\langle\text{Lift}\, F \ \cup \ \text{Lift}\, G \ \cup \ \text{Quad}_B\rangle)\cap\mathbb{K}[Z] = \{0\}$. Observe that since $\text{Lift}\, F \cup\text{Lift}\, G\cup\text{Quad}_B$ is a Gr\"{o}bner basis, we have $\text{in}_\omega(\langle\text{Lift}\, F \cup\text{Lift}\, G\cup\text{Quad}_B\rangle) = \text{in}_\omega(\text{Lift}\, F)\cup\text{in}_\omega(\text{Lift}\, G)\cup\text{in}_\omega(\text{Quad}_B)$. Since $\text{in}_\omega(\langle\text{Lift}\, F \cup\text{Lift}\, G\rangle)\cap\mathbb{K}[Z] = \{0\}$, it is sufficient to show that for each $i\in\C I_1\cap\C I_2$ the elements of degree $a_i$ in $Z$ do not appear together as a quadratic monomial in  $\text{in}_\omega(\langle\text{Quad}_B\rangle)$. By Remark \ref{rem:omega} and our choice of $\omega$ we have that 
 $$\text{in}_\omega(\text{Quad}_B) = \{ z^i_{j_1k_2}z^i_{j_2k_1}\ |\ i\in [r],\ 1\leq j_1<j_2\leq s_i, \ 1\leq k_1 < k_2\leq t_i\}.$$
 Fix $i\in \C I_1\cap\C I_2$ and observe that for any two elements of $Z$ of degree $a_i$, say $z^i_{jk}$ and $z^i_{j'k'}$ with $j \leq j'$, we either have $j = j'$, $k=k'$, or $k < k'$. In all cases $z^i_{jk}z^i_{j'k'}\not\in\text{in}_\omega(\text{Quad}_B)$.
 
 Next we show that $Z$ is maximal. By the maximal independence of $\mathbb{K}[z^i_{j_i^hk_i^g}\ |\ i\in\C I _1\cap\C I_2,\ h=1,\ldots,n_i,\ g=1,\ldots,m_i]$ modulo $\text{in}_\omega(\langle\text{Lift}\, F \cup\text{Lift}\, G\rangle)$, we need only consider those $z^i_{jk}\not\in Z$ with $i\in\C I_1\cap\C I_2$, $j=j_i^h$ for some $h=1,\ldots,n_i$, and $k=k_i^g$ for some $g=1,\ldots,m_i$. But it is clear that for any such $z^i_{jk}$, we can find $z^i_{j_0k_0} \in Z$ such that $z^i_{j_0k_0}z^i_{j_h k_l}\in\text{in}_\omega(\text{Quad}_B)$. It follows that $Z$ is maximally independent modulo $\text{in}_\omega(\langle\text{Lift}\, F \cup\text{Lift}\, G\cup\text{Quad}_B\rangle)$. Since
 $$|Z| = \sum_{i\in\C I_1\cap\C I_2} (n_i + m_i - 1),$$
 the first statement is proved. For the final statement, observe that since $I$ is prime, we have $\dim I = \sum_{i\in\C I _1} n_i$, and since $J$ is prime, we have $\dim J = \sum_{i\in\C I_2}m_i$. If $\C I_1 = \C I_2 = [r]$ then we get
 $$|Z| = \sum_{i=1}^r (n_i + m_i - 1) = \sum_{i=1}^r n_i +  \sum_{i=1}^r m_i - r = \dim I + \dim J - |\C A|.$$
 Now since $I\times_{\mathcal{A}} J$ is prime, its dimension is equal to the size of any subset that is maximally independent modulo $\text{in}_\omega(\langle\text{Lift}\, F \cup\text{Lift}\, G\cup\text{Quad}_B\rangle)$.
 \end{proof}
 
 \begin{figure}[h!]
 \centering
  \resizebox{0.5\textwidth}{!}{
\begin{tikzpicture}

\fill[gray] (-1,-2) rectangle (0,4);
\fill[gray] (1,-2) rectangle (2,4);
\fill[gray] (3,-2) rectangle (4,4);

\fill[gray] (-2,2) rectangle (5,3);
\fill[gray] (-2,-1) rectangle (5,0);

\fill[black] (3,2) rectangle (4,3);
\fill[black] (-1,-1) rectangle (0,0);
\fill[black] (1,-1) rectangle (2,0);
\fill[black] (3,-1) rectangle (4,0);

\node at (-0.5, -2.5) {$j_i^1$};
\node at (1.5, -2.5) {$j_i^2$};
\node at (3.5, -2.5) {$j_i^3$};

\node at (-2.5, -0.5) {$k_i^1$};
\node at (-2.5, 2.5) {$k_i^2$};

\draw[step=1cm,black,very thin] (-2,-2) grid (5,4);

\end{tikzpicture}
}
\caption{Grid representing the generators $z_{jk}^i$ for a fixed $i\in [r]$. Columns shaded grey give monomials coming from the lift of the maximally independent set modulo $\text{in}_{\omega_1}(I)$, and rows shaded grey give monomials coming from the lift of the maximally independent set modulo $\text{in}_{\omega_2}(J)$. Cells shaded black represent the elements of the set $Z$ of degree $a_i$.}
\label{fig:tfp}
\end{figure}
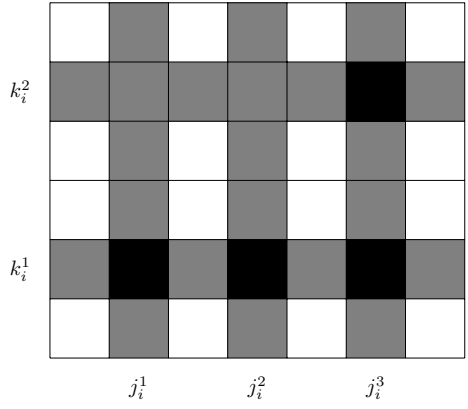
 
 \begin{remark}\label{rem:tfp}
 Observe that if both $I$ and $J$ are prime ideals and there exist maximally independent sets with $\C I_1 = \C I_2 = [r]$, then it is clear from the proof that $I\times_{\C A} J$ is also a prime ideal and there exists a maximally independent set modulo $\text{in}_\omega(I\times_{\C A} J)$ in $\{ z_{jk}^i\ |\ i \in [r],\, j\in [s_i],\, k\in [t_i]\}$ with at least one element for each upper index $i\in [r]$.
 \end{remark}
 
For the remainder of this section we will apply our results on toric fiber products to level-1 phylogenetic networks. Fix a group-based model $(G,B)$, and let $\C N$ be a level-1 phylogenetic network with a (directed) cut edge $e$. Then the operation of cutting $\C N$ at $e$ results in two smaller level-1 networks, that we denote $\C N_+$ and $\C N_-$. We denote by $e$ the new edge in both $\C N_+$ and $\C N_-$, and this edge inherits the direction from $\C N$. We assume that the network $\C N_+$  contains the leaves labelled $1, \ldots, n'$ for some $n' < n$, which are also leaves of $\C N$, and the new leaf, which we denote by $n_e$. Then $\C N_-$ contains the leaves labelled $n' + 1, \dots, n$, and the new leaf which we also denote by $n_e$.
 
The vanishing ideal $I_{\C N_+}$ is contained in the polynomial ring $R_+ = \mathbb{C}[ q_{g_1 \cdots g_{n'} g_{n_e}}\ |\ g_1 +\cdots + g_{n'} + g_{n_e} = 0]$, and $I_{\C N_-}$ is contained in $R_- = \mathbb{C}[ q_{g_{n_e} g_{n' + 1} \cdots g_n }\ |\  g_{n_e} + g_{n' + 1} +\cdots + g_n = 0]$. We give each polynomial ring the grading induced by $\text{deg}(q_{g_1\cdots g_N}) = E_{[\xi(e)]} \in \mathbb{Z}_{\ge 0}^{|B\cdot G|}$, where $\xi$ is the consistent edge labelling induced by the consistent leaf labelling $g_1,\ldots, g_N$, and $\{E_{[g]}\ |\ [g]\in B\cdot G\}$ is the standard basis of $\mathbb{Z}_{\ge 0}^{|B\cdot G|}$. Note that the set $\C A$ consisting of the image under $\text{deg}$ of the generators of $R_+$ and $R_-$ is given by the linearly independent set $\{ E_{[g]}\ |\ [g] \in B\cdot G\}$, and each element of this set is the image under $\deg$ of a generator of both $R_+$ and $R_-$. We assume that the edge $e$ is directed towards the leaves $1,\ldots, n'$, so that $\deg(q_{g_1\cdots g_n}) = E_{[g_1 + \cdots + g_{n'}]}$.
 
 We have a natural $\mathbb{C}$-algebra homomorphism
 \begin{equation}\label{eqn:naturalmap}
 \begin{aligned} 
 R &\to R_+ \otimes_{\mathbb{C}}R_- \\
 q_{g_1 \cdots g_n} &\mapsto q_{g_1 \cdots g_{n'} g_+} \otimes q_{g_{-} g_{n' + 1} \cdots g_n },
 \end{aligned}
 \end{equation}
 where $g_+ = -(g_1 + \cdots +g_{n'})$ and $g_{-} = -(g_{n'+1} + \cdots + g_{n})$. Note that $\deg(q_{g_1\cdots g_{n'}g_+}) = E_{[g_1 + \cdots +g_{n'}]}$ and $\deg(q_{g_-g_{n'+1}\cdots g_n}) = E_{[-(g_{n'+1} + \cdots + g_{n})]} = E_{[g_1 + \cdots +g_{n'}]}$. As in the proof of \cite[Proposition~3.2]{CHM}, the network parameterisation map $\phi_\C N$ factors through (\ref{eqn:naturalmap}), so $I_{\C N}$ is the toric fiber product $I_{\C N_+}\times_{\C A}I_{\C N_-}$.

 \begin{corollary}\label{cor:decomp}
 Fix a group-based model $(G,B)$. Let $\C N$ be a level-1 phylogenetic network with a cut edge $e$, and let $\C N_+$ and $\C N_-$ be the networks obtained by cutting $\C N$ at $e$. Then $\dim V^{(G,B)}_{\C N} = \dim V^{(G,B)}_{\C N_+} + \dim V^{(G,B)}_{\C N_-} - |B\cdot G|$.
 \end{corollary}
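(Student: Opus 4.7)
The plan is to apply Theorem \ref{thm:tfpdim} directly to the toric fiber product decomposition $I_{\C N} = I_{\C N_+} \times_{\C A} I_{\C N_-}$ established in the paragraph preceding the corollary, and translate the resulting ideal dimension identity into the stated variety dimension formula.

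First I would verify the straightforward hypotheses of the equality clause of Theorem \ref{thm:tfpdim}. The indexing set $\C A = \{E_{[g]} : [g] \in B \cdot G\}$ is the standard basis of $\mathbb{Z}^{|B\cdot G|}$, so it is linearly independent with $|\C A| = |B\cdot G|$. The ideals $I_{\C N_+}$ and $I_{\C N_-}$ are prime, since they are the kernels of the $\mathbb{C}$-algebra homomorphisms $\psi_{\C N_\pm}$ into the polynomial rings $S_{\C N_\pm}$, which are integral domains. At this point all that remains is the dimension arithmetic: once the equality clause applies, it yields
$$\dim(I_{\C N_+} \times_{\C A} I_{\C N_-}) = \dim I_{\C N_+} + \dim I_{\C N_-} - |\C A|,$$
which translates via the variety–ideal dimension correspondence into the desired formula.

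The substantive remaining hypothesis is $\C I_1 = \C I_2 = [r]$, that is, both $I_{\C N_+}$ and $I_{\C N_-}$ must admit maximally independent sets (modulo suitable initial ideals) containing at least one variable of each multidegree $E_{[g]}$. I expect this to be the main obstacle, since the preceding example near Remark \ref{rem:tfp} shows that the property is not automatic for arbitrary prime homogeneous ideals. My approach would be induction on the number of cut edges of $\C N$, using Remark \ref{rem:tfp} for the inductive step: if both factors in a TFP satisfy the $\C I = [r]$ condition, so does the TFP. The induction bottoms out on trees and sunlet networks. For a tree $\C T$, the map $\psi_{\C T}$ is monomial and sends each variable $q_{g_1\cdots g_n}$ to a nonzero monomial, so every variable is individually independent modulo $I_{\C T}$; the toric structure of $I_{\C T}$ then allows the assembly of a single maximally independent set hitting every degree class. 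For sunlets the property would be verified using the tropical-geometric analysis carried out later in Section \ref{sec:sunlet}.

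With the hypothesis $\C I_1 = \C I_2 = [r]$ in hand, Theorem \ref{thm:tfpdim} delivers the ideal-level equality displayed above, and since $|\C A| = |B \cdot G|$ and the variety dimension equals the ideal dimension, we obtain the desired identity $\dim V^{(G,B)}_{\C N} = \dim V^{(G,B)}_{\C N_+} + \dim V^{(G,B)}_{\C N_-} - |B\cdot G|$.
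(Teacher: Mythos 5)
Your overall route is the paper's route: reduce to the toric fiber product $I_{\C N} = I_{\C N_+} \times_{\C A} I_{\C N_-}$, invoke the equality clause of Theorem \ref{thm:tfpdim}, and observe that the only substantive hypothesis is the existence, for each factor, of a maximally independent set modulo a suitable initial ideal containing a variable of every multidegree ($\C I_1 = \C I_2 = [r]$). You also correctly use Remark \ref{rem:tfp} to push this down to the indecomposable pieces. The problem is that the base cases are exactly where the content of the proof lives, and your treatment of both of them has gaps.

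For trees, the fact that $\psi_{\C T}$ sends each $q_{g_1\cdots g_n}$ to a nonzero monomial gives at most \emph{individual} independence of variables; it does not produce a single set $U$, \emph{jointly} independent, with one element of each multidegree, and crucially the independence required is modulo $\operatorname{in}_\omega(I_{\C T})$ for some weight $\omega$ --- a strictly stronger condition than independence modulo $I_{\C T}$, and one that depends on the choice of $\omega$. ``The toric structure allows the assembly'' is not an argument. The paper handles this by reducing (again via Remark \ref{rem:tfp}, since a tree with an internal edge is itself a TFP) to the $3$-claw tree, then exhibiting the explicit set $U = \{q_{g0(-g)} \mid g \in \C G\}$ together with a term order favoring indices in the chosen transversal $\C G$, and verifying $\mathbb{K}[U]\cap \operatorname{in}_\omega(I) = \{0\}$ by a direct binomial computation. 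For sunlets, deferring to the tropical analysis of Section \ref{sec:sunlet} would not work: Lemma \ref{lemma:draisma} produces lower bounds on $\dim V_{\C N}^{(G,B)}$ via ranks of matrices $A_\lambda$, which is not the kind of data needed here --- it does not yield an independent set of coordinate variables modulo an initial ideal, let alone one meeting every degree class. The paper's (much simpler) observation is that $I_{\C N} \subseteq I_{\C T}$ for any tree $\C T$ obtained from the sunlet by deleting a reticulation edge, so any set independent modulo $\operatorname{in}_\omega(I_{\C T})$ is automatically independent modulo $\operatorname{in}_\omega(I_{\C N})$, and the sunlet case follows from the tree case for free. Without these two ingredients your proposal does not close.
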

 \begin{proof}
 As described above, the ideal $I_{\C N}$ is the toric fiber product $I_{\C N_+}\times_{\C A}I_{\C N_-}$, so we apply Theorem \ref{thm:tfpdim}. Both $I_{\C N_+}$ and $I_{\C N_-}$ are prime ideals, so to prove the result, it is sufficient to show that for a phylogenetic network ideal $I$ there exists a weight vector $\omega$ and a set $U \subset \{q_{g_1\cdots g_n}\ |\ g_1 + \cdots +g_n = 0\}$ that is independent modulo $\text{in}_\omega(I)$ and that contains at least one element of degree $a_i$ for each $a_i \in \C A$. From Remark \ref{rem:tfp}, we need only consider phylogenetic networks that are either sunlet networks or trees. Furthermore, if $\C N$ is a sunlet network and $\C T$ is a tree obtained from $\C N$ be removing a reticulation edge, then $I_{\C N} \subset I_{\C T}$. It follows that if $U$ is independent modulo $\text{in}_\omega(I_{\C T})$ then $U$ is also independent modulo $\text{in}_\omega(I_{\C N})$, so in fact it is sufficient to show the result for any phylogenetic tree $\C T$. 
 
To show the result for a tree $\C T$, we make the further observation that if $\C T$ has an internal edge $e$, then $\C T$ is a toric fiber product of the two trees given by cutting $\C T $ at $e$. Thus in view of Remark \ref{rem:tfp} again, we need only consider claw trees. Since we are only considering binary phylogenetic trees, we need only consider the 3-claw tree $T_3$.

Fix a set of representatives $\C G \subset G$ of the $B$-orbits in $G$, let $\C T = T_3$, let $I = I_{\C T}$, and let the set of multidegrees be given by $\C A = \{E_g\ |\ g\in \C G\}$. Note that $0\in \C G$ and that $[0] = \{0\}$. We may assume, without loss of generality, that $\deg(q_{g'hk}) = E_{g}$, where $g' \in [g]$ for some $g\in\C G$. Recall that $I$ is given by the kernel of the map $\psi_{\C T}$ where

\begin{equation*}
\begin{aligned} 
\psi_{\C T} :\mathbb{C}[q_{ghk}\ |\ g+h+k=0] &\longrightarrow \mathbb{C}[a_i^{[g]}\ |\ g\in\C G,\, i = 1,2,3] \\
q_{ghk} &\longmapsto a_1^{[g]}a_2^{[h]}a_3^{[k]}.
\end{aligned} 
\end{equation*}

Let $U = \{ q_{g0(-g)}\ |\ g \in \C G\}$. It is clear that $U$ has exactly one element of each multidegree. Next, choose a term order on $\mathbb{C}[q_{ghk}\ |\ g + h + k = 0]$ such that $q_{ghk} < q_{g'h'k'}$ whenever $g\in\C G$ and $g' \not\in\C G$, and let $\omega$ be a a weight vector whose induced term order satisfies this. We claim that  $\mathbb{K}[U]\cap \text{in}_\omega(I) = \{0\}$. 

To prove the claim, we will show that for any  element $f \in I$, we have that $\text{in}_\omega(f)$ does not consist of a product of elements of $U$. Since $I$ is homogeneous and generated by binomials, we may assume that $f$ is a homogeneous binomial. Let $\C G' \subseteq \C G$ with $|\C G'| = n$ and suppose that we can write
$$f = \prod_{g\in \C G'} q_{g0(-g)} - m$$
for some other monomial $m$ of total degree $n$. Since $f \in \ker\psi_{\C T}$ we must have that 

$$\psi_{\C T}(m) = \psi_{\C T}\big(\prod_{g\in \C G'} q_{g0(-g)}\big) = \big(\prod_{g\in \C G'}a_1^{[g]}\big)\big(\prod_{g\in \C G'}a_3^{[-g]}\big)(a_2^{[0]})^n.$$
Now if $q_{g'h'k'}$ is a factor of $m$ then we must have $\psi_{\C T}(q_{g'h'k'}) = a_1^{[g]} a_2^{[0]} a_3 ^{[k]}$ for some $g, k\in\C G$. Thus, $h' \in [0]$ so $h' = 0$ and $g' \in [g]$, and since $g' +0 + k' = 0$ we must have $k' = -g'$. Now if $g' = g$ then $q_{g'h'k'} = q_{g0(-g)}$ appears as a factor in the first monomial of $f$. If this holds for all factors of $m$ then we have $f = 0$. If not, then for some factor $q_{g'0(-g')}$ we must have $g' \not\in\C G$, so we have $\text{in}_\omega(f) = m$.

 \end{proof} 
 
 \begin{remark}\label{rem:treedecomp}
 Notice that in the proof of Corollary \ref{cor:decomp}, we made no assumptions on the number of reticulation vertices of $\mathcal N$. Since a binary phylogenetic tree can be thought of as a phylogenetic network with no reticulation vertices, the result also holds for binary phylogenetic trees. Explicitly, we have that if $\C T$ is a binary phylogenetic tree with an interior edge $e$, with trees $\C T_+$ and $T_-$ obtained by cutting $e$, then we have
 $$\dim V^{(G,B)}_{\C T} = \dim V^{(G,B)}_{\C T_+} + \dim V^{(G,B)}_{\C T_-} - |B\cdot G|.$$
 \end{remark}

%
%
\section{Sunlet Networks and Trees}\label{sec:sunlet} 

If $\C N$ is a level-1 phylogenetic network, then $\C N$ can be decomposed along cut edges into a series of phylogenetic trees and sunlet networks. As shown in the previous section, the ideal structure of the corresponding varieties is given by the toric fiber product. It therefore remains for us give dimension results for the varieties corresponding to trees and sunlet networks. For an unrooted phylogenetic tree $\C T$, the dimension of the variety $V_{\C T}^{(G,B)}$ is well known. We give a proof using the dimension result of the previous section.
 \begin{lemma}\label{lemma:treedim}
  If $\C T$ is a binary phylogenetic tree with $m$ edges and no degree-2 vertices under a group-based evolutionary model $(G,B)$, then the affine dimension of $V^{(G,B)}_{\C T}$ is given by 
   \[ \dim V^{(G,B)}_{\C T} = lm+1. \] 
 \end{lemma}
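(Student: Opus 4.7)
My plan is to induct on the number of edges $m$, using Remark \ref{rem:treedecomp} to break $\C T$ at an internal edge. Since a binary unrooted tree with no degree-$2$ vertices satisfies $m = 2n-3$, the smallest such tree is the $3$-claw $T_3$ with $n=3$ leaves and $m=3$ edges; this is the base case, and every $\C T$ with $m \geq 5$ contains at least one internal edge.

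For the inductive step, given $m \geq 5$, pick an internal edge $e$ of $\C T$ and let $\C T_+, \C T_-$ denote the two smaller binary trees (with $m_+$ and $m_-$ edges) obtained by cutting $\C T$ at $e$. Neither piece acquires a degree-$2$ vertex, both satisfy $m_\pm \geq 3$, and $m_+ + m_- = m+1$. By Remark \ref{rem:treedecomp} and the inductive hypothesis,
\[ \dim V_{\C T}^{(G,B)} = (lm_+ + 1) + (lm_- + 1) - (l+1) = l(m_+ + m_-) + 1 - l = lm + 1. \]

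The real work is in the base case $\C T = T_3$, where Remark \ref{rem:treedecomp} does not apply. Here the parameterization $\phi_{T_3}$ is a pure monomial map: each coordinate $q_{g_1 g_2 g_3}$ is the single monomial $a_1^{g_1} a_2^{g_2} a_3^{g_3}$, so $V_{T_3}^{(G,B)}$ is an affine toric variety whose dimension equals the rank of the $3(l+1) \times |G|^2$ exponent matrix $A$ (which agrees with the tropical Jacobian of Lemma \ref{lemma:draisma}). I would show $\dim \ker A = 2$, and therefore $\rank A = 3l + 1$, as follows: a kernel vector $(c_i^g)$ satisfies $c_1^{g_1} + c_2^{g_2} + c_3^{g_3} = 0$ whenever $g_1 + g_2 + g_3 = 0$. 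Specializing $g_3 = 0$ and then $g_2 = 0$ yields $c_2^{g} = -c_1^{-g} - c_3^{0}$ and $c_3^{g} = -c_1^{-g} - c_2^{0}$; substituting these back into the general constraint reduces it to $f(a+b) = f(a) + f(b)$ for all $a, b \in G$, where $f(g) := c_1^{g} - c_1^{0}$. Thus $f$ defines a $B$-invariant group homomorphism $G \to (\mathbb{C},+)$, and since $G$ is finite (hence torsion) while $(\mathbb{C},+)$ is torsion-free, $f \equiv 0$. Combined with the single relation $c_1^{0} + c_2^{0} + c_3^{0} = 0$ obtained by setting $g_1 = g_2 = g_3 = 0$, this identifies $\ker A$ with the $2$-dimensional space of uniform rescalings $(\alpha_1, \alpha_2, \alpha_3)$ with $\sum \alpha_i = 0$. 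The main obstacle is precisely this rigidity in the base case: one must rule out any nontrivial $B$-invariant ``infinitesimal character'' of $G$, after which the induction closes cleanly.
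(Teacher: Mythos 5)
Your argument is correct, and its inductive skeleton is exactly the paper's: both proofs cut at an internal edge and invoke Remark \ref{rem:treedecomp} (the paper inducts on the number of interior edges rather than on $m$, an immaterial difference). The genuine divergence is the base case. The paper simply cites \cite[Proposition~5.2]{MRC2} for $\dim V^{(G,B)}_{T_3}=3l+1$, whereas you prove it directly: since $\phi_{T_3}$ is a monomial map, the dimension is the rank of the $3(l+1)\times|G|^2$ exponent matrix, and you compute the left kernel by showing that any solution of $c_1^{[g_1]}+c_2^{[g_2]}+c_3^{[g_3]}=0$ forces the function $f(g)=c_1^{[g]}-c_1^{[0]}$ to be a homomorphism $G\to(\mathbb C,+)$, hence zero because $G$ is torsion; this pins the left kernel down to the $2$-dimensional space $\{(\alpha_1,\alpha_2,\alpha_3):\sum\alpha_i=0\}$ and gives $\rank A=3(l+1)-2=3l+1$. (The $B$-invariance of $f$ is automatic and plays no role, but that does no harm.) What your route buys is self-containedness, and it also scales: the identical kernel computation for the $n$-claw tree $T_n$ gives left kernel of dimension $n-1$ and rank $nl+1$, which is precisely the extension of the base case that the paper's remark after this lemma says would be needed to handle non-binary trees. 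What the paper's route buys is brevity. I see no gap in your argument.
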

 
 \begin{proof}
 Denote by $t$ the number of interior edges of $\C T$. If $t=0$ then $\C T$ is the 3-claw tree. This has dimension $3l + 1$ by \cite[Proposition~5.2]{MRC2}, so the proposition is true in this case.
 
 Now suppose $\C T$ is a binary phylogenetic tree with $m$ edges and $t > 0$ interior edges. Let $e$ be an interior edge and let $\C T_+$ and $\C T_-$ be the trees obtained by cutting at $e$. If $m_+$ and $m_-$ are the number of edges of $\C T_+$ and $\C T_-$ respectively, we have $m = m_+ + m_- - 1$.  Furthermore, the number of interior edges of $\C T_+$ and of $\C T_-$ is less than $t$, so by induction we have $\dim V^{(G,B)}_{\C T_+} = lm_+ + 1$ and $\dim V^{(G,B)}_{\C T_-} = lm_- + 1$. It follows from Remark \ref{rem:treedecomp} that
 $$ \dim V^{(G,B)}_{\C T} = (lm_+ + 1) + (lm_- + 1) - (l+1) = lm + 1.$$
 \end{proof}
 Observe that one could extend the above proof to give the analogous dimension result for any phylogenetic tree with no degree $2$ vertices. To do so, the base-case for the induction must be extended to cover all claw trees $T_n$ with $n \geq 3$. That is, one must show that for each $T_n$ with corresponding ideal $I_n$, there exists a maximal independent set modulo $\text{in}_\omega(I_n)$ that contains at least one element of multidegree $a_i$ for each $a_i\in\C A$, as in the proof of Corollary \ref{cor:decomp}. 
 
 The remainder of this section is dedicated to giving the dimension of the varieties corresponding to sunlet networks. As we have already seen, the variety associated to a phylogenetic network $\C N$ is equal to the variety associated to the corresponding contracted semi-directed network, so from this point onwards we will only consider contracted semi-directed networks.  First we will give an upper bound on the dimension.
 
 
   \begin{proposition}\label{prop:upper}
  If $\C N$ is a contracted semi-directed phylogenetic network with only disjoint cycles and with $m$ edges then
   \[ \dim V_{\C N}^G \leq lm + 1. \]
 \end{proposition}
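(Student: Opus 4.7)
The plan is to exhibit an $(m-1)$-dimensional family of continuous symmetries of the parameterization $\phi_{\C N}$ that land inside its fibers, and then apply the fiber dimension theorem to conclude
\[
\dim V_{\C N}^G \;\leq\; m(l+1) - (m-1) \;=\; lm+1.
\]

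The first ingredient is the torus scaling. The group $(\mathbb{C}^*)^m$ acts on $\mathbb{C}^{m(l+1)}$ by $(\lambda \cdot v)_e^g = \lambda_e\, a_e^g$, and under this action each tree summand $\phi_{\C T_\sigma}$ of $\phi_{\C N} = \sum_\sigma \phi_{\C T_\sigma}$ scales by $\prod_{e \in \C T_\sigma}\lambda_e$. For $\phi_{\C N}(v)$ to scale as a whole one needs this product to be independent of $\sigma$; because the cycles are disjoint, this reduces to the constraints $\lambda_{e_i^+}=\lambda_{e_i^-}$ for each cycle $i$, carving out a subtorus of dimension $m-c$. Modding out by the single $\mathbb{C}^*$ that multiplies $\phi_{\C N}(v)$ uniformly leaves an $(m-c-1)$-dimensional subgroup fixing $\phi_{\C N}(v)$ pointwise.

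The second, less routine ingredient is an extra translation symmetry for each cycle $i$ with reticulation edges $e_i^+, e_i^-$:
\[
a_{e_i^+}^0 \;\longmapsto\; a_{e_i^+}^0 + t_i, \qquad a_{e_i^-}^0 \;\longmapsto\; a_{e_i^-}^0 - t_i,
\]
with all other parameters held fixed. A summand $\phi_{\C T_\sigma}$ is touched only when $\xi_\sigma(e_i^\pm)=0$; because the set $L_i$ of leaves descending from the reticulation vertex of cycle $i$ is the same in every $\C T_\sigma$, this condition is just ``the labels in $L_i$ sum to $0$.'' When it holds, pairing each $\sigma$ having $\sigma_i=+$ with the $\sigma$ obtained by flipping to $\sigma_i=-$ gives two trees whose induced labels $\xi_\sigma(e)$ agree on every other edge: on the non-reticulation edges of cycle $i$ they differ by the now-vanishing subtree sum, while disjointness of cycles prevents any change elsewhere. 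Hence the $+t_i$ contribution $\phi_{\C T_{\sigma^+}}/a_{e_i^+}^0$ is cancelled by the $-t_i$ contribution $\phi_{\C T_{\sigma^-}}/a_{e_i^-}^0$, and the deformation preserves $\phi_{\C N}$. This supplies $c$ additional symmetry directions, one per cycle.

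Finally, I would check linear independence at a generic $v$. A scaling vector field has $\partial_{a_e^g}$-coefficient proportional to the parameter value $a_e^g$, whereas each translation direction is the pure combination $\partial_{a_{e_i^+}^0} - \partial_{a_{e_i^-}^0}$; asking a translation to lie in the scaling span forces the scaling coefficient $c_e$ to vanish on $\partial_{a_{e_i^+}^g}$ for $g\neq 0$ but to be nonzero on $\partial_{a_{e_i^+}^0}$, a contradiction. The two families therefore span a $(m-c-1)+c = m-1$ dimensional subspace of the generic fiber of $\phi_{\C N}$, and the dimension bound follows. The main obstacle I anticipate is the cancellation argument of the second step: one must carefully track, using disjointness of cycles and the hypothesis $\sum_{\ell \in L_i} g_\ell = 0$, that flipping $\sigma_i$ alters $\xi_\sigma$ on precisely the non-reticulation cycle edges and by precisely the vanishing amount, so that the term-by-term matching is exact.
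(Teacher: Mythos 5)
Your proof is correct and is essentially the paper's own argument in different clothing: your $(m-c-1)$-dimensional scaling subtorus is the paper's observation that the parameterization is multihomogeneous (linear in each non-reticulation edge's parameters and in the combined $2l+2$ parameters of each reticulation pair), and your translation symmetry is exactly the paper's observation that every coordinate depends on $a_{e_i^+}^0$ and $a_{e_i^-}^0$ only through their sum. Both routes identify the same $m-1$ redundant parameter directions and give the bound $lm+1$.
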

 \begin{proof}
 Let $c$ denote the number of cycles in $\C N$. The affine variety $V^G_{\C N}$ is parameterized by $(l+1)m$ parameters, but the map is multihomogeneous. It is linear in the set of parameters for each non-reticulation edge, and in the union of the parameters for the two reticulation edges of each cycle.  Thus we may think of the parameterization map as a projective map
$$ \mathbb{P}^l\times\cdots\times\mathbb{P}^l \times \mathbb{P}^{2l+1} \times\cdots\times \mathbb{P}^{2l+1} \dashrightarrow \mathbb{P}^{(l+1)^{n-1}}, $$
where $\mathbb{P}^l$ appears $m-2c$ times (once for each non-reticulation edge), and $\mathbb{P}^{2l+1}$ appears $c$ times (once for each cycle). We use a dashed arrow to indicate that in order for the map to be well-defined we may need to take a subset of the domain. It follows that  $V^G_{\C N}$ has projective dimension at most $lm + c$, and thus its affine dimension is at most $lm +c + 1$.

Now consider $v = \phi_{\C N}(w) \in \mathbb{C}^{|G|^{n-1}}$, where $w\in\mathbb{C}^{m(l+1)}$. For each pair of reticulation edges $e_1,e_2$, a consistent leaf labelling of $\C N$ assigns both edges the same label.  For each consistent leaf labelling of $\C N$ in which they are labelled 0, the edges along the cycle all receive the same labels in both trees, so the coordinate of $v$ corresponding to the consistent leaf labelling has a factor of $w_{e_1}^0 + w_{e_2}^0$.  For every consistent leaf labelling in which they are not labelled 0, the coordinate does not depend on $w_{e_1}^0$ or $w_{e_2}^0$.  Therefore the map depends only on the sum $w_{e_1}^0 + w_{e_2}^0$. This reduces the number of independent parameters by $c$, so the affine dimension of $V^G_{\C N}$ is at most $lm + 1$.
 \end{proof}
  
 \subsection{General group-based models}
 
First, we restrict our attention to sunlet networks under general group-based models of evolution, i.e., those where the group $B$ consists only of the identity automorphism. We will deal with the case $G=\mathbb{Z}/2\mathbb{Z}$ separately.
 
  \begin{proposition}\label{prop:4SN}
Let $\C N$ be the $n$-sunlet network with $n \geq 4$ and let $G$ be an abelian group with $|G| = l + 1 > 2$. Then
  \[ \dim V^G_{\C N} = l(2n -1) + 1. \]
 \end{proposition}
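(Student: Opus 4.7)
By Lemma \ref{lemma:contractedNetwork} I may replace $\C N$ by its contracted version, which has $m=2n-1$ edges; the upper bound $\dim V_{\C N}^G \le l(2n-1)+1$ is then immediate from Proposition \ref{prop:upper}. For the matching lower bound I would apply the tropical Jacobian technique of Lemma \ref{lemma:draisma}. Writing $\phi_{\C N}(q_\mathbf{g}) = \phi_{\C T_+}(q_\mathbf{g}) + \phi_{\C T_-}(q_\mathbf{g})$ for the two spanning trees obtained by deleting one of the two reticulation edges $e_\mp$, the column of $A_\lambda$ indexed by a consistent leaf labelling $\mathbf{g}$ is whichever of the two exponent vectors $\alpha_\pm(\mathbf{g})$ minimizes $\langle\lambda,\alpha_\pm(\mathbf{g})\rangle$. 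The plan is to exhibit a weight $\lambda$ at which $\Trop(\phi_{\C N})$ is differentiable (no ties in any coordinate) and at which the resulting columns span an $(l(2n-1)+1)$-dimensional subspace of $\mathbb{R}^{(2n-1)(l+1)}$.

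\textbf{Rank count.} I would choose $\lambda$ so that $\C T_+$ wins on most leaf labellings and $\C T_-$ wins only on a small, carefully engineered family. By Lemma \ref{lemma:treedim}, applied after suppressing the degree-$2$ vertex created in $\C T_+$ when $e_-$ is removed, the $\C T_+$-columns alone span a subspace of dimension $l(2n-3)+1 = \dim V_{\C T_+}^G$, so I need $\C T_-$ to contribute $2l$ further independent directions. These come from two features distinguishing $\C T_-$-monomials from $\C T_+$-monomials: first, the $l$ parameters $a_{e_-}^g$ for $g\in \C G \setminus \{0\}$ appear only in $\C T_-$-monomials, so they produce entries in the $e_-$-block that vanish on every $\C T_+$-column; second, the cycle edge adjacent to the reticulation vertex on the $e_-$-side and its opposite leaf edge share the same exponent in every $\C T_+$-monomial but take distinct exponents in $\C T_-$-monomials, yielding $l$ decoupling directions as the reticulation label ranges over $\C G\setminus\{0\}$. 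A single $\C T_-$-winning labelling for each of these $2l$ directions should suffice.

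\textbf{Construction and main obstacle.} The technical heart of the argument is producing a single $\lambda$ that simultaneously breaks all ties across the $|G|^{n-1}$ leaf labellings, realises the prescribed winner pattern, and yields $\C T_-$-columns that genuinely extend the $\C T_+$-span. My approach would be to take $\lambda$ of the form ``edge-element base weight plus a small generic perturbation'', reducing verification to a finite check on signs of linear combinations of the base weights. The hypothesis $|G|\ge3$ is essential: with $|G|=2$ there is only one nonzero group element, so one cannot produce $l$ independent $\C T_-$-columns in the $e_-$-block, reflecting the exceptional behaviour of CFN sunlets recorded in Theorem \ref{thm:zmod2z}. The hypothesis $n\ge 4$ similarly guarantees that the decoupling direction uses genuinely distinct cycle and leaf edges. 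I expect the bulk of the work to lie in this construction and in verifying transversality: the $2l$ chosen $\C T_-$-columns must be independent of each other, and together independent from the span of the $\C T_+$-columns.
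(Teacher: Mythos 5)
Your overall strategy---contract the sunlet, get the upper bound from Proposition \ref{prop:upper}, and get the lower bound from Lemma \ref{lemma:draisma} by choosing a weight $\lambda$ whose tropical Jacobian splits into a tree block plus extra directions contributed by the second spanning tree---is exactly the strategy of the paper, and your bookkeeping ($l(2n-3)+1$ from the $\C T_+$-columns plus $2l$ from the $\C T_-$-columns) is arithmetically consistent with the paper's ($l(2n-5)+1$ from the $g_1=0$ columns plus $4l$ from a block $B$). But as written the proposal is a plan rather than a proof: essentially all of the content of the proposition lies in the construction you defer.

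Three concrete gaps. First, no $\lambda$ is exhibited and the winner pattern is not verified; the paper's choice ($\lambda_{n+2}^0=-2$, $\lambda_{n+1}^g=1$ for all $g$, all other entries $0$) makes $\C T_2$ win exactly when $g_1\neq 0$ and $g_2=0$, and checking this is short but necessary. Second, your claim that the $\C T_+$-\emph{winning} columns alone span a space of dimension $l(2n-3)+1$ does not follow from Lemma \ref{lemma:treedim}: that lemma gives the rank of the exponent matrix over \emph{all} consistent leaf labellings, whereas your winner pattern retains only a proper subset of them. The paper sidesteps this by using only the $g_1=0$ columns, which realize the full exponent matrix of the $(n-1)$-leaf tree $\C S$ (rank $l(2n-5)+1$), and then recovering the remaining $2l$ of ``tree rank'' explicitly inside its block $B$ (the subspace $X_2$), which requires a genuine computation including a separate case for elements with $g=-g$. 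Third, the assertion that ``a single $\C T_-$-winning labelling for each of these $2l$ directions should suffice'' is too optimistic: each such column is not the pure direction $E_{n+1}^{g_1}$ but the vector $(E_n^{g_n}-E_n^{g_1+g_n})+E_{n+1}^{g_1}$, so isolating $E_{n+1}^{g_1}$ requires summing over all $g_n\in G$, and the independence of the resulting directions from the tree block must then be checked by explicit linear algebra. Until these steps are carried out the lower bound is not established, so the argument as it stands is an outline of the paper's proof rather than a proof.
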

 \begin{proof}
Using Lemma \ref{lemma:contractedNetwork}, we may replace $\C N$ by its contraction. This network has  $m = 2n - 1$ edges, so by Proposition \ref{prop:upper} we have $\dim V^G_{\C N} \leq l(2n -1) + 1$. 

Label the edges and vertices as in Figure \ref{fig:sunlet}, and let $\C T_1$ and $\C T_2$ be the two trees got by removing the edges $e_{n+1}$ and ${e_1}$ respectively. The parameterization map for $\C N$ is given by
\begin{equation} \label{eqn:sunletParam}
q_{g_1g_2\cdots g_n} = a_2^{g_2}\cdots a_n^{g_n}\big(a_1^{g_1}a_{n+2}^{g_1 + g_2}\cdots a_{2n-1}^{g_n} + a_{n+1}^{g_1} a_{n+2}^{g_2}\cdots a_{2n-1}^{g_1+g_n}\big),
\end{equation}
where $g_1, \ldots, g_n$ is a consistent leaf labelling, the first monomial corresponds to $\C T_1$, and the second monomial corresponds to $\C T_2$. With notation as in Section \ref{subsec:tropical}, our aim will be to find $\lambda$ that maximises  $\rank_\mathbb{R} A_\lambda$, which by Lemma \ref{lemma:draisma} gives a lower bound on $\dim V^G_{\C N}$.
  
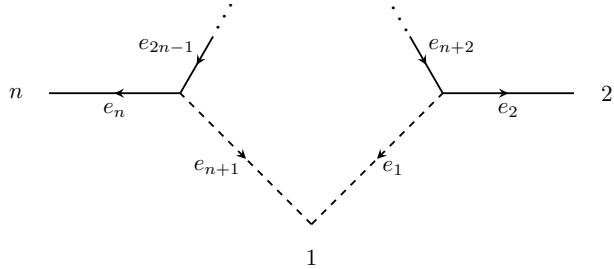
\begin{figure}[h!]
 \centering
 \resizebox{0.65\textwidth}{!}{
\begin{tikzpicture}
[every node/.style={inner sep=0pt},
                    every path/.style={thick},   
decoration={markings, 
    mark= at position 0.5 with {\arrowreversed{stealth}}
    }
]
                    
\draw [postaction={decorate}] (0,2) -- (2,2) node[midway,below=4] {$e_{n}$};
\draw [postaction={decorate}] (2,2)  -- (2.5 ,2.866)  node[midway,above=8,left=0.5] {$e_{2n-1}$};
\draw [postaction={decorate}] (8,2) -- (6,2) node[midway,below=4] {$e_{2}$};
\draw [postaction={decorate}] (6,2)  -- (5.5 , 2.866)  node[midway,above=8,right=0.5] {$e_{n+2}$};
\draw[dashed,postaction={decorate}] (4,0)  -- (2,2) node[midway,below=4,left=2] {$e_{n+1}$};
\draw[dashed,postaction={decorate}] (4,0)  -- (6,2) node[midway,below=4,right=2] {$e_1$};
\node[rotate=60] (v4) at (2.7,3.2) {{\Large $\cdots$}};
\node[rotate=-60] (v4) at (5.35,3.1) {{\Large $\cdots$}};

\node (v3) at (4,-0.5) {$1$};
\node (v3) at (8.5,2) {$2$};
\node (v3) at (-0.5,2) {$n$};

\end{tikzpicture}
}
\caption{A contracted $n$-sunlet network with $n \geq 4$. Arrows indicate the orientation used for assigning a consistent edge labelling from a consistent leaf labelling.}
\label{fig:sunlet}
\end{figure}
 
Let $\{ E_i^g\ |\ g\in G, i = 1,\ldots, 2n-1 \}$ be the standard basis of $\mathbb{R}^{m(l+1)}$, indexed by the edges of $\C N$ and elements of $G$, and consider the dual vector space $V=(\mathbb{R}^{m(l+1)})^*$ with the dual basis. Choose $\lambda \in V$ such that $\lambda_{n+2}^0 = -2$, $\lambda_{n+1}^g = 1 $ for all $g\in G$, and all other entries are 0. Let $g_1, \ldots, g_n$ be a consistent leaf labelling of $\C N$. Then the corresponding column of $A_\lambda$ has the following properties: 
 \begin{itemize}
 \item If $g_1 = 0$, then the monomial from $\C T_1$ is chosen.
 \item If $g_1 \neq 0$ and $g_2 = 0$, then the monomial from $\C T_2$ is chosen.
 \item In all other cases the monomial from $\C T_1$ is chosen.
 \end{itemize}
We will show that $\rank A_\lambda \geq l(2n -1) + 1$ to give the lower bound.
 
Consider the submatrix given by consistent leaf labellings where $g_1 = 0$, so that each column is an exponent vector coming from a monomial in $\C T_1$. Perform column operations on $A_\lambda$ so that the first $(l+1)^{n-2}$ columns are given by this submatrix.  Let $\C S$ be the tree with $n-1$ leaves obtained from $\C N$ by deleting the reticulation vertex.  The consistent leaf labellings of $\C N$ in which $g_1 = 0$ give all of the consistent leaf labellings of $\C S$.  The tropical tree model for $\C S$ has dimension $l(2n-5) + 1$ by Lemma \ref{lemma:treedim}, so the submatrix of $A_\lambda$ consisting only of the columns where $g_1 = 0$ has rank $l(2n-5)+1$ (note that since $\C S$ has a monomial parameterization, for all choices of $\lambda$ we have that $\dim S$ is equal to the rank of this submatrix).

We make the following observations about this submatrix.  First, since the monomial from $\C T_1$ is always chosen, the entries corresponding to the parameters $a_{n+1}^g$ are $0$ for all $g \in G$. Similarly, for the edge $e_1$, only the parameter $a_1^0$ appears in the parameterization of $q_{g_1 g_2\cdots g_n}$, so the entries corresponding to the parameters $a_{1}^g$ are $0$ for all $g \in G$ except $g=0$. Next, observe that in this submatrix, the row corresponding to the parameter $a_2^g$ is equal to the row corresponding to the parameter $a_{n+2}^g$ for all $g \in G$ and similarly the row corresponding to the parameter $a_n^g$ is equal to the row corresponding to the parameter $a_{2n-1}^g$ for all $g \in G$. This is because the label of e.g. the edge $e_{n+2}$ is $g_1 + g_2 = 0 + g_2 = g_2$, which is also the label of the edge $e_2$. We perform row operations on $A_\lambda$ so that for each $g \in G$,  the first $(l+1)^{n-2}$ entries of the rows corresponding to the parameters  $a_{2}^g$  and  $a_{n}^g$ are zero, by subtracting the rows $a_{n+2}^g$ and $a_{2n-1}^g$ respectively. Now we perform further row operations to swap rows and obtain a matrix of the following form, where the upper left block is a $(4l + 3)\times (l+1)^{n-2}$ matrix consisting of zeros,

\begin{equation*}
	A_\lambda = \left[
	\begin{array}{c|c}
	0& B\\ \hline
	A_\lambda '& *
	\end{array}\right],
  \end{equation*}
and $\rank A_\lambda ' = l(2n -5) + 1$. It follows that $\rank A_\lambda \geq l(2n -5) + 1 + \rank B$, so it is sufficient to show that $\rank B \geq 4l$.

The columns of $B$ correspond to consistent leaf labellings $g_1, \ldots, g_n$ with $g_1 \neq 0$. Recall that $\lambda$ was such that if $g_2 = 0$ then the monomial from $\C T_2$ is chosen, and otherwise the monomial from $\C T_1$ is chosen. The rows of $B$ correspond to the parameters $a_1^g$ for $g\neq 0$, and $a_{n+1}^g, a_{2}^g$ , and $a_{n}^g$ for all $g\in G$. However, we performed row operations on the rows corresponding to $a_{2}^g$ and $a_{n}^g$, so for each column of $B$ the coefficient of the standard basis vector $E_2^g$ is given by the exponent of $a_2^g$ minus the exponent of $a_{n+2}^g$, and the coefficient of $E_n^g$ is given by the exponent of $a_n^g$ minus the exponent of $a_{2n-1}^g$ in the corresponding monomial from the parameterization (\ref{eqn:sunletParam}). Thus the columns of B are given by 
\begin{equation}\label{eqn:B1}
(E_n^{g_n} - E_n^{g_1 + g_n}) + E_{n+1}^{g_1},
\end{equation}
if $g_2 = 0$ (so the monomial comes from $\C T_2$), and 
\begin{equation}\label{eqn:B2}
(E_2^{g_2} -E_2^{g_1 + g_2}) + E_{1}^{g_1},
\end{equation}
otherwise (so the monomial comes from $\C T_1$), where $g_1, \ldots, g_n$  is a consistent leaf labelling with $g_1 \neq 0$. Note that since $n \geq 4$, we can find a consistent leaf labelling $g_1, \ldots, g_n$ for any choice of $g_1$, $g_2$, $g_n$. Denote by $X_1$ the vector space spanned by all the vectors of the form in equation (\ref{eqn:B1}). We have
$$\sum_{g\in G} \big((E_n^{g} - E_n^{g_1 + g}) + E_{n+1}^{g_1}\big) = (l+1)E_{n+1}^{g_1} \in X_1,$$
so $E_{n+1}^{g_1} \in X_1$ for all $g_1 \neq 0$. It follows immediately that for a fixed $g_n \in G$, we have $E_n^{g_n} - E_n^{g_1 + g_n}\in X_1$ for all $g_1 \neq 0$, and thus $\dim X_1 \geq 2l$.

Next denote by $X_2$ the vector space spanned by all the vectors of the form in equation (\ref{eqn:B2}). Using that $\sum_{g\in G} E_2^g - E_2^{g_1 + g} = 0$, we see that
$$\sum_{g \in G\setminus\{0\}}\big((E_2^g - E_2^{g_1 +g}) + E_1^{g_1}\big) = lE_1^{g_1} + E_2^{g_1} - E_2^0 \in X_2,$$
for each $g_1\neq 0$. Now fix $g\in G\setminus\{0\}$ and let $g_2 = g$, and $g_1 = -g$, so that $E_1^{-g} + E_2^g - E_2^0 \in X_2$. Then we have
$$(lE_1^g + E_2^g - E_2^0) - (E_1^{-g} + E_2^g - E_2^0) = lE_1^g - E_1^{-g} \in X_2.$$
Now if $g = -g$ then we have $E_1^g \in X_2$. If not, by swapping $g_1$ and $g_2$ we have $lE_1^{-g} - E_1^g \in X_2$, so $(l-1)E_1^g - (l-1)E_1^{-g} \in X_2$. Then $lE_1^g - lE_1^{-g} \in X_2$, and subtracting $lE_1^{-g} - E_1^g$ gives $E_1^g \in X_2$, for all $g\in G\setminus\{0\}$. As before, it follows that for a fixed $g_2$ we have $E_2^{g_2} - E_2^{g_1+g_2}\in X_2$ for all $g_1 \in G\setminus\{0\}$, so $\dim X_2 \geq 2l$. It follows that $\rank B \geq 4l$. 
  \end{proof}
  
  Next we deal with the case $G = \mathbb{Z}/2\mathbb{Z}$. The expected dimension for $n$-sunlets here is $2n$. However, if $n = 3$ then we only have $4 < 2n$ consistent leaf labellings of $\C N$, so in this case the expected dimension cannot be reached. When $n=4$ we have $8 = 2n$ consistent leaf labellings, however, in this case $\dim V_{\C N}^{G} = 7$. This can be shown by direct computation.
 
 \begin{proposition}\label{prop:5SN}
Let $\C N$ be the $n$-sunlet network with $n \geq 5$ and let $G = \mathbb{Z}/2\mathbb{Z}$. Then
  \[ \dim V^G_{\C N} = 2n.\]
 \end{proposition}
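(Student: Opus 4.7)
The upper bound $\dim V^G_{\C N}\le 2n$ is immediate from Proposition \ref{prop:upper} applied with $l=1$, $m=2n-1$, and $c=1$. For the lower bound my plan is to apply Lemma \ref{lemma:draisma} following the broad strategy of Proposition \ref{prop:4SN}, but with a more discriminating weight vector $\lambda$. The key obstruction to simply re-using that proof is that its final step (showing $\rank B\ge 4l$) ultimately divides by $l-1$, which vanishes for $G=\mathbb{Z}/2\mathbb{Z}$; with the $\lambda$ used there one only obtains three independent columns in $B$, giving $\rank A_\lambda\ge 2n-1$ rather than the needed $2n$.

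The fix I propose is to use the edge labelling of Figure \ref{fig:sunlet} (after replacing $\C N$ by its contraction via Lemma \ref{lemma:contractedNetwork}) and set $\lambda_{n+1}^g=1$ for all $g\in G$, $\lambda_{n+2}^0=-2$, $\lambda_{n+3}^0=-4$, and all remaining entries zero. Because $n\ge 5$, the edges $e_{n+3}$ and $e_{2n-1}$ are distinct, so these weights act on independent coordinates. A direct computation of the weights of the two monomials in (\ref{eqn:sunletParam}) shows that $\C T_1$ is uniquely selected at every column with $g_1=0$; for $g_1=1$ the selected tree is $\C T_2$ when $g_2=g_3$ and $\C T_1$ when $g_2\ne g_3$, and every decision is strict, so $\Trop\phi_{\C N}$ is differentiable at $\lambda$.

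I would then mirror the linear algebra of Proposition \ref{prop:4SN}. The submatrix of $A_\lambda$ on the $g_1=0$ columns is the exponent matrix of $\phi_{\C T_1}$ restricted to those labellings, which has rank $\dim V^G_{\C S}=2n-4$ by Lemma \ref{lemma:treedim}, where $\C S$ is the tree obtained by deleting the reticulation vertex. Subtracting row $a_{n+2}^g$ from row $a_2^g$ and row $a_{2n-1}^g$ from row $a_n^g$ for each $g\in G$ zeroes out seven rows on the $g_1=0$ block, putting $A_\lambda$ (after swapping rows) into the block form
\[
A_\lambda=\left[\begin{array}{c|c}0 & B\\ \hline A'_\lambda & *\end{array}\right],\qquad \rank A'_\lambda=2n-4,
\]
so it suffices to show $\rank B\ge 4$.

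On the $g_1=1$ columns, the four pairs $(g_2,g_3)$ together with the two values of $g_n$ produce exactly four distinct column vectors of $B$: two $\C T_2$-vectors $v_1=(0,0,1,0,0,1,-1)$ and $v_2=(0,0,1,0,0,-1,1)$ arising from $g_2=g_3$ (distinguished by $g_n$), and two $\C T_1$-vectors $v_3=(1,0,0,1,-1,0,0)$ and $v_4=(1,0,0,-1,1,0,0)$ arising from $g_2\ne g_3$ (distinguished by $g_2$). Their supports $\{3,6,7\}$ and $\{1,4,5\}$ are disjoint and each pair is visibly linearly independent, so $\rank B\ge 4$ and hence $\rank A_\lambda\ge 2n$. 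The main difficulty is precisely this engineering of $\lambda$: the extra weight on $\lambda_{n+3}^0$, which requires $n\ge 5$ so that $e_{n+3}$ and $e_{2n-1}$ are independent edges, is what creates the fourth independent vector that was missing in the $\mathbb{Z}/2\mathbb{Z}$ specialization of Proposition \ref{prop:4SN}.
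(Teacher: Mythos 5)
Your proof is correct and takes essentially the same approach as the paper's: it too obtains the upper bound from Proposition \ref{prop:upper} and the lower bound from Lemma \ref{lemma:draisma} with a weight vector supported on the reticulation edge $e_{n+1}$ and the cycle edge $e_{n+3}$ (there $\lambda_{n+1}^0=\lambda_{n+1}^1=1$, $\lambda_{n+3}^0=2$), followed by the identical block reduction to a rank-$(2n-4)$ tree block. The only difference is that your weights reverse the tie-breaking rule on the $g_1=1$ columns ($\C T_2$ chosen when $g_2=g_3$ rather than $g_2\neq g_3$), but the resulting four independent columns of $B$, namely $E_1^1\pm(E_2^0-E_2^1)$ and $E_{n+1}^1\pm(E_n^0-E_n^1)$, are the same.
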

 \begin{proof}
As before, Proposition \ref{prop:upper} gives the upper bound. Label the edges and vertices as in Figure \ref{fig:sunlet5}, and let $\C T_1$ and $\C T_2$ be the two trees got by removing the edges $e_{n+1}$ and ${e_1}$ respectively. Observe that we have at least one edge on the cycle, e.g. $e_{n+3}$, that is not adjacent to either reticulation edge. We proceed as in Proposition \ref{prop:4SN}, this time choosing $\lambda \in \mathbb{R}^{2m}$ such that $\lambda_{n+1}^0 = \lambda_{n+1}^1 = 1$, $\lambda_{n+3}^0 = 2$. and all other entries are 0. Let $g_1, \ldots, g_n$ be a consistent leaf labelling of $\C N$. Then the corresponding column of $A_\lambda$ has the following properties: 
 \begin{itemize}
 \item If $g_1 = 0$, then the monomial from $\C T_1$ is chosen.
 \item If $g_1 = 1$ and $g_2 + g_3 = 1$, then the monomial from $\C T_2$ is chosen.
 \item If $g_1 = 1$ and $g_2 + g_3 = 0$, then monomial from $\C T_1$ is chosen.
 \end{itemize}
 
  \begin{figure}[h!]
 \centering
  \resizebox{0.65\textwidth}{!}{
\begin{tikzpicture}
[every node/.style={inner sep=0pt},
                    every path/.style={thick},   
decoration={markings, 
    mark= at position 0.5 with {\arrowreversed{stealth}}
    }
]
                    
\draw [postaction={decorate}] (0,2) -- (2,2) node[midway,below=4] {$e_{n}$};
\draw [postaction={decorate}] (2,2)  -- (2.3 ,2.866)  node[midway,above=8,left=0.5] {$e_{2n-1}$};
\draw [postaction={decorate}] (8,2) -- (6,2) node[midway,below=4] {$e_{2}$};
\draw [postaction={decorate}] (6,2)  -- (5.8 , 3.5)  node[midway,above=5,right=2.0] {$e_{n+2}$};
\draw [postaction={decorate}]  (5.8 , 3.5) -- (5,4) node[midway,above=13] {$e_{n+3}$};
\draw [postaction={decorate}] (7, 4.5) -- (5.8 , 3.5) node[midway,below=8,right=0.5] {$e_{3}$};
\draw[dashed,postaction={decorate}] (4,0)  -- (2,2) node[midway,below=4,left=2] {$e_{n+1}$};
\draw[dashed,postaction={decorate}] (4,0)  -- (6,2) node[midway,below=4,right=2] {$e_1$};                  
\node[rotate=70] (v4) at (2.45,3.215) {{\Large $\cdots$}};
\node[rotate=-30] (v4) at (4.8,4.1) {{\Large $\cdots$}};

\node (v3) at (4,-0.5) {$1$};
\node (v3) at (8.5,2) {$2$};
\node (v3) at (7.3,4.7) {$3$};
\node (v3) at (-0.5,2) {$n$};

\end{tikzpicture}
}
\caption{A contracted $n$-sunlet network with $n \geq 5$. Arrows indicate the orientation used for assigning a consistent edge labelling from a consistent leaf labelling.}
\label{fig:sunlet5}
\end{figure}
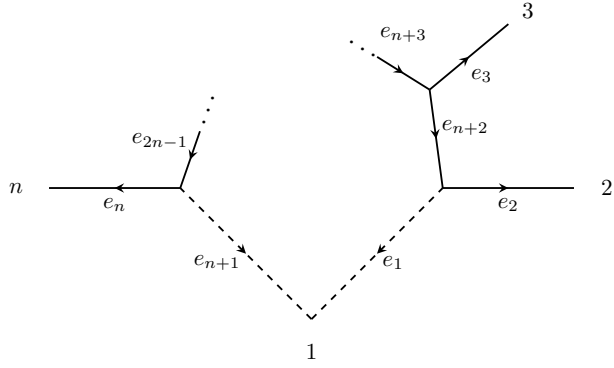

As in the proof of Proposition \ref{prop:4SN}, we perform column operations so that the first $2^{n-2}$ columns are indexed by consistent leaf labellings where $g_1 = 0$, and each column is an exponent vector coming from the corresponding $\C T_1$ monomial. The submatrix consisting of these columns has rank $2n - 4$, and we perform the same row operations as before to give the block triangular matrix
\begin{equation*}
	A_\lambda = \left[
	\begin{array}{c|c}
	0& B\\ \hline
	A_\lambda '& *
	\end{array}\right],
  \end{equation*}
 where the submatrix $B$ is given by rows corresponding to the parameters $a_1^1, a_2^0, a_2^1$, $a_n^0, a_n ^1, a_{n+1}^0$, and $a_{n+1}^1$. However, we performed row operations on the rows corresponding to $a_{2}^g$ and $a_{n}^g$, so for each column of $B$ the coefficient of $E_2^g$ is given be the exponent of $a_2^g$ minus the exponent of $a_{n+2}^g$, and the coefficient of $E_n^g$ is given by the exponent of $a_n^g$ minus the exponent of $a_{2n-1}^g$ for $g = 0,1$. Consider the following columns of $B$. For a consistent leaf labelling with $g_1 = g_n = 1$ and $g_2 = g_3 = 0$, the monomial from $\C T_1$ is chosen, so the labels assigned to $a_n$ and $a_{2n-1}$ are equal, and the labels assigned to $a_2$ and $a_{n+2}$ are not equal. Thus the column is given by  
 $$E_1^1 + E_2^0 - E_2^1.$$
 Next for a consistent leaf labelling with $g_1 = g_2 = g_3 = g_n = 1$, the monomial from $\C T_1$ is chosen so the column is given by
 $$E_1^1 - E_2^0 + E_2^1.$$
 For $g_1 = g_3 = 1$ and $g_2 = g_n = 0$, the monomial from $\C T_2$ is chosen so the column is given by
 $$E_{n+1}^1 + E_n^0 - E_n^1.$$
 Finally, for a consistent leaf labelling with  $g_1 = g_3 = g_n = 1$ and $g_2 = 0$, the monomial from $\C T_2$ is chosen so the column is given by
 $$ E_{n+1}^1 - E_n^0 + E_n^1.$$
These vectors are linearly independent, so $\rank B \geq 4$ and the result follows.
 \end{proof}
 
 \subsection{Group-based models}
 
 In this section, we use our results on general group-based models to obtain the result for all group-based models, following the method of \cite[Lemma~4.2]{MRC2}. Throughout, let $\C N$ be the contracted $n$-sunlet network, so the number of edges $m$ is equal to $2n-1$. Let $G$ be a finite abelian group, and let $B$ be a subgroup of the automorphism group $\rm{Aut}(G)$ with $|B\cdot G| = l +1$. Let $(\mathbb{R}^{|G|m})^*$ have standard basis elements $\varepsilon_{e}^g$ where $g\in G$ and $e \in \C E(\C N)$. Next, pick representatives $g_0 = 0, g_1, \ldots, g_l$ in $G$ for each $B$-orbit, and let $(\mathbb{R}^{(l+1)m})^*$ have standard basis elements $\varepsilon_{e}^{[g_i]}$ for $i=0,\ldots,l$ and $e \in \C E(\C N)$.
 
 Let $p:(\mathbb{R}^{|G|m})^*\longrightarrow(\mathbb{R}^{(l+1)m})^*$ be the map that sums coefficients of the unit vectors for each orbit, i.e.
 $$ \sum_{e \in \C E(\C N)} \sum_{g \in G} c_e^g \varepsilon_e^g \longmapsto  \sum_{e \in \C E(\C N)} \sum_{i = 0}^l (\sum_{g\in [g_i]} c_e^g) \varepsilon_e^{[g_i]},$$
 where  $c_e^g \in \mathbb{R}$. It is clear that $p$ is a surjective, linear map, so $\dim\ker p = (|G| - l - 1)m$. Now consider the parameterizations of $V^{G}_{\C N}$ and $V^{(G,B)}_{\C N}$. For a fixed consistent leaf labelling $\xi$, let $\alpha_1$ and $\alpha_2$ be the exponent vectors of the monomials corresponding to $\C T_1$ and $\C T_2$ respectively, in the parameterization of $V^{G}_{\C N}$. Similarly let $\alpha_1^\prime$ and $\alpha_2^\prime$ be the corresponding monomials in the parameterization of $V^{(G,B)}_{\C N}$. Then $p(\alpha_i) = \alpha_i^\prime$ for $i = 1, 2$. Furthermore, observe that if $\lambda \in \mathbb{R}^{|G|m}$ is such that $\lambda_e^g = \lambda_e^h$ whenever $g$ and $h$ are in the same $B$ orbit for all edges $e\in\C E(\C N)$, then there exists $\lambda^\prime \in \mathbb{R}^{(l+1)m}$ satisfying $\lambda = \lambda ' \circ p$ (where we are considering $\lambda^\prime$ as an element of the dual space of $(\mathbb{R}^{(l+1)m})^*$).
 
 \begin{proposition}\label{prop:reduction}
With the notation as above, let $\lambda \in \mathbb{R}^{|G|m}$ be such that $\lambda_e^g = \lambda_e^h$ whenever $g$ and $h$ are in the same $B$-orbit, for all $e \in \C E_{\C N}$. Then there exists $\lambda^\prime \in \mathbb{R}^{(l+1)m}$ such that
\[ p \circ A_\lambda = A_{\lambda^\prime}. \]
 \end{proposition}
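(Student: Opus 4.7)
The plan is to construct $\lambda'$ explicitly as the weight induced by $\lambda$ on the orbit-indexed basis, and then verify the matrix equality column by column. Since $\lambda$ is constant on each $B$-orbit by hypothesis, define $\lambda' \in (\mathbb{R}^{(l+1)m})^*$ by setting $(\lambda')^{[g_i]}_e := \lambda^{g_i}_e$ for each orbit representative $g_i$ and each edge $e \in \mathcal{E}(\mathcal{N})$. A direct check on basis vectors $\varepsilon_e^g$ shows that $\langle \lambda, \varepsilon_e^g \rangle = (\lambda')^{[g]}_e = \langle \lambda', p(\varepsilon_e^g) \rangle$, so $\lambda = \lambda' \circ p$ as linear functionals on $(\mathbb{R}^{|G|m})^*$.

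Next, I would fix a consistent leaf labelling $\xi = (g_1, \ldots, g_n)$. Let $\alpha_1(\xi), \alpha_2(\xi)$ be the exponent vectors of the two monomials of \eqref{eqn:sunletParam} appearing in the parameterization of $V^G_{\mathcal{N}}$ corresponding to the trees $\mathcal{T}_1, \mathcal{T}_2$, and let $\alpha_1'(\xi), \alpha_2'(\xi)$ be the analogous exponent vectors for the $(G,B)$-parameterization of $V^{(G,B)}_{\mathcal{N}}$. By construction $p(\alpha_i(\xi)) = \alpha_i'(\xi)$ for $i = 1,2$, so combining this with $\lambda = \lambda' \circ p$ gives
\[ \langle \lambda, \alpha_i(\xi) \rangle = \langle \lambda', p(\alpha_i(\xi)) \rangle = \langle \lambda', \alpha_i'(\xi) \rangle. \]
Consequently the two $\mathrm{argmin}$ problems defining the $\xi$-column of $A_\lambda$ and the $\xi$-column of $A_{\lambda'}$ have identical objective values. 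Choosing the same tie-breaking convention, an index $i^*$ minimizes $\langle \lambda, \alpha_i(\xi) \rangle$ if and only if it minimizes $\langle \lambda', \alpha_i'(\xi) \rangle$. Applying $p$ to the $\xi$-column of $A_\lambda$ then yields $p(\alpha_{i^*}(\xi)) = \alpha_{i^*}'(\xi)$, which is precisely the $\xi$-column of $A_{\lambda'}$. Running over all $\xi$ gives $p \circ A_\lambda = A_{\lambda'}$.

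The only real subtlety is ensuring both $A_\lambda$ and $A_{\lambda'}$ are well-defined when the minimizer is not unique. Since $\mathrm{Trop}$ need only be considered at points of differentiability and our definition of $\lambda'$ transports ties from one side to the other in a controlled way, any fixed tie-breaking rule on $\{1,2\}$ used consistently on both sides makes the argument go through. No genericity assumption on $\lambda$ is needed because the hypothesis already pins down a single $\lambda'$, and the monomial assignments of $A_\lambda$ and $A_{\lambda'}$ match under $p$ by the inner-product identity above.
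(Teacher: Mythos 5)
Your proof is correct and follows essentially the same route as the paper: define $\lambda'$ via the orbit representatives so that $\lambda = \lambda' \circ p$, use the identity $\langle \lambda, \alpha_i \rangle = \langle \lambda', p(\alpha_i) \rangle$ together with $p(\alpha_i) = \alpha_i'$ to see that both weight vectors select the same monomial, and conclude column by column. Your extra care about tie-breaking is harmless but not needed, since $A_\lambda$ is only defined at points where the minimizer is unique, and the inner-product identity shows uniqueness transfers to $\lambda'$.
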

 \begin{proof}
 First observe that for any $\alpha \in (\mathbb{R}^{|G|m})^*$ we have 
 $$\langle \lambda, \alpha \rangle = \lambda(\alpha) = \lambda^\prime \circ p(\alpha) = \langle \lambda^\prime, p(\alpha)\rangle.$$
Now consider the polynomials of the parameterizations of $V^{G}_{\C N}$ and $V^{(G,B)}_{\C N}$, for a consistent leaf labelling $\xi$. Let $\alpha_1$ and $\alpha_2$ be as above, and suppose that $\langle \lambda, \alpha_1\rangle < \langle \lambda, \alpha_2\rangle$. Then $\langle \lambda^\prime, p(\alpha_1)\rangle < \langle \lambda^\prime, p(\alpha_2)\rangle$, so both $\lambda$ and $\lambda^\prime$ pick the monomial corresponding to $\C T_1$.  Since $\alpha_1^\prime = p(\alpha_1)$, the result follows.
\end{proof}

Note that Proposition \ref{prop:reduction} is easily generalizable to level-1 phylogenetic networks.
 
 \begin{corollary}\label{cor:non-general}
 
 Let $\C N$ be the $n$-sunlet network with $n \geq 4$, let $G$ be a finite abelian group, and let $B$ be a non-trivial subgroup of the automorphism group $\rm{Aut}(G)$, with $|B\cdot G| = l +1$. Then
  \[ \dim V^{(G,B)}_{\C N} = l(2n -1) + 1. \]
  
 \end{corollary}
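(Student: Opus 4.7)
The upper bound $\dim V^{(G,B)}_{\C N} \leq l(2n-1)+1$ is Proposition \ref{prop:upper}, so the task is the matching lower bound. My plan is to transport the tropical rank argument from Proposition \ref{prop:4SN} to the $(G,B)$-setting via Proposition \ref{prop:reduction}. Since $B$ is non-trivial, $|G|\geq 3$, so Proposition \ref{prop:4SN} applies; write $l_G = |G|-1$ for the general-model parameter.

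The key observation is that the weight vector $\lambda \in \mathbb{R}^{|G|m}$ chosen in the proof of Proposition \ref{prop:4SN} — namely $\lambda^0_{n+2}=-2$, $\lambda^g_{n+1}=1$ for all $g\in G$, and all other entries zero — is $B$-invariant in the sense of Proposition \ref{prop:reduction}. Indeed, $\{0\}$ is its own $B$-orbit (every automorphism fixes the identity), so the edge-$e_{n+2}$ component is constant on orbits, and the edge-$e_{n+1}$ component is literally constant in $g$. Applying Proposition \ref{prop:reduction} yields $\lambda' \in \mathbb{R}^{(l+1)m}$ with $A_{\lambda'}=p\circ A_\lambda$, where $p$ is the orbit-summation projection. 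It therefore suffices to show $\rank A_{\lambda'} \geq l(2n-1)+1$.

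To do so, I would retrace the three parts of the proof of Proposition \ref{prop:4SN} in the projected basis $\{\varepsilon_e^{[g]}\}$. The submatrix formed by columns with $g_1=0$ projects under $p$ to the tropical parameterization matrix of the pruned $(n-1)$-leaf tree $\C S$ under the $(G,B)$-model; by Lemma \ref{lemma:treedim} its rank is $l(2n-5)+1$. The same column and row reductions used in Proposition \ref{prop:4SN} then yield a block-triangular form
\[ A_{\lambda'} = \left[\begin{array}{c|c} 0 & B' \\ \hline A'_{\lambda'} & * \end{array}\right], \]
with $B' = p(B)$ and $\rank A'_{\lambda'} = l(2n-5)+1$, reducing matters to showing $\rank B'\geq 4l$. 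The subspaces $X_1, X_2$ of the original argument descend to $p(X_1), p(X_2)$: orbit-summation still yields $\varepsilon_{n+1}^{[g_1]}\in p(X_1)$ for each nonzero orbit $[g_1]$, and for each orbit $[h]\neq[g_n]$ one can pick $g_1 = h'-g_n$ with $h'\in[h]$ to produce the difference $\varepsilon_n^{[g_n]}-\varepsilon_n^{[h]}\in p(X_1)$. This gives $2l$ independent vectors in $p(X_1)$, and an analogous analysis for $X_2$ gives another $2l$ in $p(X_2)$, so $\rank B'\geq 4l$ and the bound follows.

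The main obstacle will be the bookkeeping in the $X_2$ step, which in the original proof branched on whether $g=-g$ in $G$. After projection the relevant distinction becomes whether the orbit $[g]$ equals $[-g]$ — a condition that can now hold for additional elements, since $B$ may contain the inversion automorphism on parts of $G$. One must verify that the algebraic manipulations producing $\varepsilon_1^{[g]}\in p(X_2)$ remain valid in each case, and that the resulting count of $l$ independent vectors on edge $e_1$ is correct in both the self-paired and non-self-paired orbit cases. I expect this to be tractable but tedious, amounting to tracking how the involution $g\mapsto -g$ interacts with the $B$-orbit partition.
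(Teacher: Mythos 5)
Your setup coincides with the paper's: upper bound from Proposition \ref{prop:upper}, the observation that $B$ non-trivial forces $|G|>2$, the $B$-invariance of the weight vector $\lambda$ from Proposition \ref{prop:4SN}, and the identity $A_{\lambda'}=p\circ A_\lambda$ from Proposition \ref{prop:reduction}. Where you diverge is in bounding $\rank A_{\lambda'}$: you propose to re-run the entire block-triangularization and the $X_1,X_2$ analysis of Proposition \ref{prop:4SN} inside the orbit basis, whereas the paper dispatches this in one line of linear algebra. Since $p$ is a surjective linear map with $\dim\ker p=(|G|-l-1)m$, composing with $p$ can drop the rank by at most $\dim\ker p$, so
\[ \rank A_{\lambda'}=\rank(p\circ A_\lambda)\;\geq\;\rank A_\lambda-(|G|-l-1)m\;\geq\;\bigl((|G|-1)m+1\bigr)-(|G|-l-1)m\;=\;lm+1, \]
using the general-model bound $\rank A_\lambda\geq(|G|-1)m+1$ already established in Proposition \ref{prop:4SN}. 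The numerical coincidence here is exactly what makes the corollary cheap: the maximal possible rank loss under orbit-summation equals the gap between the general-model and $(G,B)$-model expected dimensions. Your route does appear to go through --- the telescoping sums for $p(X_1)$ survive projection, and the $[g]=[-g]$ bookkeeping you flag as the main obstacle can be handled since $(|G|-1)^2-1=|G|(|G|-2)\neq 0$ for $|G|>2$ --- but it re-proves in the quotient what the rank-nullity bound gives for free, and it leaves that obstacle as a deferred verification rather than a completed step. The trade-off: your version yields explicit independent vectors in the projected matrix (which could be useful elsewhere), while the paper's version is shorter, avoids all case analysis on orbits, and generalizes immediately to any situation where a $B$-invariant weight vector achieving the general-model rank is available.
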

 \begin{proof}
As in the case for general group-based models, the upper bound is given by Proposition \ref{prop:upper}. For the lower bound, first observe that since $B$ is a non-trivial subgroup, we must have $|G| > 2$. Next observe that the vector $\lambda$ chosen in the proof of Proposition \ref{prop:4SN} satisfies the condition in Proposition \ref{prop:reduction}, so using Proposition \ref{prop:reduction} (and  Lemma \ref{lemma:draisma}) there exists some $\lambda'$ such that
$$\dim V^{(G,B)}_{\C N}  \geq \rank_\mathbb{R} A_{\lambda^\prime} = \rank_\mathbb{R}(p \circ A_{\lambda^\prime}).$$
Finally, since $p$ is a surjective linear map with kernel of dimension $ (|G| - l - 1)m$, we have 
$$ \rank_\mathbb{R} A_{\lambda^\prime} \geq (|G| - 1)m + 1 - (|G| - l - 1)m = lm + 1.$$
 \end{proof}
 
 We summarise our results on sunlet networks in a single theorem. Note that the final two cases are given by direct computation.
 
 \begin{theorem}\label{thm:sunlet}
 Let $\C N$ be a sunlet network with $n$ leaves. Let $G$ be a finite abelian group and let $B$ be a subgroup of $\Aut(G)$. Denote by $l+1$ the number of $B$-orbits in $G$. Then $\dim V_{\C N}^{(G,B)}$ is given in the following cases.
 \begin{itemize}
 \item If $n\geq 4$ and $|G| > 2$ then $\dim V_{\C N}^{(G,B)} = l(2n-1) + 1$.
 \item If $n\geq 5$ and $G =\mathbb{Z}/2\mathbb{Z}$ so that $B = \{\rm{id}\}$ then $\dim V_{\C N}^{\mathbb{Z}/2\mathbb{Z}} = 2n$.
 \item If $n=4$ then $\dim V_{\C N}^{\mathbb{Z}/2\mathbb{Z}} = 7$.
  \item If $n=3$ then $\dim V_{\C N}^{\mathbb{Z}/2\mathbb{Z}} = 4$.
  \end{itemize} \qed
 \end{theorem}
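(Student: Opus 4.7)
The theorem is essentially a consolidation of results already established in the excerpt, so the proof proposal is mostly one of assembling cases rather than discovering new arguments. My plan is to go case-by-case through the four bullet points and cite the appropriate intermediate result, filling in the two cases which are explicitly described as ``direct computation'' with an explanation of what that computation amounts to.

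For the first bullet point, when $n\geq 4$ and $|G|>2$, there are two sub-cases to consider. If $B=\{\rm{id}\}$, so that the model is the general group-based model for $G$, then the statement is exactly Proposition \ref{prop:4SN}. If $B$ is a nontrivial subgroup of $\Aut(G)$, then automatically $|G|>2$ (since $\Aut(\B Z/2\B Z)$ is trivial) and Corollary \ref{cor:non-general} gives the result. The second bullet point, $n\geq 5$ with $G=\B Z/2\B Z$, is exactly Proposition \ref{prop:5SN}.

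The third and fourth bullet points concern the degenerate cases $G=\B Z/2\B Z$ with $n=4$ and $n=3$, where the expected dimension $2n$ is not attained. For $n=3$, I would note that any consistent leaf labelling $g_1,g_2,g_3 \in \B Z/2\B Z$ satisfies $g_1+g_2+g_3=0$, so there are only $|G|^{n-1}=4$ consistent leaf labellings; hence $V^{G}_{\C N}$ lives in $\B C^{4}$ and its dimension is at most $4$. The reverse inequality can be obtained either by the tropical method of Section \ref{subsec:tropical} applied to the $3$-sunlet (choosing any $\lambda$ and verifying the Jacobian has rank $4$) or, more simply, by exhibiting a point of the variety at which the Jacobian of the parameterisation $\phi_{\C N}$ has full rank $4$; since the latter is a $4\times 12$ symbolic matrix, this is a routine finite computation. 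For $n=4$ and $G=\B Z/2\B Z$, the parameter space has $(l+1)m = 2\cdot 7 = 14$ coordinates and the ambient space is $\B C^{8}$, but the expected dimension $8$ is not attained. I would carry out a direct symbolic computation of the rank of the Jacobian of $\phi_{\C N}$ at a generic point, using the explicit parameterisation from equation \eqref{eqn:sunletParam} applied to the $4$-sunlet, and verify the rank is exactly $7$.

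I do not anticipate a real obstacle: the first two cases are just pointers to earlier results, and the last two are finite rank computations over $\B Z/2\B Z$ for very small networks, which can be done by hand or in a computer algebra system. The only subtle point worth flagging in the write-up is explaining why the argument of Proposition \ref{prop:4SN} breaks down for $G=\B Z/2\B Z$, namely that the construction of vectors in $X_{2}$ requires enough group elements to separate $g$ from $-g$, which fails when $|G|=2$ and is the reason the $n=4$ case loses one dimension. Because the theorem's statement is really a summary, I expect the proof itself to be only a few lines pointing to \cref{prop:4SN,prop:5SN,cor:non-general} and remarking on the two small-case computations.
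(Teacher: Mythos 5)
Your proposal matches the paper's approach exactly: the theorem is stated with a \qed because it is a consolidation of Proposition \ref{prop:4SN}, Corollary \ref{cor:non-general}, and Proposition \ref{prop:5SN}, with the $n=3$ and $n=4$ cases for $\mathbb{Z}/2\mathbb{Z}$ handled by direct computation, just as you describe. The case split on whether $B$ is trivial and the rank computations for the two small cases are precisely what the paper intends.
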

 
%
%
  \section{Proof of Theorems \ref{thm:main} and \ref{thm:zmod2z}}
  
  We are now able to give simple inductive proofs of Theorems \ref{thm:main} and \ref{thm:zmod2z}. Below we give only the proof of Theorem \ref{thm:main}. The proof of Theorem  \ref{thm:zmod2z} is almost identical, and is left to the reader with the aid of Table \ref{tab:def}.
  
  \begin{proof}[Proof of Theorem~\ref{thm:main}]
  We will prove the result using induction on the number of cut edges of a level-1, triangle-free phylogenetic network $\C N$. For the case when there are no cut edges, we must have that $\C N$ is either the 3-claw tree, in which case the dimension of $V_{\C N}^{(G,B)}$ is equal to $lm + 1$ by Lemma \ref{lemma:treedim}, or $\C N$ is an $n$-sunlet network with $n\geq 4$, in which case the dimension is $l(2n - 1) + 1$ by Theorem \ref{thm:sunlet}. In both cases the result holds.
  
  Now suppose that $\C N$ is a level-1, triangle-free phylogenetic network with a cut edge $e$, and $m$ edges and $c$ cycles. Let $\C N_1$ and $\C N_2$ be the networks obtained by cutting at $e$, and let $m_i$ and $c_i$ denote the number of edges and cycles in $\C N_i$ respectively for $i=1,2$. Since the number of cut edges in $\C N_1$ and $\C N_2$ must be fewer than the number of cut edges in $\C N$,  by induction we have $\dim V_{\C N_i}^{(G,B)} = l(m_i - c_i) + 1$ for $i = 1,2$. By Corollary \ref{cor:decomp} we have
\begin{equation*}
 \begin{aligned}
 \dim V_{\C N}^{(G,B)} 	&= \dim V_{\C N_1}^{(G,B)}  + \dim V_{\C N_2}^{(G,B)}  - (l+1) \\
 					&= l(m_1 + m_2 - c_1 - c_2) + 2 - (l+1) \\
					&= l(m - c ) + 1,
 \end{aligned}
 \end{equation*}
where $m_1 + m_2 = m+1$ and $c_1 + c_2 = c$.
  \end{proof}
  
%
%
  \section{Application to Identifiability}
 
 In this section we apply Theorems \ref{thm:main} and \ref{thm:zmod2z} to give some immediate identifiability results. Throughout, fix an abelian group $G$ and subgroup $B$ of $\Aut(G)$, and let $l+1$ be the number of orbits in $B\cdot G$. First, we extend the definition of \emph{distinguishibility} from \cite{GL} to all group-based models of evolution
 
 \begin{definition}
 Let $(G,B)$ be a group-based model of evolution. Two distinct $n$-leaf networks $\C N_1$ and $\C N_2$ are \emph{distinguishable over $(G,B)$} if $V_{\C N_1}^{(G,B)} \not\subseteq V_{\C N_2}^{(G,B)}$ and $V_{\C N_2}^{(G,B)} \not\subseteq V_{\C N_1}^{(G,B)}$.
 \end{definition}
 
 When $G$ and $B$ are clear, we will simply say that $\C N_1$ and $\C N_2$ are distinguishable. Observe that if $V_{\C N_1}^{(G,B)}$ and $V_{\C N_2}^{(G,B)}$ are irreducible varieties of equal dimension, then in order to determine whether $\C N_1$ and $\C N_2$ are distinguishable it is sufficient to show that either  $V_{\C N_1}^{(G,B)} \not\subseteq V_{\C N_2}^{(G,B)}$ or $V_{\C N_2}^{(G,B)} \not\subseteq V_{\C N_1}^{(G,B)}$. One of the key results we will use to show identifiability is the following.

 \begin{lemma}[\cite{gross2021distinguishing} Lemma~3]\label{lem:subnet}
 Let $\C N_1$ and $\C N_2$ be $n$-leaf networks.  If for some $A \subseteq [n]$, we have that  $V^{(G,B)}_{\C N_1|_A} \not\subseteq V^{(G,B)}_{\C N_2|_A}$, then  $V^{(G,B)}_{\C N_1} \not\subseteq V^{(G,B)}_{\C N_2}$.
\end{lemma}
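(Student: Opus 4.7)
The plan is to prove the contrapositive: if $V^{(G,B)}_{\C N_1} \subseteq V^{(G,B)}_{\C N_2}$, then $V^{(G,B)}_{\C N_1|_A} \subseteq V^{(G,B)}_{\C N_2|_A}$. The key tool will be a linear coordinate projection $\pi$ whose image on $V^{(G,B)}_{\C N}$ recovers $V^{(G,B)}_{\C N|_A}$; the containment in the image then follows by applying $\pi$ to both sides.

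Concretely, I would define $\pi: \mathbb{C}^{|G|^{n-1}} \to \mathbb{C}^{|G|^{|A|-1}}$ as the coordinate projection that selects the Fourier coordinates $q_{g_1 \cdots g_n}$ for which $g_j = 0$ for every $j \notin A$. This is precisely the Fourier-side incarnation of marginalizing the probability distribution over the leaves in $[n] \setminus A$, a standard feature of group-based models on trees. The main step is then to check that $\pi\bigl(V^{(G,B)}_{\C N}\bigr) = V^{(G,B)}_{\C N|_A}$. For a tree $\C T_\sigma$ obtained by choosing reticulation edges $\sigma \in \{0,1\}^k$, setting $g_j = 0$ for $j \notin A$ forces $\xi_{\C T_\sigma}(e) = 0$ for every edge $e$ whose descendant leaves all lie outside $A$; the remaining parameters reassemble (after suppressing the resulting degree-two vertices via the argument of Lemma~\ref{lemma:semidirectedmodel}) into the parameterization $\phi_{\C T_\sigma|_A}$. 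Using the linearity of $\pi$ and the identity $\phi_{\C N} = \sum_\sigma \phi_{\C T_\sigma}$, one concludes that $\pi \circ \phi_{\C N}$ equals a sum $\sum_\tau \phi_{\C T_\tau}$ over the reticulation-edge selections $\tau$ of $\C N|_A$, grouping together the $\sigma$'s that restrict to the same $\C T_\sigma|_A$. That is exactly $\phi_{\C N|_A}$ (after possibly absorbing multiplicative constants coming from duplicated $\sigma$'s into parameters on reticulation edges). Taking Zariski closures gives the desired equality of varieties.

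Once this projection identity is in hand, the lemma is immediate. Assuming $V^{(G,B)}_{\C N_1} \subseteq V^{(G,B)}_{\C N_2}$, applying $\pi$ preserves the inclusion of images, so $\pi\bigl(V^{(G,B)}_{\C N_1}\bigr) \subseteq \pi\bigl(V^{(G,B)}_{\C N_2}\bigr)$, hence $V^{(G,B)}_{\C N_1|_A} \subseteq V^{(G,B)}_{\C N_2|_A}$. Contrapositively, the hypothesis $V^{(G,B)}_{\C N_1|_A} \not\subseteq V^{(G,B)}_{\C N_2|_A}$ yields $V^{(G,B)}_{\C N_1} \not\subseteq V^{(G,B)}_{\C N_2}$.

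The main obstacle I anticipate is the bookkeeping in the identification $\pi\bigl(V^{(G,B)}_{\C N}\bigr) = V^{(G,B)}_{\C N|_A}$ when restricting to $A$ degenerates part of the reticulate structure of $\C N$ — for instance when a cycle of $\C N$ has all but one or two of its pendant leaves outside $A$, so that removing those leaves shrinks the cycle or turns a reticulation vertex into an ordinary vertex. In such cases one must verify that the collection of tree-selections $\sigma$ of $\C N$ really does collapse, under restriction, to the collection of tree-selections $\tau$ of $\C N|_A$ with consistent multiplicities, and that the extra factors from the trivial edges can be absorbed into the remaining parameters. The arguments of Lemmas~\ref{lemma:semidirectedmodel} and \ref{lemma:contractedLevelOneNetwork} — which absorb parameters on suppressed or contracted edges into the parameters of surviving edges — are precisely the technical tools needed to carry this through cleanly.
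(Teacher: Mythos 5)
The paper does not actually prove this lemma --- it is imported verbatim from \cite{gross2021distinguishing} --- and your marginalization argument (the Fourier-coordinate projection $\pi$ onto the coordinates with $g_j=0$ for $j\notin A$, which carries $V^{(G,B)}_{\C N}$ onto a dense subset of $V^{(G,B)}_{\C N|_A}$, using linearity of the parameterization in each edge's parameter block to absorb collapsed cycles) is essentially the proof given in that reference, and it is correct. The one point to state carefully is that $\pi\bigl(V^{(G,B)}_{\C N}\bigr)$ need only be \emph{dense} in $V^{(G,B)}_{\C N|_A}$, so the identity should read $\overline{\pi\bigl(V^{(G,B)}_{\C N}\bigr)}=V^{(G,B)}_{\C N|_A}$; this still yields the contrapositive, since an inclusion of varieties is preserved by taking images under $\pi$ and then Zariski closures.
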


 \begin{corollary}\label{cor:subnet}
  Let $\C N_1$ and $\C N_2$ be $n$-leaf networks with $\dim V^{(G,B)}_{\C N_1} = \dim V^{(G,B)}_{\C N_2}$.  If for some $A \subseteq [n]$ we have $V^{(G,B)}_{\C N_1|_A} \not\subseteq V^{(G,B)}_{\C N_2|_A}$, then $\C N_1$ and $\C N_2$ are distinguishable over $(G,B)$.
 \end{corollary}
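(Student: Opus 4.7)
The plan is to derive this corollary almost immediately from Lemma~\ref{lem:subnet} together with the ``equal-dimension irreducible varieties'' observation already highlighted in the text preceding the corollary. Applying Lemma~\ref{lem:subnet} to the hypothesis that $V^{(G,B)}_{\C N_1|_A} \not\subseteq V^{(G,B)}_{\C N_2|_A}$ gives us directly that $V^{(G,B)}_{\C N_1} \not\subseteq V^{(G,B)}_{\C N_2}$, so half of the distinguishability condition is immediate.

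To get the other containment, I would first note that both $V^{(G,B)}_{\C N_1}$ and $V^{(G,B)}_{\C N_2}$ are irreducible: each is defined as the Zariski closure of the image of a polynomial map $\phi_{\C N}$ with domain the irreducible affine space $\mathbb{C}^{m(l+1)}$, and the continuous image (and then Zariski closure) of an irreducible set is irreducible. Next, I would invoke the standard fact that if $W \subseteq V$ are irreducible varieties with $\dim W = \dim V$, then $W = V$. Here this means: if we had $V^{(G,B)}_{\C N_2} \subseteq V^{(G,B)}_{\C N_1}$, then equality of dimensions would force $V^{(G,B)}_{\C N_2} = V^{(G,B)}_{\C N_1}$, which would in particular give $V^{(G,B)}_{\C N_1} \subseteq V^{(G,B)}_{\C N_2}$, contradicting what we already established.

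Therefore $V^{(G,B)}_{\C N_2} \not\subseteq V^{(G,B)}_{\C N_1}$ as well, and by the definition of distinguishability, $\C N_1$ and $\C N_2$ are distinguishable over $(G,B)$. There is no real obstacle here; the whole argument is a two-line combination of Lemma~\ref{lem:subnet} with the irreducibility/equal-dimension observation, and it is precisely the content of the remark made immediately before the statement of the corollary.
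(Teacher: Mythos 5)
Your argument is correct and matches the paper's proof exactly: the paper also applies Lemma~\ref{lem:subnet} to obtain $V^{(G,B)}_{\C N_1} \not\subseteq V^{(G,B)}_{\C N_2}$ and then concludes from irreducibility and equality of dimensions that the reverse containment also fails. Your write-up simply makes explicit the standard facts (irreducibility of the closure of the image of an irreducible space, and that a containment of equidimensional irreducible varieties forces equality) that the paper leaves implicit.
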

  \begin{proof}
  By Lemma \ref{lem:subnet}, $V^{(G,B)}_{\C N_1} \not\subseteq V^{(G,B)}_{\C N_2}$.  Since they are irreducible varieties of the same dimension, they are distinguishable.
 \end{proof}
 
 We will use Corollary \ref{cor:subnet} in conjunction with the following dimension results.
 
 \begin{lemma}\label{lem:level1dim}
 Let $\C N_1$ and $\C N_2$ be $n$-leaf, level-1 phylogenetic networks, both with exactly $c$ cycles, where each cycle has length at least $4$ when $|G|> 2$ and at least $5$ when $G=\mathbb{Z}/2\mathbb{Z}$. Then $\dim V^{(G,B)}_{\C N_1} = \dim V^{(G,B)}_{\C N_2}$.
 \end{lemma}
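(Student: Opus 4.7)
The strategy is to apply Theorem~\ref{thm:main} when $|G| \geq 3$ and Theorem~\ref{thm:zmod2z} when $G = \mathbb{Z}/2\mathbb{Z}$, after observing that the number of edges $m$ of a binary level-1 phylogenetic network is determined by the number of leaves $n$ and the number of cycles $c$ alone.

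To see this, I would carry out a standard degree-sum count. Since $\C N$ is level-1, each cycle contains exactly one reticulation vertex, so $\C N$ has precisely $c$ reticulation vertices. Write $t$ for the number of tree vertices. Balancing the sum of indegrees ($n + t + 2c$) against the sum of outdegrees ($2 + 2t + c$, where the $2$ is the root's outdegree) yields $t = n + c - 2$, and hence the total number of edges is $m = 2n + 3c - 2$, a function of $n$ and $c$ alone. (The analogous count for the semi-directed version gives $m = 2n + 3c - 3$; by Lemma~\ref{lemma:semidirectedmodel} the dimension of the variety is insensitive to this choice of convention.)

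When $|G| \geq 3$, the hypothesis that every cycle of $\C N_1$ and $\C N_2$ has length at least $4$ says exactly that both networks are triangle-free, so Theorem~\ref{thm:main} gives
\[
\dim V^{(G,B)}_{\C N_i} = l(m-c) + 1 \qquad (i = 1, 2),
\]
which depends only on $n$, $c$, and $(G,B)$. When $G = \mathbb{Z}/2\mathbb{Z}$, the hypothesis that every cycle has length at least $5$ forces $c_{\geq 5} = c$ and $c_3 = c_4 = 0$ in the notation of Theorem~\ref{thm:zmod2z}, so
\[
\dim V^G_{\C N_i} = m - c + 1 \qquad (i = 1, 2),
\]
again depending only on $n$ and $c$. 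In either case the two dimensions coincide.

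Essentially all of the substantive content is packaged inside Theorems~\ref{thm:main} and~\ref{thm:zmod2z}; the only remaining step is the routine edge count, so I do not anticipate any real obstacle.
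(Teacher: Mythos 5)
Your proposal is correct and follows essentially the same route as the paper: show that the edge count $m$ is a function of $n$ and $c$ alone, then invoke Theorems~\ref{thm:main} and~\ref{thm:zmod2z}. The only cosmetic difference is that you obtain the edge count by a degree-sum argument on the rooted network, whereas the paper reduces to an unrooted binary tree by deleting a reticulation edge from each cycle; both yield the same formula (cf.\ Remark~\ref{rem:edgenum}).
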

 \begin{proof}
 Observe that $\C N_1$ and $\C N_2$ have the same number of edges. To see this, suppose that $\C N_1$ and $\C N_2$ have $m_1$ and $m_2$ edges respectively. Then the corresponding contracted networks $\C N_1'$ and $\C N_2'$ have $m_1 - c$ and $m_2 -c$ edges, since for each reticulation vertex the outgoing edge is removed. Next for each of the $c$ reticulation vertices $v_1,\ldots, v_c$ in $\C N_1'$ arbitrarily pick a reticulation edge $(u_i,v_i)$ and remove it. After removal, the vertex $u_i$ has degree 2 and can be suppressed. The result is an unrooted binary phylogenetic tree on $n$ leaves with $m_1 - 3c$ edges. Performing the same operations on $\C N_2'$ we also obtain a (possibly different) unrooted binary phylogenetic tree on $n$ leaves with $m_2-3c$ edges. Since all unrooted binary phylogenetic trees on $n$ leaves have $2n-3$ edges, we have that $m_1 = m_2$. Now since both $\C N_1$ and $\C N_2$ have exactly $c$ cycles, the result follows from Theorems \ref{thm:main} and \ref{thm:zmod2z}.
 \end{proof}
 \begin{remark}\label{rem:edgenum}
 From the proof of Lemma \ref{lem:level1dim} it is clear that the number of edges of an unrooted level-1 phylogenetic network on $n$ leaves with $c$ cycles is $2n-3 + 3c$
 \end{remark}
 
For the remaining results in this section we will need to use the fact that binary phylogenetic trees with group-based models of evolution are distinguishable. This result is well-known in the community, but we give a direct proof here for completeness.

\begin{lemma}\label{lem:treedis}
Let $(G,B)$ be a group-based model of evolution, and let $\C T_1$ and $\C T_2$ be two distinct $n$-leaf, unrooted, binary phylogenetic trees. Then $\C T_1$ and $\C T_2$ are distinguishable over $(G,B)$.
\end{lemma}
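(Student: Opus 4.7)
The plan is to proceed by induction on $n$. For the base case $n=4$, there are exactly three unrooted binary phylogenetic trees on the four labelled leaves, corresponding to the splits $\{1,2\}|\{3,4\}$, $\{1,3\}|\{2,4\}$, and $\{1,4\}|\{2,3\}$; call them $\C T^{12|34}$, $\C T^{13|24}$, and $\C T^{14|23}$. By Lemma \ref{lemma:treedim}, each corresponding variety has affine dimension $5l+1$, and all three are irreducible as closures of images of polynomial maps from affine space. Hence pairwise distinguishability reduces to showing non-containment in one direction for each pair.

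For this concrete distinguishing step, I would fix any $g \in G$ with $g \neq 0$ (such a $g$ exists since $|G| \geq 2$) and consider the binomial
\[
f_g := q_{(g,0,-g,0)}\, q_{(0,g,0,-g)} - q_{(0,0,0,0)}\, q_{(g,g,-g,-g)} \in R.
\]
Under $\psi_{\C T^{13|24}}$ both monomials give internal-edge label $[0]$ (since $g+(-g)=0=0+0$) and identical leaf-edge contributions, so $\psi_{\C T^{13|24}}(f_g) = 0$ and $f_g \in I^{(G,B)}_{\C T^{13|24}}$. Under $\psi_{\C T^{12|34}}$, the internal edge contributes $(a_{12}^{[g]})^2$ to the first monomial and $a_{12}^{[0]} a_{12}^{[2g]}$ to the second; since $B \subseteq \Aut(G)$ fixes $0$, the orbit $[0]$ is the singleton $\{0\}$ and in particular $[g] \neq [0]$, so these two monomials are distinct in $S_{\C T^{12|34}}$ and $\psi_{\C T^{12!|34}}(f_g) \neq 0$. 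Thus $f_g \in I^{(G,B)}_{\C T^{13|24}} \setminus I^{(G,B)}_{\C T^{12|34}}$, and the remaining pairs are handled by relabelling the leaves.

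For the inductive step, suppose $n \geq 5$ and the result holds for binary trees on fewer leaves. Let $\C T_1 \neq \C T_2$ be two binary phylogenetic trees on $[n]$. By the classical quartet theorem (Buneman), there exists a four-element subset $A \subseteq [n]$ such that the induced four-leaf subtrees $\C T_1|_A$ and $\C T_2|_A$ differ; these induced subtrees are themselves binary on $A$. By the base case, $V^{(G,B)}_{\C T_1|_A} \not\subseteq V^{(G,B)}_{\C T_2|_A}$, and Lemma \ref{lem:subnet} then yields $V^{(G,B)}_{\C T_1} \not\subseteq V^{(G,B)}_{\C T_2}$. By Lemma \ref{lemma:treedim}, both varieties have the same dimension $l(2n-3)+1$ and are irreducible, so they are distinguishable.

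The main obstacle is the base case: one must carefully track how both leaf and internal edge Fourier monomials behave after identifying parameters by $B$-orbits, and verify that $f_g$ actually detects the split $\{1,2\}|\{3,4\}$ versus $\{1,3\}|\{2,4\}$. The clean observation that makes the argument uniform in $(G,B)$ is that every automorphism fixes $0$, so $[0]=\{0\}$ is always a singleton orbit; this guarantees $(a_{12}^{[g]})^2$ and $a_{12}^{[0]}a_{12}^{[2g]}$ are genuinely distinct monomials whenever $g \neq 0$, which in turn makes $f_g$ work for every group-based model at once.
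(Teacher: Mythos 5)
Your proof is correct and takes essentially the same route as the paper: reduce to a distinguishing quartet via Lemma \ref{lem:subnet} (the induction wrapper is unnecessary but harmless, since the quartet reduction goes directly from $n$ leaves to $4$), then exhibit an explicit quadratic binomial lying in one quartet ideal but not the other. The paper's witness is $q_{(g,-g,g,-g)}q_{(h,-h,h,-h)} - q_{(g,-g,h,-h)}q_{(h,-h,g,-g)}$ with $h\notin[g]$, whereas your $f_g$ is a different but equally valid choice; your observation that $[0]=\{0\}$ is always a singleton orbit makes the non-membership check particularly clean and uniform over all $(G,B)$.
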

\begin{proof}
First observe that since $\C T_1$ and $\C T_2$ are determined by their quartets, there exists a subset $A \subset [n]$ with $|A|=4$ such that $\C T_1$ restricted to $A$ and $\C T_2$ restricted to $A$ are distinct four-leaf, binary phylogenetic trees. By Corollary \ref{cor:subnet}, it is sufficient to show that $V_{\C T_1|_A}^{(G,B)} \not\subseteq V_{\C T_2|_A}^{(G,B)}$. Since the dimensions of these varieties are equal (Lemma \ref{lemma:treedim}), this is equivalent to the restricted trees being distinguishable.

We will show that the four leaf binary phylogenetic trees are distinguishable. Let $\C T$ be the four-leaf tree with split $12|34$, and the corresponding interior edge denoted $e_5$. Pick $g, h \in G$ such that $h\not\in [g]$ and consider the polynomial $f = q_{\mathbf{g}}q_{\mathbf{h}} - q_{\mathbf{g'}}q_{\mathbf{h'}}$ where $\mathbf{g} = (g,-g,g,-g), \mathbf{h} = (h,-h,h,-h), \mathbf{g'} = (g, -g,h,-h)$, and $\mathbf{h'}=(h,-h,g,-g)$. We have
$$\psi_{\C T} (f) = a_1^ga_1^ha_2^{-g}a_2^{-h}a_3^ga_3^ha_4^{-g}a_4^{-h}a_5^{0}a_5^{0} - a_1^ga_1^ha_2^{-g}a_2^{-h}a_3^ha_3^ga_4^{-h}a_4^{-g}a_5^{0}a_5^{0} = 0,$$
so that $f \in \ker(\psi_{\C T}) = I_{\C T}^{(G,B)}$.
On the other hand, by looking at the parameters corresponding to the interior edge, the reader can check that $f$ does not belong to the ideals corresponding to the trees with splits $13|24$ and $14|23$ respectively.

In a similar manner one can find polynomials belonging only to the ideal of the tree with split $13|24$ and only to the ideal of the tree with the split $14|23$. It follows that the four leaf binary phylogenetic trees are distinguishable.
\end{proof}

\begin{proposition}\label{prop:sunletident1}
Let $\C N_1$ and $\C N_2$ be two distinct $n$-sunlet networks with $n\geq 5$ and distinct leaves adjacent to the reticulation vertex. Then $\C N_1$ and $\C N_2$ are distinguishable over $(G,B)$.
\end{proposition}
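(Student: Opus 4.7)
The plan is to apply Corollary~\ref{cor:subnet} by restricting to a cleverly chosen subset of leaves, so that the two restricted networks have topologically different types (a sunlet versus a tree) and, consequently, different dimensions. The dimension hypothesis of Corollary~\ref{cor:subnet} is handled separately by Lemma~\ref{lem:level1dim}: since $\C N_1$ and $\C N_2$ are both $n$-sunlets, they have the same number of leaves and exactly one cycle of the same length, so $\dim V^{(G,B)}_{\C N_1} = \dim V^{(G,B)}_{\C N_2}$. Both varieties are irreducible as images of polynomial parameterisations.

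Let $v_1$ and $v_2$ denote the leaves adjacent to the reticulation vertex in $\C N_1$ and $\C N_2$ respectively; by hypothesis $v_1 \neq v_2$. Because $n \geq 5$, we have $|[n]\setminus\{v_1,v_2\}|\ge 3$, so we may choose $A\subseteq[n]$ with $|A|=4$, $v_1\in A$, and $v_2\notin A$. The key observation about restriction is that removing a non-reticulation leaf from a sunlet causes its cycle vertex to become degree two and be suppressed, producing a sunlet one size smaller; whereas removing the reticulation leaf deletes the reticulation vertex itself, so that one of the two parallel cycle paths can be collapsed and the result is an unrooted binary tree. Therefore $\C N_1|_A$ is a $4$-sunlet (with reticulation leaf $v_1$), while $\C N_2|_A$ is a $4$-leaf unrooted binary tree.

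Now compute dimensions. By Theorem~\ref{thm:sunlet}, $\dim V^{(G,B)}_{\C N_1|_A} = 7l+1$ when $|G|>2$, and $\dim V^{\mathbb{Z}/2\mathbb{Z}}_{\C N_1|_A}=7$ when $G=\mathbb{Z}/2\mathbb{Z}$. By Lemma~\ref{lemma:treedim}, $\dim V^{(G,B)}_{\C N_2|_A} = 5l+1$, which equals $6$ in the $\mathbb{Z}/2\mathbb{Z}$ case. Since $l\ge 1$ always (the orbit $\{0\}$ plus at least one other), we get
\[
\dim V^{(G,B)}_{\C N_1|_A} \;>\; \dim V^{(G,B)}_{\C N_2|_A}
\]
in every case. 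Because $V^{(G,B)}_{\C N_1|_A}$ and $V^{(G,B)}_{\C N_2|_A}$ are both irreducible, the strict dimension inequality forces $V^{(G,B)}_{\C N_1|_A}\not\subseteq V^{(G,B)}_{\C N_2|_A}$. Combined with the equality $\dim V^{(G,B)}_{\C N_1} = \dim V^{(G,B)}_{\C N_2}$, Corollary~\ref{cor:subnet} yields that $\C N_1$ and $\C N_2$ are distinguishable over $(G,B)$.

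The main obstacle is entirely bookkeeping about the restriction operation: one must be confident that removing the reticulation leaf of a sunlet really gives a binary tree rather than some degenerate network, so that Theorem~\ref{thm:sunlet} and Lemma~\ref{lemma:treedim} apply as claimed to $\C N_1|_A$ and $\C N_2|_A$ respectively. This identification comes from the definition of restriction used in \cite{gross2021distinguishing} (on which Lemma~\ref{lem:subnet} is based); once it is in hand, the argument is pure dimension counting.
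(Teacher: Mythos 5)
Your proof is correct and follows essentially the same strategy as the paper: restrict to a leaf subset that keeps exactly one of the two reticulation leaves, so that one restriction is a sunlet and the other a tree, then compare dimensions and invoke Corollary~\ref{cor:subnet}. The only difference is cosmetic — the paper takes $A=[n]\setminus\{v_1\}$ (yielding an $(n-1)$-sunlet versus an $(n-1)$-leaf caterpillar) while you take a $4$-element $A$ — and your version is, if anything, slightly more careful in checking the $G=\mathbb{Z}/2\mathbb{Z}$ case where the $4$-sunlet has dimension $7$ rather than the expected $8$.
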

\begin{proof}
By Theorem \ref{thm:sunlet} we have that $\dim V_{\C N_1}^{(G,B)} = \dim V_{\C N_2}^{(G,B)}$. Assume, without loss of generality, that for $\C N_1$ the leaf adjacent to the reticulation vertex is leaf $1$. Let $A=\{2,\ldots,n\}$, so that $\C N_1|_A$ is a caterpillar tree on $n-1$ leaves and $\C N_2|_A$ is an $(n-1)$-sunlet network. Then 
$$\dim V^{(G,B)}_{\C N_1|_A} =  l(2n-5)+1 < l(2n-3) + 1 = \dim V^{(G,B)}_{\C N_2|_A},$$
and so $V^{(G,B)}_{\C N_2|_A} \not\subseteq V^{(G,B)}_{\C N_1|_A}$. By Corollary \ref{cor:subnet}, $\C N_1$ and $\C N_2$ are distinguishable.
\end{proof}

\begin{proposition}\label{prop:sunletident2}
Let $\C N_1$ and $\C N_2$ be two distinct $n$-sunlet networks with $n\geq 4$ such that the leaf adjacent to the reticulation vertex is the same for both networks, and the trees obtained from each network by removing the reticulation vertex and adjacent leaf are distinct. Then $\C N_1$ and $\C N_2$ are distinguishable over $(G,B)$.
\end{proposition}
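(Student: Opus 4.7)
The plan is to mirror the structure of Proposition~\ref{prop:sunletident1}, using Corollary~\ref{cor:subnet} applied to the subset $A \subseteq [n]$ obtained by deleting the (common) leaf adjacent to the reticulation vertex. By Theorem~\ref{thm:sunlet} the two sunlet varieties $V^{(G,B)}_{\C N_1}$ and $V^{(G,B)}_{\C N_2}$ have the same dimension (since $n$ and $(G,B)$ coincide), so it suffices to exhibit $A\subseteq[n]$ with $V^{(G,B)}_{\C N_1|_A}\not\subseteq V^{(G,B)}_{\C N_2|_A}$.

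Let $r\in[n]$ be the leaf adjacent to the reticulation vertex in both networks, and set $A=[n]\setminus\{r\}$. The first step is to identify $\C N_i|_A$ with the tree $\C T_i$ described in the hypothesis. Removing leaf $r$ leaves the reticulation vertex with outdegree $0$ and indegree $2$; stripping this vertex and its two incoming edges turns the two adjacent cycle vertices into degree-$2$ vertices, which are then suppressed. The resulting mixed graph has no reticulations at all and is precisely the unrooted binary phylogenetic tree on $n-1$ leaves obtained from $\C N_i$ by removing the reticulation vertex and its pendant leaf. By hypothesis, $\C T_1\neq \C T_2$ as $(n-1)$-leaf binary phylogenetic trees.

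For $n=4$ the trees $\C T_1,\C T_2$ live on $3$ leaves and so must be equal, meaning the hypothesis is vacuous. For $n\geq 5$, the trees $\C T_1,\C T_2$ are distinct binary phylogenetic trees on at least $4$ leaves, so Lemma~\ref{lem:treedis} gives that they are distinguishable over $(G,B)$; in particular $V^{(G,B)}_{\C N_1|_A}=V^{(G,B)}_{\C T_1}\not\subseteq V^{(G,B)}_{\C T_2}=V^{(G,B)}_{\C N_2|_A}$. Corollary~\ref{cor:subnet} then yields that $\C N_1$ and $\C N_2$ are distinguishable over $(G,B)$.

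I do not expect any genuine obstacle here; the only point that deserves care is verifying that the network restriction $\C N_i|_A$ really coincides with the tree in the hypothesis (i.e.\ that after deleting $r$ one is left with no reticulations and a well-defined unrooted binary tree on $n-1$ leaves), and that the dimension equality from Theorem~\ref{thm:sunlet} does apply so that distinguishability of the restrictions upgrades to distinguishability of the full networks via Corollary~\ref{cor:subnet}.
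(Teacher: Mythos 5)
Your proof is correct and follows essentially the same route as the paper's: restrict to $A=[n]\setminus\{r\}$, identify the restrictions with the distinct $(n-1)$-leaf trees, and combine Lemma~\ref{lem:treedis} with Corollary~\ref{cor:subnet} using the equal dimensions from Theorem~\ref{thm:sunlet}. Your extra observation that the hypothesis is vacuous for $n=4$ (since both restrictions are the $3$-claw tree) is a correct refinement the paper leaves implicit.
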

\begin{proof}

Assume that $\C N_1$ and $\C N_2$ both have leaf $1$ adjacent to the reticulation vertex. Let $A = \{2,\ldots, n\}$ so that by assumption $\C N_1|_A$ and $\C N_2|_A$ are distinct caterpillar trees with $n-1$ leaves. Since these are distinguishable (Lemma \ref{lem:treedis}), the result follows from Corollary \ref{cor:subnet}.
\end{proof}

Observe that Propositions \ref{prop:sunletident1} and \ref{prop:sunletident2} are not sufficient to give identifiability for all sunlet networks. For example, take an $n$ sunlet with leaves labelled in ascending order clockwise around the sunlet with 1 at the reticulation. Then obtain a distinct sunlet by swapping leaves 2 and 3. The caterpillar trees obtained from both of these sunlets by restricting to $\{2,\ldots, n\}$ are the same, so neither Proposition \ref{prop:sunletident1} nor Proposition \ref{prop:sunletident2} applies.
 
 More generally we can give the following identifiability result for triangle-free, level-1 phylogenetic networks. The result relies on the existence of a subset $A$ of the leaf set with particular properties.
 
 \begin{proposition}\label{prop:level1ident}
 Let $\C N_1$ and $\C N_2$ be two triangle-free, level-1 phylogenetic networks on $n$ leaves and both with exactly $c$ cycles, and let $G$ be an abelian group with $|G| > 2$. If there exists a subset $A\subset [n]$ such that either
 \begin{enumerate}
 \item  $\C N_1|_{A}$ and $\C N_2|_{A}$ are triangle-free level-1 phylogenetic networks with distinct number of cycles, or 
 \item  $\C N_1|_{A}$ is a tree and $\C N_2|_{A}$ is a triangle-free level-1 phylogenetic network, or 
  \item  $\C N_1|_{A}$ and $\C N_2|_{A}$ are distinct trees,
 \end{enumerate}
 then $\C N_1$ and $\C N_2$ are distinguishable over $(G,B)$.
 \end{proposition}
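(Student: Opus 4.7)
The strategy is to reduce everything to Corollary~\ref{cor:subnet}. First, since $\C N_1$ and $\C N_2$ are both triangle-free level-1 networks on $n$ leaves with exactly $c$ cycles, and since $|G|>2$, Lemma~\ref{lem:level1dim} immediately gives $\dim V^{(G,B)}_{\C N_1} = \dim V^{(G,B)}_{\C N_2}$. So to apply Corollary~\ref{cor:subnet} it suffices, in each of the three cases, to exhibit a subset $A\subseteq [n]$ for which $V^{(G,B)}_{\C N_1|_A}\not\subseteq V^{(G,B)}_{\C N_2|_A}$ or $V^{(G,B)}_{\C N_2|_A}\not\subseteq V^{(G,B)}_{\C N_1|_A}$.

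For case (1), the subset $A$ is given by hypothesis, and I would deduce the non-containment purely from dimension. Write $|A|=n'$ and say $\C N_i|_A$ has $c_i$ cycles, with $c_1\neq c_2$, without loss of generality $c_1>c_2$. By Remark~\ref{rem:edgenum}, $\C N_i|_A$ has $2n'-3+3c_i$ edges, and since triangle-freeness and the hypothesis $|G|>2$ allow Theorem~\ref{thm:main} to apply, $\dim V^{(G,B)}_{\C N_i|_A}=l(2n'-3+2c_i)+1$. Hence $\dim V^{(G,B)}_{\C N_1|_A}>\dim V^{(G,B)}_{\C N_2|_A}$, so $V^{(G,B)}_{\C N_1|_A}\not\subseteq V^{(G,B)}_{\C N_2|_A}$. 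Case (2) is essentially the same argument: $\C N_1|_A$ is a tree on $n'$ leaves, so by Lemma~\ref{lemma:treedim} its variety has dimension $l(2n'-3)+1$, whereas $\C N_2|_A$ has at least one cycle and so by Theorem~\ref{thm:main} its variety has dimension at least $l(2n'-1)+1$; strict inequality gives $V^{(G,B)}_{\C N_2|_A}\not\subseteq V^{(G,B)}_{\C N_1|_A}$.

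Case (3) is where the purely dimensional argument fails, since two distinct $n'$-leaf binary phylogenetic trees have varieties of the same dimension. Here I would invoke Lemma~\ref{lem:treedis}, which says distinct binary trees are distinguishable over any $(G,B)$; that means neither of $V^{(G,B)}_{\C N_1|_A}\subseteq V^{(G,B)}_{\C N_2|_A}$ and $V^{(G,B)}_{\C N_2|_A}\subseteq V^{(G,B)}_{\C N_1|_A}$ holds, so in particular at least one fails. In each of the three cases, the required non-containment is established, and Corollary~\ref{cor:subnet} concludes that $\C N_1$ and $\C N_2$ are distinguishable.

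The only potential subtlety is making sure the restrictions behave as asserted in the hypothesis: the proposition already stipulates the form of $\C N_i|_A$, so I would not need to re-derive it, but if written in full one should note that restricting a triangle-free level-1 network to a leaf subset either preserves or collapses cycles (yielding again a level-1 network, possibly with fewer cycles, and still triangle-free provided no new short cycles are created; this is automatic because restriction only contracts vertices of degree less than $2$ after leaf deletion). Apart from this bookkeeping, the proof is a straightforward application of the dimension formula from Theorem~\ref{thm:main} together with Lemma~\ref{lem:treedis} and Corollary~\ref{cor:subnet}.
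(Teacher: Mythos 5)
Your proposal is correct and follows essentially the same route as the paper's proof: equal ambient dimensions via Lemma~\ref{lem:level1dim}, a dimension comparison using Remark~\ref{rem:edgenum} and Theorem~\ref{thm:main} to get the non-containment in cases (1) and (2), Lemma~\ref{lem:treedis} for case (3), and Corollary~\ref{cor:subnet} to conclude. The only differences are cosmetic (the direction of the without-loss-of-generality choice in case (1) and your explicit remark that $\C N_2|_A$ has at least one cycle in case (2), which the paper leaves implicit).
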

 \begin{proof}
First observe that $\dim V^{(G,B)}_{\C N_1} = \dim V^{(G,B)}_{\C N_2}$ by Lemma \ref{lem:level1dim}. Let $\C N_1|_{A}$ and $\C N_2|_{A}$ have $m_1$ and $m_2$ edges respectively, and $c_1$ and $c_2$ cycles respectively.

For case 1, assume without loss of generality that $c_1 < c_2$. Then by Remark \ref{rem:edgenum} we have that $m_1 = 2|A| - 1 + 3c_1$ and $m_2 = 2|A|-1 + 3c_2 $. In particular, $m_1 - c_1 < m_2 - c_2$. Then by Theorem \ref{thm:main} we have that 
$$\dim V^{(G,B)}_{\C N_1|_A} = l(m_1 - c_1) + 1 < l(m_2 - c_2) + 1 = \dim V^{(G,B)}_{\C N_2|_A}.$$ 
It follows that $V^{(G,B)}_{\C N_2|_A} \not\subseteq V^{(G,B)}_{\C N_1|_A}$. For case 2 let us assume that $\C N_1|_{A}$ is a tree and $\C N_2|_{A}$ is a triangle-free level-1 phylogenetic network. Then $\dim V^{(G,B)}_{\C N_1|_A} < \dim V^{(G,B)}_{\C N_2|_A}$ so as above $V^{(G,B)}_{\C N_2|_A} \not\subseteq V^{(G,B)}_{\C N_1|_A}$. For case 3 we have that $V^{(G,B)}_{\C N_1|_A} \not\subseteq V^{(G,B)}_{\C N_2|_A}$ and $V^{(G,B)}_{\C N_2|_A} \not\subseteq V^{(G,B)}_{\C N_1|_A}$ by Lemma \ref{lem:treedis}. In all three cases the result now follows by Corollary \ref{cor:subnet}.
 \end{proof}

  \section{Discussion}
 In this paper we have given a dimension formula for all triangle-free, level-1 phylogenetic networks under a group-based model of evolution. Our main tool was the toric fiber product, for which we gave a dimension formula that we hope will be useful beyond this work.
 
Our results confirmed a conjecture of Gross and Long which states that under the JC model of evolution, the dimensions of large cycle networks (that is, level-1 phylogenetic networks with a single cycle of length at least $4$) are equal \cite[Conjecture~5.1]{GL}. In fact, as we have shown, this is true for all group-based models and level-1 phylogenetic networks where the number of cycles is equal. We were also able to give partial identifiability results for sunlet networks and larger level-1 networks that followed immediately from our results on dimension. 

We were unable to give a general dimension result for 3-sunlets. For this case, our upper bound (Proposition \ref{prop:upper}) still holds, but our proof for the lower bound does not work. This is because with the $\lambda$ we have chosen, when $n=3$ we have only $l$ columns in the matrix $A_\lambda$ coming from $\C T_2$, whilst the rest come from $\C T_1$. Thus the maximum rank of $A_\lambda$ is $\dim V_{\C T_1}^{(G,B)} + l = (2n-2)l + 1$, and this is too small. Nonetheless, we believe the result still holds, and we make the following conjecture.

  \begin{conjecture}\label{conj:3}
  If $\C N$ is the $3$-sunlet network and $|G| > 4$ then
  \[ \dim V^G_{\C N} = lm+1. \]
 \end{conjecture}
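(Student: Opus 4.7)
The plan is to establish the lower bound $\dim V^G_{\C N}\ge 5l+1$ for the 3-sunlet when $|G|>4$ by computing the rank of the Jacobian of $\phi_{\C N}$ at a generic point, since, as the authors note, the tropical bound attainable with the natural choice of $\lambda$ from Proposition~\ref{prop:4SN} caps out at $4l+1$ when $n=3$. Combined with the upper bound of Proposition~\ref{prop:upper}, this would give the conjectured equality.

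Rewrite the parameterization using $g_3=-g_1-g_2$ as
\[
q_{g_1,g_2}=a_2^{g_2}a_3^{g_3}\,Q^{g_1,g_2},\qquad Q^{g_1,g_2}=a_a^{g_1}a_c^{g_3}+a_b^{g_1}a_c^{-g_2},
\]
and at a generic point work in the logarithmic differentials $u^g=d\log a_2^g$, $v^g=d\log a_3^g$, $\tilde w^g=d\log a_c^g$, $\alpha^g=d\log a_a^g$, $\beta^g=d\log a_b^g$. Setting $\pi^{g_1,g_2}:=a_a^{g_1}a_c^{g_3}/Q^{g_1,g_2}$, the kernel equations take the form
\[
u^{g_2}+v^{g_3}+\pi^{g_1,g_2}(\alpha^{g_1}+\tilde w^{g_3})+(1-\pi^{g_1,g_2})(\beta^{g_1}+\tilde w^{-g_2})=0.
\]
The four-dimensional expected kernel --- the three balanced multiplicative rescalings of the blocks $(a_2,a_3,a_c,(a_a,a_b))$ together with the reticulation-cycle symmetry $(a_a^0,a_b^0)\mapsto(a_a^0+s,a_b^0-s)$ --- saturates the upper bound, so it suffices to rule out any extra kernel direction.

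I would then extract constraints in two regimes. For labellings with $g_1=0$ one has $g_3=-g_2$ and $\pi^{0,g_2}=a_a^0/(a_a^0+a_b^0)$ is constant in $g_2$, so the equation collapses to $u^{g_2}+v^{-g_2}+\tilde w^{-g_2}=C$ with $C$ depending only on $(\alpha^0,\beta^0)$; this yields $l$ independent constancy constraints. For each $g_1\ne 0$, clearing the denominator gives the bilinear identity $a_c^{g_3}P(g_1,g_2)+a_c^{-g_2}R(g_1,g_2)=0$, where $P:=u^{g_2}+v^{g_3}+\alpha^{g_1}+\tilde w^{g_3}$ and $R:=u^{g_2}+v^{g_3}+\beta^{g_1}+\tilde w^{-g_2}$. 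Since the ratio $a_c^{g_3}/a_c^{-g_2}$ takes distinct values as $g_2$ varies (by genericity of $a_c$ and because $g_3\ne -g_2$), two of the $|G|$ equations per $g_1$ suffice to solve for $(\alpha^{g_1},\beta^{g_1})$ in terms of the log-derivatives, leaving $|G|-2$ conditions purely on $(u,v,\tilde w)$. Comparing these conditions across different $g_1\ne 0$ values via the substitution $h=-g_2$ produces a shift-invariance identity: the auxiliary functions $A(h):=\tilde w^h+2u^{-h}$ and $B(h):=\tilde w^h+2v^h$ must satisfy $A(h+\delta)-A(h)$ and $B(h+\delta)-B(h)$ independent of $h$ for every $\delta$ in the subgroup of $G$ generated by the differences $\{g_1-g_1':g_1,g_1'\in G\setminus\{0\}\}$. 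For $|G|\ge 3$ this subgroup is all of $G$, and since every group homomorphism from a finite abelian group to $(\mathbb{C},+)$ is trivial, both $A$ and $B$ must be constant, producing $2l$ further independent constraints.

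Summing the tally --- $l$ constancies from $g_1=0$, $2l$ from the shift-invariance analysis, $2l$ more from expressing $(\alpha^{g_1},\beta^{g_1})$ in terms of the log-derivatives for each $g_1\ne 0$, plus one scalar identity tying $C$ to $(\alpha^0,\beta^0)$ --- yields $5l+1$ independent linear constraints on the $5(l+1)$-dimensional tangent space, leaving a kernel of dimension exactly four. The main obstacle is the rigorous independence argument: the $l(|G|-2)$ raw constraints from the $g_1\ne 0$ regime must reduce to precisely $2l$ independent relations on $(u,v,\tilde w)$, with no hidden overlap with the $g_1=0$ constraints, and the bilinear form $a_c^{g_3}P+a_c^{-g_2}R=0$ must be manipulated carefully to extract the correct shift-invariance even without specializing $a_a,a_b$. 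The hypothesis $|G|>4$ enters both to ensure $(l+1)^2\ge 5l+1$ (so there are enough equations available to reach the target rank) and to guarantee sufficient nonzero group elements for the shift-invariance comparison to close, so the homomorphism argument really produces the full $2l$ constraints.
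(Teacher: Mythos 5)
First, a point of calibration: the paper does not prove this statement --- it is stated as Conjecture~\ref{conj:3} precisely because the authors' tropical lower bound caps out at $(2n-2)l+1=4l+1$ when $n=3$, so you are attempting an open problem rather than reproducing an argument. Your overall strategy (compute the corank of the Jacobian of $\phi_{\C N}$ at a generic point via logarithmic differentials, exhibit the four known kernel directions, and show no fifth exists) is legitimate, and your setup of the kernel equations and the elimination of $(\alpha^{g_1},\beta^{g_1})$ for each $g_1\neq 0$ are correct. However, the two load-bearing steps fail. (i) The reduction to shift-invariance of $A(h)=\tilde w^h+2u^{-h}$ and $B(h)=\tilde w^h+2v^h$ is never derived, and it is not what the elimination produces. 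Writing $h=-g_2$, a combination $\sum_h c_h(\mathrm{eqn}_h)$ is free of $(\alpha^{g_1},\beta^{g_1})$ iff $\sum_h c_h a_c^{h-g_1}=\sum_h c_h a_c^{h}=0$, and the resulting condition is
\[
\sum_h c_h\bigl(a_c^{h-g_1}+a_c^{h}\bigr)\bigl(u^{-h}+v^{h-g_1}\bigr)+\sum_h a_c^{h}\bigl(c_h+c_{h+g_1}\bigr)\tilde w^{h}=0 ,
\]
in which $u^{-h}$ is coupled to $v^{h-g_1}$ (not to $v^{h}$) and all coefficients depend on the generic values $a_c^{g}$. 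The clean form $A(h)+B(h-g_1)=\mathrm{const}$, with its factor of $2$, appears only upon setting all $a_c^{g}$ equal; but at that specialization $Q^{g_1,g_2}$ collapses to a single monomial and the Jacobian rank degenerates, so no useful lower bound can be read off there.

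(ii) Even granting that $A$ and $B$ are forced to be constant, your tally double-counts: $\tfrac12\bigl(A(h)+B(h)\bigr)=\tilde w^{h}+u^{-h}+v^{h}$, and its constancy is \emph{exactly} your $g_1=0$ condition $u^{g_2}+v^{-g_2}+\tilde w^{-g_2}=C$. So the $l+2l$ constraints you list on $(u,v,\tilde w)$ have rank at most $2l$, and the total rank you can certify is at most $2l+2l+1=4l+1$ --- precisely the bound the authors already reach and declare insufficient, leaving a kernel of dimension at least $l+4$. Finally, a structural sanity check shows the mechanism cannot be the right one as stated: Table~\ref{tab:def} reports that the conclusion fails for $G=(\mathbb{Z}/2\mathbb{Z})^2$ (deficiency $1$) yet holds for $\mathbb{Z}/4\mathbb{Z}$, while both groups satisfy $(l+1)^2\geq 5l+1$ and in both the differences of nonzero elements generate $G$; so neither of the two places where you say the hypothesis $|G|>4$ enters can be the actual obstruction. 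A correct argument must be sensitive to the $2$-torsion of $(\mathbb{Z}/2\mathbb{Z})^2$ (where $g=-g$ degenerates the relationship between the two $a_c$-labels $g_3$ and $-g_2$), and yours is not.
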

 
Our conjecture is backed up by calculations of the dimension $V_{\C N}^{(G,B)}$ for small sunlet networks and small groups. The deficiencies (i.e., the number of dimensions less than the expected dimension $l(2n-1) + 1$) are shown in Table \ref{tab:def}.
 
 \begin{table}[h!]
 \begin{center}
 \caption{Values for the deficiency of $\dim V_{\C N}^{(G,B)}$, where $\C N$ is an $n$-sunlet.}
  \label{tab:def}
 \begin{tabular}{c|ccccccccc}
	  $n$ & $\B Z/2\B Z$ & $\B Z/3\B Z$ & JC & K2P & $(\B Z/2\B Z)^2$ & $\B Z/4\B Z$ & $\B Z/5\B Z$ & $\B Z/6\B Z$ & $\B Z/7\B Z$ \\ \hline
	  3 & $\BF 2$ & $\BF 2$ & $\BF 1$ & $\BF 1$ & $1$ & $0$ & $0$ & $0$ & $0$ \\
	  4 & $1$ & $0$ & $0$ & $0$ & $0$ & $0$ & $0$ & $0$ & $0$ \\
	  5 & $0$ & $0$ & $0$ & $0$ & $0$ & $0$ & $0$ & $0$ & $0$ \\
	  6 & $0$ & $0$ & $0$ & $0$ & $0$ & $0$ & $0$ & $0$ & $0$
	 \end{tabular} 
\end{center}
\end{table}
 Bold values in Table \ref{tab:def} indicate that the variety fills the whole space $\mathbb{C}^{(l+1)^{n-1}}$, and this has dimension less than the expected dimension. Note that for the JC and K2P models we have binomial linear invariants, and it is customary to identify these and reduce the dimension of the ambient space. From Table \ref{tab:def}, it appears that we only have two cases where the dimension of $V_{\C N}^{(G,B)}$ is less than expected for unknown reasons. These are when $G=\mathbb{Z}/2\mathbb{Z}$ and $n=4$, and when $G=\mathbb{Z}/2\mathbb{Z}\times\mathbb{Z}/2\mathbb{Z}$ and $n=3$. The latter case has implications for models of DNA sequence evolution, since the group $G=\mathbb{Z}/2\mathbb{Z}\times\mathbb{Z}/2\mathbb{Z}$ is usually identified with the four nucleic acids, and the corresponding general group-based model of evolution is the Kimura 3-parameter model (K3P). The $3$-sunlet network models events such as hybridisation, so a good understanding of this case will be useful for models in molecular phylogenetics.
 
 A full identifiability result, generalising \cite[Theorem~2]{gross2021distinguishing}, remains open. For the DNA group-based models (JC, K2P, and K3P), one of the key results is that the variety corresponding to the 3-sunlet has smaller dimension than expected. This result can be exploited to give identifiability results on level-1 phylogenetic networks with four leaves (e.g. \cite[Corollary~4.8]{GL}), since for a fixed number of leaves a 3-cycle network will have a strictly lower dimension than a 4-cycle network. For general $G$ however, this is not the case, as shown in Table \ref{tab:def}, so an alternative approach will be necessary to show identifiability for general $G$.
 
 As the authors note in \cite{GL}, this dimension deficiency is in contrast to group-based mixture models, where the number of leaves determines the dimension. Here, we have shown that the dimension of a triangle-free level-1 phylogenetic network variety is fully determined by the number of leaves and the number of cycles (see Theorem \ref{thm:main}), and for large enough $G$ we expect this to be true for all level-1 phylogenetic networks.
\section{Acknowledgements}

Elizabeth Gross is supported by the National Science Foundation (NSF) under grant DMS-1945584. Samuel Martin is supported by the Biotechnology and Biological Sciences Research Council (BBSRC), part of UK Research and Innovation, through the Core Capability
Grant BB/CCG1720/1 at the Earlham Institute, and is  grateful for funding from EPSRC (grant number EP/W007134/1) and BBSRC (grant number BB/X005186/1).  
\bibliographystyle{alpha}
\bibliography{net}
 
\end{document}